\documentclass[11pt,reqno]{amsart}
\usepackage{amsmath,amsthm,amssymb,bm,bbm,dsfont,braket}
\usepackage[mathscr]{eucal}
\usepackage{amsaddr}
\usepackage{upgreek}
\usepackage[left=2cm,right=2cm,top=2cm,bottom=2cm]{geometry}
\usepackage{mathtools}\mathtoolsset{centercolon}
%\mathtoolsset{showonlyrefs}

%\newcommand{\ceil}[1]{\lceil {#1} \rceil}
%\usepackage{fullpage}

 % thin space, limits underneath in displays
%\DeclareMathOperator*{\argmin}{argmin} % no space, limits underneath in displays
%\DeclareMathOperator{\argmin}{arg\,min} % thin space, limits on side in displays
%\DeclareMathOperator{\argmin}{argmin} % no space, limits on side in displays
\usepackage{cite}

% References font: small

\usepackage[shortlabels]{enumitem}
\usepackage{soul}
\usepackage{setspace}
\pagestyle{plain}
\usepackage{graphicx}
\usepackage{verbatim}
\usepackage{float}
\usepackage{placeins}
\usepackage{array}
\usepackage{booktabs}
\usepackage{threeparttable}
\usepackage[update,prepend]{epstopdf}
\usepackage{multirow}
\usepackage{amsfonts,amssymb,dsfont,xfrac,comment}
\usepackage[abs]{overpic}

\usepackage[usenames,dvipsnames]{color}
\usepackage[hidelinks,linktocpage=true]{hyperref}
\hypersetup{
    linktoc=all,     %set to all if you want both sections and subsections linked
%	bookmarks=true,         % show bookmarks bar?
	unicode=false,          % non-Latin characters in Acrobat’s bookmarks
	pdftoolbar=true,        % show Acrobat’s toolbar?
	pdfmenubar=true,        % show Acrobat’s menu?
	pdffitwindow=false,     % window fit to page when opened
	pdfstartview={FitH},    % fits the width of the page to the window
	% pdftitle={My title},    % title
	% pdfauthor={Author},     % author
	% pdfsubject={Subject},   % subject of the document
	% pdfcreator={Creator},   % creator of the document
	% pdfproducer={Producer}, % producer of the document
	% pdfkeywords={keyword1} {key2} {key3}, % list of keywords
	pdfnewwindow=true,      % links in new PDF window
	colorlinks=true,        % false: boxed links; true: colored links
	linkcolor=BrickRed,          % color of internal links (change box color with linkbordercolor)
	citecolor=OliveGreen,  % color of links to bibliography
	filecolor=Magenta,      % color of file links
	urlcolor=BlueViolet,    % color of external links
%	linkcolor=Blue,         % color of internal links (change box color with linkbordercolor)
%	citecolor=ForestGreen,  % color of links to bibliography
%	filecolor=Magenta,      % color of file links
%	urlcolor=Brown,           % color of external links
}
\usepackage{doi}
\usepackage{url}
\usepackage{caption, subcaption}
\usepackage{enumitem}
\usepackage{shadethm}
\usepackage{tikz,pgf}

%% Table of Contents (with dots)
\setcounter{tocdepth}{3}
\makeatletter
\def\l@subsection{\@tocline{2}{0pt}{2.5pc}{5pc}{}}
%Make Chapter disapear in ToC
\renewcommand\tocchapter[3]{%
  \indentlabel{\@ifnotempty{#2}{\ignorespaces#2.\quad}}#3%
}
\newcommand\@dotsep{4.5}
\def\@tocline#1#2#3#4#5#6#7{\relax
  \ifnum #1>\c@tocdepth % then omit
  \else
    \par \addpenalty\@secpenalty\addvspace{#2}%
    \begingroup \hyphenpenalty\@M
    \@ifempty{#4}{%
      \@tempdima\csname r@tocindent\number#1\endcsname\relax
    }{%
      \@tempdima#4\relax
    }%
    \parindent\z@ \leftskip#3\relax \advance\leftskip\@tempdima\relax
    \rightskip\@pnumwidth plus1em \parfillskip-\@pnumwidth
    #5\leavevmode\hskip-\@tempdima{#6}\nobreak
    \leaders\hbox{$\m@th\mkern \@dotsep mu\hbox{.}\mkern \@dotsep mu$}\hfill
    \nobreak
    \hbox to\@pnumwidth{\@tocpagenum{#7}}\par
    \nobreak
    \endgroup
  \fi}
\makeatother
\AtBeginDocument{%
\makeatletter
\expandafter\renewcommand\csname r@tocindent0\endcsname{0pt}
\makeatother
}
\def\l@subsection{\@tocline{2}{0pt}{2.5pc}{5pc}{}}

%% Yellownote
% \usepackage[colorinlistoftodos,prependcaption]{todonotes}
% \newcommand\todoin[2][]{\todo[inline, caption={2do}, #1]{
% \begin{minipage}{\textwidth-4pt}#2\end{minipage}}}

% \ifnum\final=0
% \newcommand{\mynote}[2]{{\color{#1} \marginpar{\tiny #2}}}
% \newcommand{\mybignote}[2]{{\color{#1} $\langle \langle$ #2$\rangle \rangle$}}
% \newcommandx{\rednote}[2][1=]{\todo[linecolor=red,backgroundcolor=red!25,bordercolor=red,#1]{#2}}
% \newcommandx{\bluenote}[2][1=]{\todo[linecolor=blue,backgroundcolor=blue!25,bordercolor=blue,#1]{#2}}
% \newcommandx{\yellownote}[2][1=]{\todo[linecolor=yellow,backgroundcolor=yellow!25,bordercolor=yellow,#1]{#2}}
% \newcommandx{\greennote}[2][1=]{\todo[inline,linecolor=olive,backgroundcolor=green!25,bordercolor=olive,#1]{#2}}
% \newcommand{\rmark}[1]{{\color{red} #1}}
% \newcommand{\bmark}[1]{{\color{blue} #1}}
% \newcommand{\omark}[1]{{\color{orange} #1}}
% \newcommand{\gmark}[1]{{\color{green} #1}}
% \newcommand{\gray}[1]{{\color{gray} #1}}
% \else
% \newcommand{\mynote}[2]{}
% \newcommand{\mybignote}[2]{}
% \newcommand{\rednote}[2][1=]{}
% \newcommand{\bluenote}[2][1=]{}
% \newcommand{\greennote}[2][1=]{}
% \newcommand{\yellownote}[2][1=]{}
% \newcommand{\rmark}[1]{#1}
% \newcommand{\bmark}[1]{#1}
% \newcommand{\omark}[1]{#1}
% \newcommand{\gmark}[1]{#1}
% \fi

%% Replace \mathbb by \mathds except if we require the contrary
\makeatletter
\ifx\@NODS\undefined%

\let\mathbb=\mathds
\else%
\fi
\makeatother

%% *** BibLatex ***
%\usepackage[backend=bibtex,
%hyperref=true,
%url=true,
%isbn=true,
%backref=false,
%%style=custom-numeric-comp,
%%style=numeric-comp,
%style=ieee,
%%style=alphabetic,
%sorting=none,
%block=none]{biblatex}
%\bibliography{reference.bib}
%%\usepackage{biblatex-mr}

\DeclareMathOperator{\Tr}{Tr}

\newcommand{\si}{{\sigma}}

\newcommand{\be}{{\mathbf e}}

\newcommand{\al}{{\alpha}}

\newcommand{\ten}{\otimes}
\newcommand{\pl}{\hspace{.1cm}}

\newcommand{\norm}[2]{\left\lVert{{#1}}\right\lVert_{#2}}

\newcommand*{\cE}{\mathcal{E}}
\newcommand*{\cF}{\mathcal{F}}

\newcommand*{\cI}{\mathcal{I}}

\newcommand*{\id}{\mathrm{id}}

\newcommand*{\eps}{\varepsilon}

\newcommand{\lan}{\langle}
\newcommand{\ran}{\rangle}

\def\be{\begin{equation}}
\def\ee{\end{equation}}
\def\bea{\begin{eqnarray}}
\def\eea{\end{eqnarray}}

\def\eps{\varepsilon}

%%%%%%%%%%%%%%%%%%%%% Milan %%%%%%%%%%%%%

%\def\B{{\mathcal B}}
%\def\C{{\mathcal C}}
%\def\D{{\mathcal D}}
%\def\M{{\mathcal M}}
%\def\X{{\mathcal X}}

%\def\S{{\mathcal S}}

%\newcommand{\id}{\operatorname{id}}

\renewcommand{\o}[1]{\mathring{#1}}
\theoremstyle{plain}

\newshadetheorem{theorem}{Theorem}[section]%[chapter]
\newshadetheorem{shaded_theorem}{Theorem}[section]
\newshadetheorem{conjecture}{Conjecture} %[section]
\newshadetheorem{corollary}[shaded_theorem]{Corollary}
\newshadetheorem{proposition}[shaded_theorem]{Proposition}
\newshadetheorem{lemma}[shaded_theorem]{Lemma}
\newtheorem{fact}[shaded_theorem]{Fact}%[chapter]
\newshadetheorem{question}{Question}
\newtheorem{definition}[shaded_theorem]{Definition}
\newtheorem*{lemma_dimension_bound}{Lemma~\ref{lemma:dimension_bound}}
\newtheorem*{lemma_convexity}{Lemma~\ref{lemma:convexity}}
\newtheorem*{fact_exponent}{Proposition~\ref{fact:properties_exponent}}

\theoremstyle{remark}
\newtheorem{remark}[shaded_theorem]{Remark}%[section]

\numberwithin{equation}{section}

\makeatletter
\newcommand{\opnorm}{\@ifstar\@opnorms\@opnorm}
\newcommand{\@opnorms}[1]{%
	$\left|\mkern-1.5mu\left|\mkern-1.5mu\left|
	#1
	\right|\mkern-1.5mu\right|\mkern-1.5mu\right|$
}
\newcommand{\@opnorm}[2][]{%
	\mathopen{#1|\mkern-1.5mu#1|\mkern-1.5mu#1|}
	#2
	\mathclose{#1|\mkern-1.5mu#1|\mkern-1.5mu#1|}
}
\makeatother
\renewcommand{\bm}{{\bf m}}
\begin{document}

%%%%% How to disable amsart.cls to capitalize the article title?
\let\origmaketitle\maketitle
\def\maketitle{
	\begingroup
	\def\uppercasenonmath##1{} % this disables uppercasing title
	\let\MakeUppercase\relax % this disables uppercasing authors
	\origmaketitle
	\endgroup
}
%%%%%%%%%%%%%%%%%%%%%%%%%%%%%%%%

\title{\bfseries \Large{
Quantum Broadcast Channel Simulation \\via Multipartite Convex Splitting
}}

\author{ \normalsize {Hao-Chung Cheng$^{1\text{--}5}$, Li Gao$^6$, and Mario Berta$^7$}}
\address{\small  	
$^1$Department of Electrical Engineering and Graduate Institute of Communication Engineering,\\ National Taiwan University, Taipei 106, Taiwan (R.O.C.)\\
% $^1$Department of Electrical Engineering, Department of Mathematics, National Taiwan University\\
$^2$Department of Mathematics, National Taiwan University\\
$^3$Center for Quantum Science and Engineering,  National Taiwan University\\
$^4$Physics Division, National Center for Theoretical Sciences, Taipei 10617, Taiwan (R.O.C.)\\
$^5$Hon Hai (Foxconn) Quantum Computing Center, New Taipei City 236, Taiwan (R.O.C.)\\
$^6$Department of Mathematics, University of Houston, Houston, TX 77204, USA\\
$^7$Institute for Quantum Information, RWTH Aachen University, Aachen, Germany
}

\email{\href{mailto:haochung.ch@gmail.com}{haochung.ch@gmail.com}}
 \email{\href{mailto:gaolimath@gmail.com}{lgao12@uh.edu}}
  \email{\href{mailto:berta@physik.rwth-aachen.de}{berta@physik.rwth-aachen.de}}

\date{\today}
\setlength\parindent{+4ex}

\begin{abstract}
We show that the communication cost of quantum broadcast channel simulation under free entanglement assistance between the sender and the receivers is asymptotically characterized by an efficiently computable single-letter formula in terms of the channel's multipartite mutual information. Our core contribution is a new one-shot achievability result for multipartite quantum state splitting via multipartite convex splitting. As part of this, we face a general instance of the quantum joint typicality problem with arbitrarily overlapping marginals. The crucial technical ingredient to sidestep this difficulty is a conceptually novel multipartite mean-zero decomposition lemma, together with employing recently introduced complex interpolation techniques for sandwiched R\'enyi divergences.

\textbf{}Moreover, we establish an exponential convergence of the simulation error when the communication costs are within the interior of the capacity region.
As the costs approach the boundary of the capacity region moderately quickly, we show that the error still vanishes asymptotically.

%to overcome the difficulty
%In order to lift the corresponding classical results [Cao {\it et al.}, arXiv:2212.11666] to the fully quantum setting
%We prove the single-letter characterization of the sought-after capacity region for simulating general quantum broadcast channels. The simulation cost of quantum broadcast channels is expressed in terms of the multipartite quantum mutual information of channel, which generalizes the well-known quantum reverse Shannon theorem for point-to-point quantum channels.
\end{abstract}

\maketitle

{
  \hypersetup{linkcolor=black}
  \tableofcontents
}

% \tableofcontents

%%%%%%%%%%%%%%%%%%%%%%%%%%%%%%%%

\newpage
\section{Introduction} \label{sec:introduction}

%% Channel simulation
\emph{Quantum channel simulation} endeavors to simulate a noisy quantum channel via a limited resource of noiseless channels and potentially other assistance such as, e.g., free entanglement. For simulating a point-to-point channel $\mathscr{N}_{A\to B}$ in the independent and identically distributed (i.i.d.) setting, Bennett \textit{et al.}~showed that under free entanglement-assistance the amount of asymptotic classical communication rate needed is equal to the quantum mutual information of channel \cite{BSS+02,BDH+14,BCR11}
\begin{align}
    I(\mathscr{N}_{A\to B}) := \sup_{\rho} I(B:R)_{(\mathscr{N}\otimes \id)\left(\psi^\rho\right)},
\end{align}
where the maximum ranges over all input states $\rho_{A}$, $\psi_{AR}^\rho$ is a purification of $\rho_A$ by a reference system $R$, and $I$ denotes the quantum mutual information. Notably, the aforementioned classical communication cost coincides with the entanglement-assisted classical channel capacity \cite{BSS+99, BSS+02, Hol02}. Hence, from both operational and information-theoretic perspectives, point-to-point quantum channel simulation may be viewed as a \emph{reverse} task of the quantum version of Shannon's celebrated noisy coding theorem for simulating a noiseless channel via a noisy channel \cite{Sha48}. Accordingly, such a result is also termed the \emph{Quantum Reverse Shannon Theorem}.

At first glance, one may wonder why to lavish noiseless channels on simulating the noisy one. To answer this question, firstly, the noisy coding theorem in conjunction with the Quantum Reverse Shannon Theorem lead us to characterizing \emph{resource inter-conversion} \cite{haddadpour2016simulation, Sudan20}, e.g.~finding the capacity of simulating a channel $\mathscr{N}$ from an arbitrary channel $\mathscr{M}$ in the presence of pre-shared entanglement in terms of a single quantity. Secondly, for the classical and classical-quantum special case, the channel simulation task may date back to the earlier studies on correlation generations \cite{Wyn73, Wyn75, Wyner75, GK73, Cuf08, Sudan20, Yu22}, and channel resolvability \cite{HV93, HV93b, SV94, Hay06_resolvability, Cuf08, Cuf13, cuff2010coordination, Yassaee15, YC19, CG22}, which deliver applications to wiretap channel coding \cite{Wyn75c, BL13, PTM17, Hay132, Hay15, Hay17}, communication complexity of correlation \cite{harsha2010communication}, and measurement compression \cite{Win02, Win04, WHB+12,Berta14}.
Further information-theoretic applications include strong converse theorems \cite{Win02, BSS+02, BDH+14}, lossy data compression \cite{Win02, SV96, LD09, DHW13}, local purity distillation \cite{HHH+03, HHH+05, Dev05, KD07}, and entanglement distillation \cite{DHW08}. We refer the readers to \cite{BSS+02,BDH+14,WHB+12,Yu22,CRB+22} for more comprehensive reviews on this topic. It is worth mentioning that many of the above-mentioned works presume the input source to the channel to be fixed. Nonetheless, we will consider channel simulation under arbitrary sources; that is, we aim for a channel simulation result that works for the worst-case input scenario.

%% Broadcast channels
Notwithstanding the recent advances on point-to-point quantum channel simulation \cite{BCR11, fang2019quantum, RTB23, LY21b}, the asymptotic capacity region of general \emph{quantum broadcast channel} simulation was left unknown prior to the present work. Broadcast channels \cite{Cov72, el2011network, YHD11, DHL10, SW15, Cheng2021b} are arguably one of the most simplest and practical network models in multi-user coding settings. This setting comprises a base station that wants to transmit down-link information to numerous and separate user equipments that only have their own local decodability.\footnote{Of course the base station can send information to each user equipment individually. This amounts to a point-to-point communication with the \emph{Time-Division Multiple Access} scheme, which is a special case of the \emph{Time Sharing} \cite[Proposition 4.1]{el2011network}. However, when there are enormous users equipments involved in the network setting, such a method is considerably time inefficient. Hence, as referred to a network setting in this paper, we demand the sender and receivers to adopt  simultaneous encoding and decoding strategies}
However, despite considerable efforts, the capacity region for a two-receiver broadcast channel coding is still unknown.\footnote{Even though the term ``single-letter characterization'' for a networking information-theoretic task is debatable \cite{Li22, Kor87}, whether Marton's inner bound is optimal is\,---\,to our best knowledge\,---\,still open \cite{Mar79, Cov98, GvdM81, GP80, LK07, LKP08, CK11, el2011network}.
} On the reverse side of channel simulation, somewhat surprisingly, the capacity region for \emph{classical} broadcast channel simulation under common randomness-assistance has recently been characterized in \cite{CRB+22}. This naturally leads to a question:\\

\noindent ``Can one obtain the capacity region for \emph{quantum} broadcast channel simulation under free entanglement-assistance?''\\

\noindent Unfortunately, multi-user tasks are challenging in the fully quantum setting, since they are closely related to a serious technical barrier, the so-called \emph{quantum joint typicality} \cite{Dut11} or \emph{simultaneous smoothing conjecture} \cite{DF13}, and more generally the \emph{quantum marginal problem} \cite{Kly06}. Consequently, previously only specialized scenarios of broadcast channel simulation could be handled. This includes the fixed i.i.d.~input case via time sharing methods \cite{HD07,Ramakrishnan23}, isometric broadcast channels \cite{AJW18} (see also\cite{Ramakrishnan23}), and the recent result \cite{CRB+22} on classical broadcast channels.

%To our best knowledge, such problems are still largely open. 
%(and some tight one-shot characterizations)

In this paper, we establish the capacity region of general quantum broadcast channel simulation under free entanglement-assistance by circumventing the aforementioned obstacles around the quantum joint typicality conjecture. Taking the two-receiver broadcast channel $\mathscr{N}_{A\to B_1 B_2}$ as an example, we show that the channel simulation is asymptotically achievable if and only if the classical communication costs $r_1$ and $r_2$ from the sender to each of two receivers (see Figure~\ref{fig:broadcast_channels}) satisfy
\begin{align}
\left\{ (r_1, r_2) \in \mathds{R}^2:
	r_1 + r_2 \geq I(\mathscr{N}_{A\to B_1 B_2} ), \,
	r_1 \geq I( \mathscr{N}_{A\to B_1}), \,
	r_2 \geq I( \mathscr{N}_{A\to B_2} )
\right\},
\end{align}
where the sum-rate is constrained by the \emph{bipartite mutual information} of channel $\mathscr{N}_{A\to B_1 B_2}$ \cite{Gill54, Watanabe60} (see the precise definition in Eq.~\eqref{eq:multipartite_mutual_information_channel}). Notably, we do not rely on the time-sharing technique and the capacity region for arbitrary $L$ receivers can be obtained as well (Theorem~\ref{theorem:multipartite_simulation}).\footnote{Our approach actually holds for a more stronger notion of \emph{coherent feedback simulation}; see Remark~\ref{remark:coherent_feedback_simulation}.} Our proof techniques build on a conceptually new version of the \emph{multipartite convex-split lemma}, a corresponding \emph{multipartite Quantum State Splitting} protocol, and the (by now standard) \emph{Post-Selection Technique} \cite{ChristKoenRennerPostSelect}. As our approach gets around a generic instance of the quantum joint typicality conjecture with arbitrarily overlapping marginals, we believe that it may serve as a general recipe for fully quantum network information-theoretic tasks.

\emph{Convex splitting} was introduced by Anshu \textit{et al.} in \cite{anshu2017quantum} (see also \cite{anshu2017unified,CG23}), which originates from the idea of \emph{rejection sampling} in statistics \cite{vonNeumann1951Various,robert1999monte,jain2003direct}, and it has an ample of applications in quantum information theory. In this paper, we generalize it to a conceptually new multipartite version and establish an one-shot error exponent bound.\footnote{A specialized version of the bipartite convex splitting with quantum relative entropy as an error criterion was introduced in \cite[Lemma 2]{AJW18}. We refer the readers to Section~\ref{sec:related_work} for discussions and comparisons.} Taking the bipartite version as an illustration, assume that there are $M$ independent and identical copies of registers $A$'s, $K$ copies of registers $B$'s, and a single register $E$. Now the overall system is prepared in a way that register $E$ is correlated to the $m$-th register $A$ and the $k$-th register $K$ uniformly at random (see Figure~\ref{fig:convex_splitting}). We then prove a tight one-shot bound on the trace distance between such a joint state and the all-tensor-product state (i.e.~all of the systems $A$'s, $B$'s, and $E$ are decoupled) in terms of a generalized quantum R\'enyi information \cite{MDS+13, WWY14, HT14} (Theorem~\ref{theorem:bipartite_convex_splitting}). The additivity of the R\'enyi information then immediately gives us a rate region for $M$ and $K$, for which the trace distance decreases exponentially in the asymptotic limit. This thus can be viewed as a bipartite generalization of the unipartite convex splitting by part of the authors \cite{CG23}. To establish the result and avoid the need of the simultaneous smoothing, we introduce a key ingredient of a decomposition map, the \emph{multipartite mean-zero decomposition lemma} (Lemma~\ref{lemma:tele} \& Lemma \ref{lemma:tele2}).\footnote{Technically speaking, we do not employ the \emph{smooth entropy framework}; instead, we use the interpolation technique as in the unipartite setting \cite{Dup21, CG23}. Hence, what we avoid should be termed as \emph{simultaneous interpolation}.}
We remark that a similar idea is independently proposed by Colomer Saus and Winter for deriving multipartite quantum decoupling theorems, termed the \emph{telescoping trick} in their work \cite{SW23}.

An immediate application of the multipartite convex-split lemma is the multipartite version of \emph{Quantum State Splitting} \cite{HD07,Ramakrishnan23} also termed \emph{mother protocol} in its original (non-multipartite) form \cite{Proc465}. The goal is to transfer the systems $A_1'$, $A_2'$, $\ldots$, $A_L'$ initially with Alice, to Bob, while the entanglement between all of Alice's original systems $A$ and an inaccessible reference system $R$ is preserved. Given any classical communication cost consumed in the protocol, we obtain an one-shot error exponent bound on how well the protocol is performed.

Armed with multipartite Quantum State Splitting, we demonstrate how to combine the Post-Selection Technique \cite{ChristKoenRennerPostSelect} (as used in point-to-point quantum channel simulation \cite{BCR11}) and the quantum sandwiched information (Lemmas~\ref{lemma:dimension_bound} and \ref{lemma:convexity}), to establish a one-shot error exponent bound for multipartite quantum broadcast channel simulation (Theorem~\ref{theorem:multipartite_simulation}) with diamond norm \cite{Kit97,Pau03} as an error criterion. 
We note that such a one-shot bound not only leads to the optimal achievable rate region in the i.i.d.~setting; it also guarantees that the error of channel simulation decreases exponentially fast in the number of blocklength $n$ whenever the rate vector of communication costs is in the interior of the capacity region.
Furthermore, we show the achievability part of a \emph{moderate deviation} result \cite{CTT2017, CH17}. Namely, the error of simulation will vanish asymptotically, even though the rate vector converges to the boundary of the capacity region (Proposition~\ref{proposition:moderate_simulation}).

Last but not least, let us point out some distinctive features of our results. The established multipartite convex splitting (Theorem~\ref{theorem:convex_splitting_multipartite}) and multipartite Quantum State Splitting (Theorem~\ref{theorem:multipartite_QSS}) are \emph{one shot}, in the sense that no mathematical constraint such as the i.i.d.~assumption is needed. As for quantum broadcast channel simulation, we require the i.i.d.~structure for the underlying channel to simulate. Our result (Theorem~\ref{theorem:multipartite_simulation}) is therefore \emph{non-asymptotic} and for \emph{any} {finite blocklength}, i.e.~the assumption of infinitely large blocklength is not required. The reason behind our results is that we do not employ \emph{time-sharing techniques} \cite[Proposition 4.1]{el2011network}, as this is not possible for the one-shot or finite blocklength setting (as pointed out even in classical network information theory \cite[Remark 3]{YAG13}; see also \cite{LCV15}). By its nature, we believe that the proposed one-shot analysis, and in particular the multipartite \emph{mean-zero decomposition lemma} (Lemma~\ref{lemma:tele2}), could be a generic solution to problems in quantum network information theory.

\medskip
The paper is structured in the following. We present an overview of our technical results in Section~\ref{sec:overview}, and Section~\ref{sec:related_work} contains discussions of related work. Notation and definitions for information measures are introduced in Section~\ref{sec:notation}. Section~\ref{sec:convex_splitting} is devoted to establishing the multipartite convex splitting. The multipartite Quantum State Splitting achievability result is derived in Section~\ref{sec:QSS}. We give the proof of quantum broadcast channel simulation in Section~\ref{sec:simulation}. Some technical lemmas are left to the Appendices \ref{sec:lemmas} and \ref{sec:Post-Selection}.

\begin{figure}[h!]
	\centering
	\resizebox{0.9\columnwidth}{!}{ 
		\includegraphics{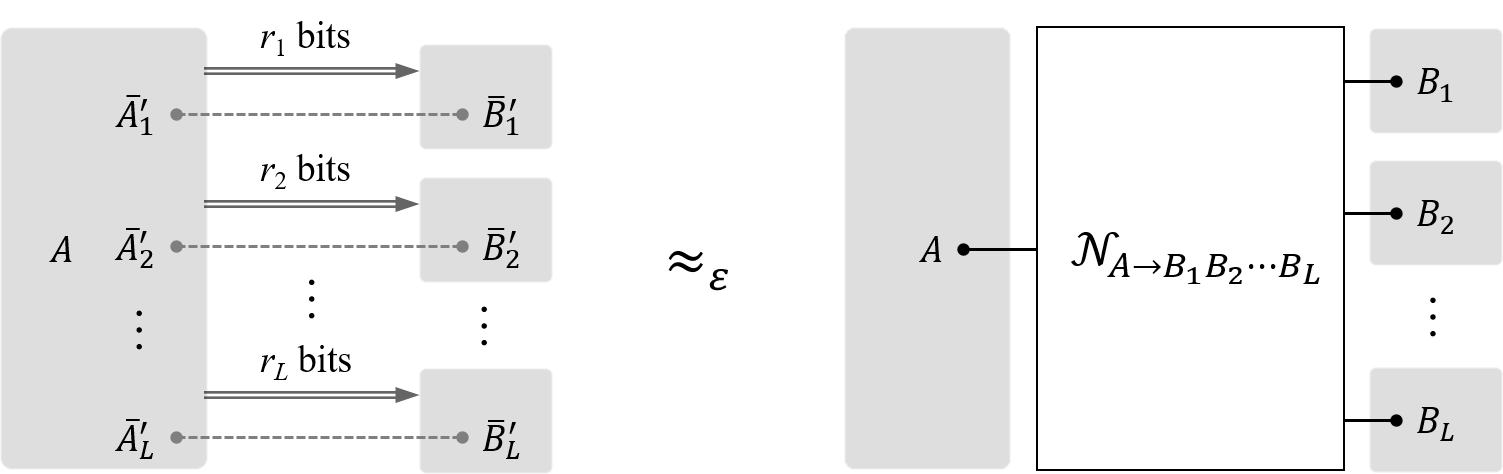}     
	}
	\caption{
    \small
    Depiction of an $L$-receiver quantum broadcast channel simulation.
    The Sender holds system $A$ and share free entanglement with $L$ receivers. By sending $r_1$, $r_2$, $\ldots$, $r_L$ bits of classical information to each receiver and local quantum operation at each receiver, the resulting transformation is close to an $L$-receiver quantum broadcast channel $\mathscr{N}_{A\to B_1B_2\ldots B_L}$ (in diamond norm).
    Note that each gray-shaded region is only allowed to perform local quantum operation.
	}
	\label{fig:broadcast_channels}
\end{figure}

\begin{figure}[h!]
	\centering
	\resizebox{0.9\columnwidth}{!}{ 
		\includegraphics{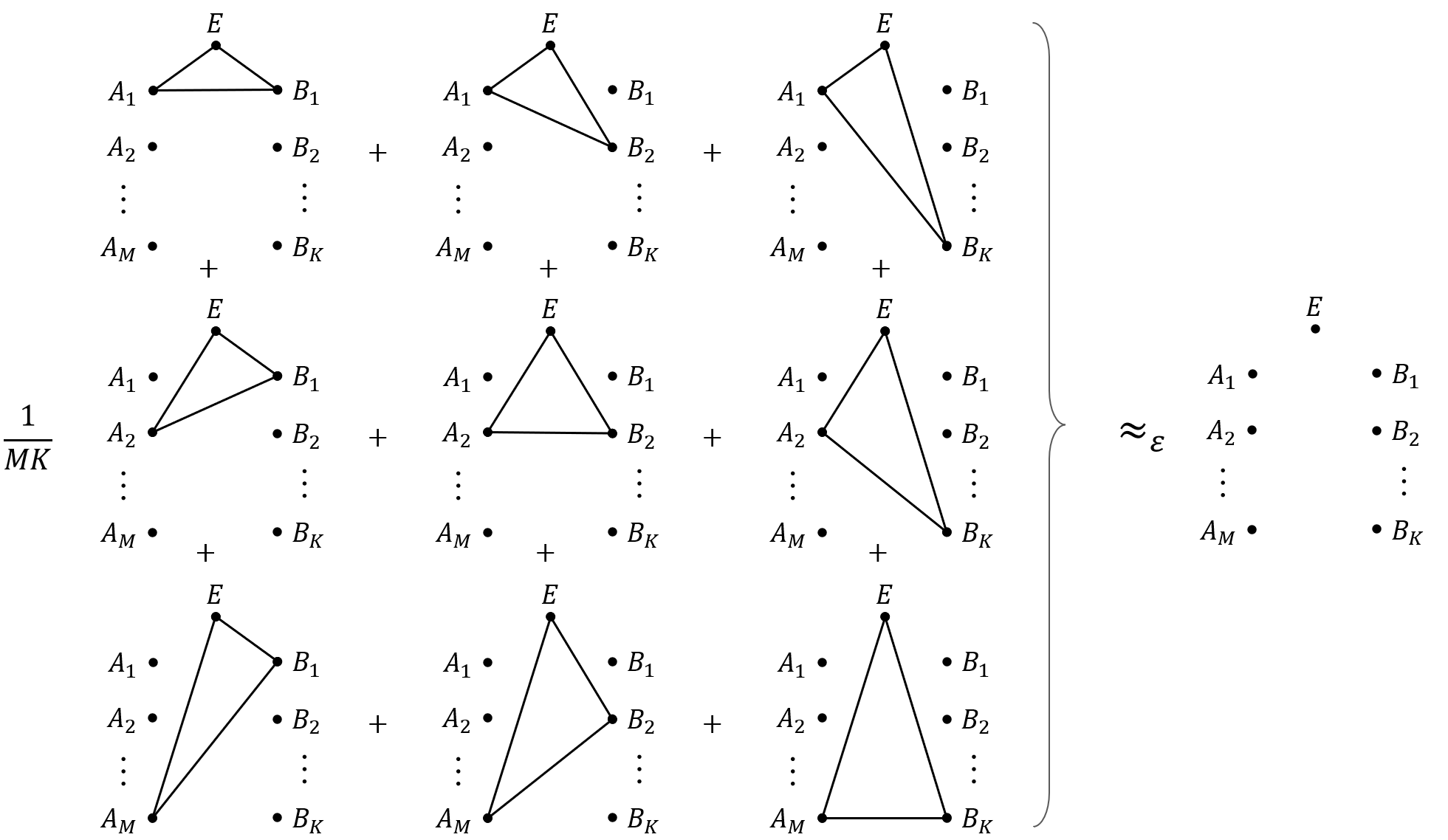}     		   		
	}
	\caption{
    \small
    Depiction of a bipartite convex splitting.
    On the right part, each isolated dot represents an independent quantum system, and hence the overall state is product.
    On the left part, solid lines connected each system is a tripartite (correlated) state, while the other systems are left isolated. 
    When $M$ and $K$ are sufficiently large (see Theorem~\ref{theorem:bipartite_convex_splitting}), then the statistical mixture of the left part is close to the right part of product state (in trace norm).
	}
	\label{fig:convex_splitting}
\end{figure}

%%%%%%%%%%%%%%%%%%%%%%%%%%%%%%%%

\newpage

\subsection{Overview of Results} \label{sec:overview}

Our results are summarized below.
\begin{enumerate}[1)]
    \item \textit{Multipartite convex splitting}:
    For readability, we first present the result of the bipartite convex splitting.
    Let $\rho_{ABE}$, $\tau_A$, $\tau_B$ be states and let $M$ and $K$ be integers.
    The trace distance (i.e.~trace norm divided by two) between the random mixture
    \begin{align}
    	\frac{1}{MK}\sum_{m,k}\rho_{A_{m} B_{k}E}  \bigotimes_{ \bar{m} \neq m, \, \bar{k} \neq k } \tau_{A_{\bar{m}}} \otimes \tau_{B_{\bar{k}}}
    \end{align}
    and the product state $\tau_{A}^{\otimes M} \otimes \tau_{B}^{\otimes K} \otimes \rho_E$ is upper bounded by (Theorem~\ref{theorem:bipartite_convex_splitting})
    \begin{align} \label{eq:error_bipartite_convex_splitting0}
        2\cdot \,2^{ -  E_{\log MK}(\rho_{ABE} \,\Vert\, \tau_A \otimes \tau_B ) }
        +  2^{- E_{\log M}(\rho_{AE} \,\Vert\, \tau_A )}
        +  2^{- E_{\log K}(\rho_{BE} \,\Vert\, \tau_B )},
    \end{align}
    where the error-exponent function
    \begin{align}
    	E_r(\rho_{AE}\,\Vert\, \tau_A) := \sup_{\alpha\in[1,2]} \frac{\alpha-1}{\alpha}\left(r - I_\alpha(\rho_{AE}\,\Vert\, \tau_A)\right)
    \end{align}
		is the Fenchel--Legendre transform of the R\'enyi information \cite{HT14}
		\begin{align}
			I_\alpha(\rho_{AE}\,\Vert\, \tau_A):= \inf_{\sigma_E} D_\alpha(\rho_{AE}\,\Vert\, \rho_A\otimes \sigma_E)
		\end{align}
		and $D_\alpha(\rho\,\Vert\,\sigma) := \frac{1}{\alpha-1} \log \Tr\left[ (\sigma^{\frac{1-\alpha}{2\alpha}} \rho \sigma^{\frac{1-\alpha}{2\alpha}} )^\alpha\right]$ is the quantum sandwiched R\'enyi divergence \cite{MDS+13, WWY14}.
    Moreover, the three error exponents in Eq.~\eqref{eq:error_bipartite_convex_splitting0} are all positive if and only if 
    \begin{align} \label{eq:region_bipartite_convex_splitting0}
		\begin{cases}
		&\log MK > I_1(\rho_{ABE} \,\Vert\, \tau_A \otimes \tau_B ) \\
	&\log M > I_1(\rho_{AE} \,\Vert\, \tau_A )\\
	&\log K > I_1(\rho_{BE} \,\Vert\, \tau_B )\\
	\end{cases}.
	\end{align}
    This then gives us an achievable rate region for a bipartite convex splitting.
    
    The result generalizes to arbitrary $L$-party case. Let $\rho_{A_1 A_2\ldots A_L E}$ and $\tau_{A_{\ell}}$ be states and $M_\ell$ be an integer for each $\ell \in \{1,2,\ldots, L\} =: [L]$.
    The trace distance between the random mixture
    \begin{align}
    	\frac{1}{M_1 \cdots M_L}\sum_{m_\ell \in [M_\ell], \, \ell\in[L]}\rho_{A_{ 1, m_1} \cdots A_{ L, m_L} E}\bigotimes_{  \bar{m}_{\ell} \in [M_{\ell}]/\{m_\ell\}, \, \ell \in [L] } \tau_{ A_{\ell, \bar{m}_{\ell} } }
    \end{align}
    and the product state $\bigotimes_{\ell\in[L]} \tau_{A}^{\otimes M_\ell} \otimes \rho_E$ is upper bounded by (Theorem~\ref{theorem:convex_splitting_multipartite})
    \begin{align} 
   \sum_{\varnothing\neq S\subseteq[L] } 2^{|S|} \cdot 2^{ - 
    E_{\log M_S}\left( \rho_{A_S E} \,\Vert\, \tau_{A_S}\right)},
    \end{align}
    where $A_S$ denotes systems $A_\ell$ for all $\ell \in S$, and $M_S := \prod_{\ell\in S} M_\ell$.
    Again, the achievable rate region is given by 
    \begin{align}
    \left\{  (M_1,M_2, \ldots, M_L) :  \log M_S > I_1(\rho_{A_S} \,\Vert\, \tau_{A_S} ), \, \forall S\subseteq[L] \right\}.
    \end{align}
    
    \item \textit{Multipartite Quantum State Splitting}:
    Let $\rho_{AA_1'A_2'\ldots A_L' R}$ be a pure state holding by Alice and an inaccessible reference system $R$. Suppose that Alice and the $\ell$-th Receiver share many-copies of entangled state $|\tau\rangle_{\bar{A}_\ell' \bar{B}_\ell}$, where $\bar{A}_\ell' \cong \bar{B}_\ell \cong A_\ell'$.
    The goal of an $L$-party Quantum State Splitting is to transfer each system $A_\ell'$ to the $\ell$-th Receiver via $r_\ell$ bits of classical communication. Then, the error in terms of trace distance is upper bounded by (Theorem~\ref{theorem:multipartite_QSS})
    \begin{align}
			\sqrt{ \sum\nolimits_{\varnothing\neq S\subseteq[L] } 2^{|S|} \cdot 2^{ - 
    	E_{r_S}\left( \rho_{A_S' R} \,\Vert\, \tau_{A_S'}\right)}},
    \end{align}
    where $r_S:= \sum_{ \ell \in S } r_\ell$. Moreover, the error exponents are all positive if and only if for all $S \subseteq [L]$ ,
		\begin{align}
    	r_S > I_1\left( \rho_{A_S' R} \,\Vert\, \tau_{A_S'}\right).
		\end{align}

    \item \textit{Quantum Broadcast Channel Simulation}:
    Consider a general $L$-receiver quantum broadcast channel $\mathscr{N}_{A \to B_1B_2\ldots B_L}$, and free entanglement is present between Sender and $L$ Receivers. By sending $n r_\ell$ bits of classical information to the $\ell$-th Receiver, respectively, the simulated channel is close to $\mathscr{N}_{A \to B_1B_2\ldots B_L}^{\otimes n}$ in diamond norm \cite{Kit97,Pau03} with error at most (Theorem~\ref{theorem:multipartite_simulation})
    \begin{align} \label{eq:error_broadcast0}
    	\eps_n \leq k_n \cdot \sqrt{ \sum\nolimits_{\varnothing \neq S \subseteq [L]} 2^{|S|} \cdot 2^{- n E_{r_S}(\mathscr{N}_{A\to  B_S}) }},
		\end{align}
	where $B_S$ denotes the systems $B_\ell$ for all $\ell \in S$, and $k_n$ is a polynomial pre-factor depending on $L$ and the dimension of the input space. The function $E_{r_S}(\mathscr{N}_{A\to  B_S})$ is the Fenchel--Legendre transform of $\sup_{\psi_{AR}} I_\alpha(\mathscr{N}_{A\to B_S}(\psi_{AR})\,\Vert\, \mathscr{N}_{A\to B_S}(\psi_A))$. 
 Via a minimax identity (Proposition~\ref{fact:properties_exponent}), it is equal to the error exponent corresponding to the worst-case purification $\psi_{AR}$. 
 We remark that this result holds for \emph{any} finite-blocklength $n$. The achievability result given in Eq.~\eqref{eq:error_broadcast0} leads us a lower bound on the overall error exponent (Proposition~\ref{proposition:expoent_simulation})
    \begin{align}
      \liminf_{n\to \infty} -\frac1n \log \eps_n \geq \frac12\min_{\varnothing \neq S\subseteq [L]}  E_{r_S}\left(\mathscr{N}_{A\to B_S}\right).
    \end{align}
  Then, we show that the capacity region of the broadcast channel simulation is given by (Theorem~\ref{theorem:QRST_multipartite})
		\begin{align}
			\left\{ (r_1 , r_2, \ldots, r_L):\sum\nolimits_{ \ell \in S } r_\ell \geq I(\mathscr{N}_{A\to B_S}), \forall S \subseteq [L]\right\}.
		\end{align}

	The achievability in Eq.~\eqref{eq:error_broadcast0} also implies an achievability of the \emph{moderate deviation} result \cite{CTT2017, CH17} as follows. For any subsets $\varnothing \neq S \subseteq [L]$, assume that the rates satisfy $\sum_{\ell\in S} r_{\ell,n} - I(\mathscr{N}_{A\to B_S}) = \Theta(n^{-t})$ for some $t\in (0,\sfrac12)$. That is, the $L$-tupe rate asymptotically converges to the boundary of the capacity region at certain speed. Then, we show that the error $\eps_n$ of simulation still vanishes asymptotically (Proposition~\ref{proposition:moderate_simulation})
  \begin{align}
  	\eps_n \leq O\left(2^{- n^{1-2t} } \right) \to 0.
  \end{align}
\end{enumerate}

%%%%%%%%%%%%%%%%%%%%%%%%%%%%%%%%

\subsection{Related Work} \label{sec:related_work}

The time reversal of Quantum State Splitting corresponds to coherent \emph{Quantum State Merging} 
\cite{Proc465}. Further, a variant thereof, termed \emph{non-coherent} Quantum State Merging was proposed by Horodecki \emph{et al.} in \cite{Horodecki2005, horodecki2007quantum}. For the latter, one gives free LOCC and quantifies the entanglement needed to achieve the task of Quantum State Merging, whereas for the former one gives free entanglement and quantifies the communication requirements. The multipartite version of non-coherent Quantum State Merging was already studied in \cite{horodecki2007quantum} as well, which then corresponds to the task of fully quantum \emph{distributed compression}. We note that this multipartite Quantum State Merging result relies on time-sharing techniques \cite[Proposition 4.1]{el2011network}, meaning that once the the (simple) corner points of the capacity region are achieved, time sharing leads to the convex hull of them.\footnote{We remark that the time-sharing technique requires \emph{synchronization} between the sender and receivers \cite[Remark 4.3]{el2011network}. Such a requirement can be practically demanding when the there is a large number of users in the network.} Subsequently, the need of time sharing was removed with the use of chained typicality projectors \cite{DH10, Dut11}, that work for this special task because of the non-overlapping structure of the relevant quantum marginals. Moreover, we believe that these results could be lifted to the one-shot setting by means of the techniques from \cite{DF13} and using the minimax smoothing from \cite{anshu2019minimax} even to tight cost functions in terms of smooth conditional min-entropy. However, working instead with the coherent Quantum State Merging task, it is unclear to us how to transform these entanglement cost functions to communication cost functions, in a way that retains joint smoothing and allows to later apply the Post-Selection Technique for quantum broadcast channel simulation. In short, for multipartite coherent \emph{Quantum State Merging} one needs to solve an instance of the joint smoothing problem with overlapping marginals, whereas for multipartite non-coherent \emph{Quantum State Merging} one gets away with non-overlapping marginals (becaus of the different structure of the cost function).

The technique of (unipartite) \emph{convex splitting} was introduced by Anshu \textit{et al.} in \cite{anshu2017quantum, anshu2017unified}. The idea originated from \emph{rejection sampling} in statistics \cite{vonNeumann1951Various},~\cite[Chapter 2.3]{robert1999monte}, \cite{jain2003direct}, and one of its specialized cases dates back to the classical \emph{soft covering} by Wyner \textit{et al.} \cite{Wyner75, HV93, HV93b, AD89, AW02, Hay06, Cuf13, Hay15, HM16, PTM17, YT19, CG22, SGC22b}. It was recently generalized to an one-shot error-exponent result by parts of the authors \cite{CG23}. The {bipartite} classical convex-split lemma\,---\,for which all the density operators share the same eigenbasis\,---\,was shown in \cite[Fact 7]{anshu2017unified}. Its straightforward generalization to the multipartite version was shown in \cite[Lemma 39]{CRB+22}, which is the key lemma for showing classical broadcast channel simulation.

A bipartite \emph{quantum} convex-split lemma was shown in \cite[Lemma 2]{AJW18}.\footnote{The authors termed \cite[Lemma 2]{AJW18} as the ``tripartite'' convex-split lemma. However, we intend to call it a ``bipartite'' convex splitting for the following reason. The unipartite convex splitting \cite{anshu2017quantum, CG23} aims to decouple systems $A$ and $E$, while the bipartite convex splitting aims to decouple systems $A$ into two parts, i.e.~systems $A$ and $B$ in Eq.~\eqref{eq:error_bipartite_convex_splitting0}. Our terminology is also consistent with the classical convex-split lemma used in \cite[Fact 7]{anshu2017unified}.} However, it is not clear to us whether this would lead to the optimal achievable rate region in the i.i.d.~setting. In fact, \cite[Theorem 4]{AJW18} provides an asymptotic i.i.d.~analysis with an assumption that the whole joint state, i.e.~$\rho_{ABE}$ in Eq.~\eqref{eq:error_bipartite_convex_splitting0}, is pure. To the best of our knowledge, this will only give an \emph{isometric broadcast channel simulation} (see also\cite{Ramakrishnan23}), rendering the simulation of \emph{general} quantum broadcast channels previously open. On the other hand, a ``bipartite'' convex-split lemma was also shown in \cite{AJW19a}, wherein the system $E$ is absent. We remark that a multipartite convex-split lemma with the presence of system $E$ is crucial for the multipartite Quantum State Splitting and quantum broadcast channel simulation since the system $E$ plays the role of the reference system $R$ with which we want to protect the entanglement.

Indeed, establishing a one-shot achievability lemma for bipartite or multipartite settings that will yield the right achievable rate region is the central problem in quantum network information theory \cite{Kly06,DF13, horodecki2007quantum, DH10, Dut11, DGH+20, Sen21}. Our breakthrough here is to introduce a mean-zero decomposition lemma (Lemma~\ref{lemma:tele}) such that we can generally bypass the \emph{simultaneous smoothing/interpolation} obstacles in the fully quantum setting. Taking the bipartite convex splitting as an example. The established one-shot bound in Eq.~\eqref{eq:error_bipartite_convex_splitting0} immediately implies the achievable region given in Eq.~\eqref{eq:region_bipartite_convex_splitting0}. Hence, no sophisticated tools in asymptotic equipartition property such as typical projection onto subsystems, gentle measurement lemma \cite{Win99}, and the second-order asymptotics are needed. On the contrary, only the additivity of R\'enyi information is needed.

%Hence, the analysis for finding the rate region of quantum network information-theoretic tasks provided in this work is arguably much elegant.

%%%%%%%%%%%%%%%%%%%%%%%%%%%%%%%%

\section{Notation and Information Quantities} \label{sec:notation}

Throughout this paper, the underlying Hilbert spaces associated to quantum registers/systems $A$, $B$, $C$, $\ldots$ are denoted by sans-serif fonts $\mathsf{A}$, $\mathsf{B}$, $\mathsf{C}$, $\ldots$, et cetera.
The set of density operators (i.e.~positive semi-definite operators with unit trace) and bounded operators on $\mathsf{A}$ are denoted by $\mathcal{S}(\mathsf{A})$ and $\mathcal{B}(\mathsf{A})$, respectively.
The notation $|\mathsf{A}|$ stands for the dimension of Hilbert space $\mathsf{A}$.
For a density operator $\rho_A$ or a bounded operator $X_A$ with subscript $A$, we mean the operator is on the Hilbert space $\mathsf{A}$ and often as the corresponding marginal of a multippartite operator $X$.
We will use the term density operator and quantum state interchangeably in this paper.

For any $X \in \mathcal{B}(\mathsf{A})$ and , we define $\| X \|_p := \left( \Tr[|X|^p] \right)^{\sfrac{1}{p}} $ for $p>0$.
We use $\mathds{N}$ and $\mathds{R}$ to denote positive integers and real numbers.
For any positive integer $M\in\mathds{N}$, we shorthand the set $[M]:= \{1,2,\ldots, M\}$.
We denote $\id_R: \mathcal{B}(\mathsf{R}) \to \mathcal{B}(\mathsf{R})$ as the canonical identity map on $\mathsf{R}$, and denote $\id_R$ as the identity operator on $\mathsf{R}$.
For any linear map $\mathscr{N}_{A\to B}: \mathcal{B}(\mathsf{A}) \to \mathcal{B}(\mathsf{B})$, we define the diamond norm \cite{Kit97, Pau03} as
\begin{align}
    \left\|\mathscr{N}\right\|_{\diamond} := \sup_{\norm{\rho_{AR}}{1} =1} \left\| \mathscr{N}_{A\to B}\otimes \id_R(\rho_{AR}) \right\|_1,
\end{align}
where $\mathsf{R} \cong \mathsf{A}$.

For any $\alpha\in(0,1)\cup(1,\infty)$, we define the order-$\alpha$ sandwiched quantum R\'enyi divergence $D_\alpha$ \cite{MDS+13,WWY14} for density operator $\rho \in \mathcal{S}(\mathsf{A})$ and positive semi-definite $\sigma \in \mathcal{B}(\mathsf{A})$ as
\begin{align} \label{eq:sandwiched}
D_\alpha(\rho\, \Vert\, \sigma)
&:=\frac{\alpha}{\al-1}\log \left\|\sigma^{\frac{1-\alpha}{2\alpha}}\rho\si^{\frac{1-\alpha}{2\alpha}}\right\|_{\al},
\end{align}
provided that the support of $\rho$ is contained in that of $\sigma$; otherwise, it is defined to be positive infinity.
%$\texttt{supp}(\rho)\subseteq \texttt{supp}(\sigma)$.
We define a \emph{generalized sandwiched R\'enyi information} \cite{HT14} for a bipartite state $\rho_{AB} \in \mathcal{S}(\mathsf{A}\otimes \mathsf{B})$ and positive semi-definite $\tau_A \in \mathcal{B}(\mathsf{A})$ and the usual \emph{sandwiched R\'enyi information} as
	\begin{align}
		I_\alpha(\rho_{AB} \,\Vert\, \tau_A ) :&= \inf_{\sigma_B \in \mathcal{S}(\mathsf{B}) } D_\alpha(\rho_{AB} \,\Vert\, \tau_A\ten \sigma_B); \label{eq:sandwiched_generalized} \\ 
    I_\alpha(A: B)_\rho &:= I_\alpha(\rho_{AB} \,\Vert\, \rho_A ).
	\end{align}
% When $\tau_A = \rho_A$, it reduces to the usual sandwiched R\'enyi information, i.e.
% \begin{align}
%      I_\alpha(A: B)_\rho &:= I_\alpha(\rho_{AB} \,\Vert\, \rho_A ).
% \end{align}
Moreover, we define the order-$\alpha$ sandwiched R\'enyi information for a quantum channel (i.e.~completely positive and trace-preserving map) $\mathscr{N}_{A\to B}$ as\footnote{Note that our definition of channel sandwiched R\'enyi information given in Eq.~\eqref{eq:sandwiched_mutual_information_channel} is different from the one given in \cite[\S 7]{KW20}, wherein the authors considered $I_\alpha(\mathscr{N}_{A\to B}) := \sup_{\psi_{AR}} I_\alpha (R:B)_{(\mathscr{N}_{A\to B}\otimes \id_R)(\psi)}$, with $B$ and $R$ swapped comparing to \eqref{eq:sandwiched_mutual_information_channel}. The two definitions coincide as $\alpha = 1$.}
\begin{align} \label{eq:sandwiched_mutual_information_channel}
    I_\alpha(\mathscr{N}_{A\to B}) := \sup_{\psi_{AR}} I_\alpha (B:R)_{(\mathscr{N}_{A\to B}\otimes \id_R)(\psi)  }.
\end{align}
Here and throughout this paper, $\sup_{\psi_{AR}}$ denotes maximizing over all pure states $\psi_{AR} \in\mathcal{S}(\mathsf{A}\otimes \mathsf{R})$ and $\mathsf{R} \cong \mathsf{A}$.

\begin{remark}
    If $|\mathsf{B}|<\infty$, then by the compactness of $\mathcal{S}(\mathsf{B})$, lower semi-continuity of the sandwiched R\'enyi divergence in its second argument \cite{Tom16}, and the extreme value theorem, the minimum in Eq.~\eqref{eq:sandwiched_generalized} can be attained. 
    An iterative algorithm with convergence guarantees was provided in \cite{YCL21}.
    In this paper, although we do not necessary impose the finite-dimensional assumption on the Hilbert space (especially for the output space), the minimum in Eq.~\eqref{eq:sandwiched_generalized} can also be attained.
\end{remark}

As $\alpha\to 1$, the above quantities converge monotonically to the Umegaki's \emph{quantum relative entropy} \cite{Ume62},\cite[Lemma 3.5]{MO14}, 
the generalized mutual information,
 the usual quantum mutual information, and the quantum mutual information of channel, respectively:
\begin{align}
\lim_{\alpha\to 1} D_\alpha(\rho \,\Vert\, \sigma) &= D(\rho \,\Vert\, \sigma) := \Tr\left[ \rho (\log \rho - \log \sigma ) \right], \\
\lim_{\alpha\to 1} I_\alpha(\rho_{AB} \,\Vert\, \tau_A ) &= 
I(\rho_{AB} \,\Vert\, \tau_A ) := D(\rho_{AB} \,\Vert\, \tau_A \otimes \rho_B), \\
\lim_{\alpha\to 1} I_\alpha(A: B)_\rho &= I(A: B)_\rho
     := D(\rho_{AB} \,\Vert\, \rho_A \otimes \rho_B), \\
I(\mathscr{N}_{A\to B})
&:= \sup_{\psi_{RA}} I (B:R)_{(\mathscr{N}_{A\to B}\otimes \id_R)(\psi)}.
\end{align}

% Given a quantum channel (i.e.~completely positive and trace-preserving map) $\mathscr{N}_{A\to B}$, we define
% \begin{align}
%     C(\mathscr{N}_{A\to B}) := \max_{\psi_{AR} \in \mathcal{S}(\mathsf{A}\otimes \mathsf{R}) } I(R: B)_{ (\mathscr{N}_{A\to B}\otimes \id_R)(\psi)  }.
% \end{align}

The above information quantities naturally extend to the multipartite setting (where system $A$ now has, say $L\in\mathds{N}$, subsystems).
% Subsequently, we let $\rho_{A_1 A_2 \ldots A_L B} \in \mathcal{S}(\mathsf{A}_1\otimes \mathsf{A}_2\otimes \cdots \otimes \mathsf{A}_L \otimes \mathsf{B})$ be any multipartite state.
We define a \emph{multipartite sandwiched R\'enyi information} for a multipartite state $\rho_{A_1 A_2 \ldots A_L E} \in \mathcal{S}(\mathsf{A}_1\otimes \mathsf{A}_2\otimes \cdots \otimes \mathsf{A}_L \otimes \mathsf{E})$ and a quantum broadcast channel $\mathscr{N}_{A\to B_1 B_2 \ldots B_L} :\mathcal{S}(\mathsf{A}) \to \mathcal{S}(\mathsf{B}_1 \otimes \mathsf{B}_2 \otimes \cdots \otimes \mathsf{B}_L)$ as
\begin{align}
I_\alpha\left(A_1 : A_2 : \cdots : A_L : E\right)_\rho &:=
I_\alpha\left( \rho_{A_1 A_2 \ldots A_L E} \, \Vert \otimes_{\ell\in[L]} \rho_{A_{\ell}} \right); \\
I_\alpha( \mathscr{N}_{A\to B_1 B_2 \ldots B_L} ) 
&:= \max_{\psi_{AR}} I_\alpha\left(B_1 : B_2 : \cdots : B_L : R\right)_{ (\mathscr{N}\otimes \id_R)(\psi)}.
\end{align}
In particular, we term 
\begin{align}
    \lim_{\alpha\to1} I_\alpha\left(A_1 : A_2 : \cdots : A_L : E\right)_\rho &\equiv I\left(A_1 : \cdots : A_L: E \right)_\rho
		:= D\left( \rho_{A_1 \ldots A_L E} \,\Vert \otimes_{\ell \in [L]} \rho_{A_\ell} \otimes \rho_E \right)_\rho, \label{eq:multipartite_mutual_information} \\
    I( \mathscr{N}_{A\to B_1 B_2 \ldots B_L} ) 
    &:= \max_{\psi_{AR}} I\left(B_1 : B_2 : \cdots : B_L : R\right)_{ (\mathscr{N}\otimes \id_R)(\psi)} \label{eq:multipartite_mutual_information_channel}
\end{align}
as the \emph{multipartite quantum mutual information} \cite{Gill54, Watanabe60} $I\left(A_1 : \cdots : A_L: E \right)_\rho$ for state $\rho_{A_1 A_2 \ldots A_L E}$ and the \emph{quantum mutual information} $I( \mathscr{N}_{A\to B_1 B_2 \ldots B_L} ) $ for quantum broadcast channel $\mathscr{N}_{A\to B_1 B_2 \ldots B_L}$, respectively.\footnote{We note that sometimes the multipartite quantum mutual information is defined in an alternative way, e.g.~$I(A:B:C)_\rho := I(A : B)_\rho + I(A : C)_\rho - I(A : BC)_\rho$. However, we will adopt the definition in Eq.~\eqref{eq:multipartite_mutual_information} in this paper.}

We further define the \emph{relative entropy variance} \cite{TH13, Li14} $V(\rho\,\Vert\,\sigma)$, \emph{mutual information variance} ${V(A_1 : \cdots : A_L : E)_\rho}$, and \emph{channel dispersion} $V(\mathscr{N}_{A\to B_1 B_2 \ldots B_L})$ for a broadcast channel $\mathscr{N}_{A\to B_1 B_2 \ldots B_L}$ as
\begin{align}
    V(\rho\,\Vert\,\sigma) &:= \Tr[\rho (\log \rho - \log \sigma)^2] - \left[ D(\rho\,\Vert\,\sigma) \right]^2; \\
    V\left(A_1 : \cdots : A_L : E\right)_\rho
	&:= V\left( \rho_{A_1 \ldots A_L E} \,\Vert \otimes_{\ell \in [L]} \rho_{A_\ell} \otimes \rho_E \right); \\
    V(\mathscr{N}_{A\to B_1 B_2 \ldots B_L}) &:= \sup_{\psi_{AR}: I(B_1:\cdots:B_L:R)_{(\mathscr{N}\otimes \id_R)(\psi)} = I(\mathscr{N}) }  V(B_1:\cdots:B_L:R)_{(\mathscr{N}\otimes \id_R)(\psi)}. \label{eq:V}
\end{align}
	
Below we collect several well-known properties of the generalized R\'enyi information. Essentially they all follow similarly as the special case $L=1$
\cite{WWY14, MO14}, \cite[Lemma 7]{HT14}, \cite{CGH18, CG22}, and \cite[Lemma 17]{LY21b}.

\begin{proposition}[Properties of R\'enyi information] \label{fact:properties_sandwiched}
	For any multipartite state $\rho_{A_1 A_2 \ldots A_L E} $ and $\tau_{A_\ell} \in\mathcal{S}(\mathsf{A}_{\ell})$, $\ell\in[L]$, and $\alpha \geq \sfrac12$, the sandwiched R\'enyi information satisfies the following properties.
	\begin{enumerate}[(a)]
		\item\label{item:limiting_sandwiched} 
		(Monotone decreasing \cite{WWY14},\cite[Lemma 4.6]{MO14}) When $\al \to 1$,
		\begin{align}
		I_\alpha\left(\rho_{A_1 A_2 \ldots A_L E}\, \Vert \otimes_{\ell\in[L]} \tau_{A_\ell} \right)
        \searrow I\left(\rho_{A_1 A_2 \ldots A_L E}\, \Vert \otimes_{\ell\in[L]} \tau_{A_\ell} \right).
		% = D\left(\rho_{A_1 A_2 \ldots A_L E}\, \Vert \otimes_{\ell\in[L]} \tau_{A_\ell} \otimes \rho_E\right).
		\end{align}
		
		% \item\label{item:derivative_sandwiched} 
		% (First-order derivative) Provided that the underlying Hilbert spaces are all finite dimensional, then
		% \begin{align}
		% 2\left.\frac{\mathrm{d}}{\mathrm{d} \alpha} I_\alpha\left(\rho_{A_1 A_2 \ldots A_L E}\, \Vert \otimes_{\ell\in[L]} \tau_{A_\ell} \right)\right|_{\alpha=1} 
		% =  V\left( \rho_{A_1 \ldots A_L E} \,\Vert \otimes_{\ell \in [L]} \rho_{A_\ell} \otimes \tau_E \right).
		% \end{align}
		% where the relative entropy variance \cite{TH13, Li14} is 
		% $V(\rho\,\Vert\,\sigma) := \Tr[\rho (\log \rho - \log \sigma)^2] - \left[ D(\rho\,\Vert\,\sigma) \right]^2$.
		
		\item\label{item:additivity_sandwiched}
		(Additivity \cite[Lemma 7]{HT14})
        For any integer $n \in \mathds{N}$,
		\begin{align}
		I_\alpha\left(\rho_{A_1 A_2 \ldots A_L E}^{\otimes n}\, \Vert \otimes_{\ell\in[L]} \tau_{A_\ell}^{\otimes n} \right)
		= n I_\alpha\left(\rho_{A_1 A_2 \ldots A_L E}\, \Vert \otimes_{\ell\in[L]} \tau_{A_\ell} \right).
		\end{align}
		
		% \item\label{item:convexity_sandwiched}
		% (Concavity in $\tau$) The following map is jointly convex and lower semi-continuous on $\mathcal{S}(\mathsf{A}_1) \times \cdots \times \mathcal{S}(\mathsf{A}_L)$: 
		% \begin{align} 
		% (\tau_1, \ldots, \tau_L) \mapsto I_\alpha\left( \rho_{A_1 A_2 \ldots A_L} \, \Vert \otimes_{\ell\in[L]} \tau_{A_{\ell}} \right).
		% \end{align}
		
		\item\label{item:concavity_sandwiched}
		(Concavity in $\alpha$ \cite[Theorem 11]{CGH18})\footnote{We note that the concavity established in \cite{CGH18} is for a special case that system $A$ is classical and $\tau_{A} = \rho_{A}$. However, the same reasoning applies here as well.} The following map is concave 
     and upper semi-continuous 
    on $(1,\infty)$:
		\begin{align}
		\alpha \mapsto \frac{1-\alpha}{\alpha} I_\alpha\left(\rho_{A_1 A_2 \ldots A_L E}\, \Vert \otimes_{\ell\in[L]} \tau_{A_\ell} \right).
		\end{align}

    \item\label{item:concavity_p_sandwiched}
    (Concavity in $\psi$ \cite[Lemma 17]{LY21b})
    The following map is concave 
     and upper semi-continuous 
    on $\mathcal{S}(\mathsf{A})$:
    \begin{align}
        \rho_{A} \mapsto I_{\alpha}\left(B_1: B_2: \cdots B_L : R \right)_{\mathscr{N}_{A\to B_1 B_2\ldots B_L}(\psi_{AR}^\rho)},
    \end{align}
    where $\psi_{AR}^\rho$ is a purification of $\rho_A$ and $\mathsf{R} \cong \mathsf{A}$.
	\end{enumerate}
\end{proposition}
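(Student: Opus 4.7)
The plan is to reduce each item to its unipartite ($L=1$) counterpart via the elementary observation that, regrouping systems as $A := A_1 A_2 \cdots A_L$ and setting $\tau_A := \bigotimes_{\ell \in [L]} \tau_{A_\ell}$, the defining identity \eqref{eq:sandwiched_generalized} reads
\[
I_\alpha\!\left(\rho_{A_1 A_2 \ldots A_L E} \,\Vert\, \bigotimes_{\ell\in[L]} \tau_{A_\ell}\right)
= I_\alpha(\rho_{AE} \,\Vert\, \tau_A)
= \inf_{\sigma_E \in \mathcal{S}(\mathsf{E})} D_\alpha(\rho_{AE} \,\Vert\, \tau_A \otimes \sigma_E).
\]
Consequently, properties (a)--(c) will follow from the known bipartite analogs, provided I verify that the outer infimum over $\sigma_E$ commutes with the relevant operation in each case.

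For (a), I would start from the fact that for each fixed $\sigma_E$ the map $\alpha \mapsto D_\alpha(\rho_{AE} \,\Vert\, \tau_A \otimes \sigma_E)$ is monotone non-decreasing with limit $D(\rho_{AE} \,\Vert\, \tau_A \otimes \sigma_E)$ as $\alpha \to 1^+$ \cite[Lemma 4.6]{MO14}; the infimum of a monotone non-decreasing family is monotone non-decreasing, and the monotone-limit/infimum interchange is justified by lower semi-continuity of $D_\alpha$ in its second argument together with compactness of $\mathcal{S}(\mathsf{E})$. For (b), additivity amounts to showing that a minimizer of $D_\alpha\!\left(\rho_{AE}^{\otimes n} \,\Vert\, \tau_A^{\otimes n} \otimes \sigma_{E^n}\right)$ may be chosen of tensor-product form; this is achieved by symmetrizing a minimizer under the permutation action on $E^n$ and invoking joint quasi-convexity of $D_\alpha$ in its second argument, exactly as in \cite[Lemma 7]{HT14}. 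For (c), concavity of $\alpha \mapsto \frac{1-\alpha}{\alpha} D_\alpha(\rho_{AE} \,\Vert\, \tau_A \otimes \sigma_E)$ on $(1,\infty)$ for each fixed $\sigma_E$ is \cite[Theorem 11]{CGH18}; since the sign of $\frac{1-\alpha}{\alpha}$ is negative for $\alpha > 1$, the $\inf_{\sigma_E}$ in $I_\alpha$ corresponds to a $\sup_{\sigma_E}$ of $\frac{1-\alpha}{\alpha} D_\alpha$, and pointwise suprema of concave upper semi-continuous families remain concave and upper semi-continuous.

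Item (d) does not reduce at the level of $D_\alpha$, because the input $\rho_A$ enters both arguments through the purification $\psi^\rho_{AR}$. Here I would follow the template of \cite[Lemma 17]{LY21b}: writing $\psi^\rho_{AR} = (\sqrt{\rho_A} \otimes I_R)\, \Omega_{AR}\, (\sqrt{\rho_A} \otimes I_R)$ for a fixed maximally entangled vector $\Omega_{AR}$, a convex combination $\rho_A = \sum_i p_i \rho^{(i)}_A$ may be realized as the image under a classical-control mixing channel of a block-diagonal superstate, so joint convexity of $D_\alpha$ (valid for $\alpha \geq \sfrac12$) together with the data-processing inequality transfers convex combinations from the input to the purified output; the argument carries over to the multipartite channel setting by regrouping $B_1 \cdots B_L$ into a single output register before applying the unipartite concavity. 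The one recurring technical point --- and the only real obstacle --- is rigorously justifying the interchange of $\inf_{\sigma_E}$ with limits, tensor powers, and suprema; under the standing assumption (articulated in the remark following \eqref{eq:sandwiched_mutual_information_channel}) that the infimum in \eqref{eq:sandwiched_generalized} is attained, each such exchange reduces to a standard compactness plus lower-semi-continuity argument on $\mathcal{S}(\mathsf{E})$.
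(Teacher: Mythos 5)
Your regrouping reduction works cleanly for items (a) and (b), where the reference $\tau_A := \bigotimes_\ell \tau_{A_\ell}$ is a fixed operator, so the multipartite $I_\alpha$ is literally the unipartite quantity with $A := A_1\cdots A_L$; for (a) the infimum/limit interchange can in fact be done with no topology at all, since $\lim_{\alpha\downarrow 1} \inf_\sigma D_\alpha(\cdot\,\Vert\,\tau_A\otimes\sigma) = \inf_{\alpha>1}\inf_\sigma = \inf_\sigma\inf_{\alpha>1}$ by commutativity of infima and monotonicity of $\alpha\mapsto D_\alpha$. The paper itself gives no proof and simply cites the $L=1$ literature with the comment that everything "follows similarly," so in spirit your approach matches the paper.

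However, there is a genuine gap in your argument for item (c). You invoke the rule "pointwise suprema of concave upper semi-continuous families remain concave and upper semi-continuous," and this is false on both counts: pointwise \emph{infima} of concave functions are concave and pointwise \emph{infima} of USC functions are USC, whereas suprema preserve neither (e.g.~$\max\{-x^2, -(x-1)^2\}$ is not concave, and $\sup_n \min\{nx,1\}$ is not USC at $0$). Since $\tfrac{1-\alpha}{\alpha}<0$ on $(1,\infty)$ turns the defining $\inf_{\sigma_E}$ into a $\sup_{\sigma_E}$, your reduction lands precisely in the case where the general rule fails. What \cite{CGH18} actually does (and what the footnote in Proposition~\ref{fact:properties_sandwiched} is alluding to) is not a pointwise-envelope argument but a direct proof using the closed-form Sibson-type identity for $I_\alpha$, in which the inner optimization over $\sigma_E$ is resolved explicitly into a quantity of the form $\log\bigl\|\Tr_A\bigl[(\tau_A^{\frac{1-\alpha}{2\alpha}}\rho_{AE}\tau_A^{\frac{1-\alpha}{2\alpha}})^\alpha\bigr]\bigr\|_{1/\alpha}$, and concavity is then read off from a Riesz--Thorin/Stein interpolation argument applied to this explicit expression. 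That route does generalize to the multipartite regrouped version verbatim, but the infimum-commutation route does not.

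A smaller but worth-flagging point on item (d): the regrouping $B := B_1\cdots B_L$ does \emph{not} reduce $I_\alpha(B_1:\cdots:B_L:R)_{\mathscr{N}(\psi^\rho)}$ to the unipartite $I_\alpha(B:R)$ of \cite[Lemma~17]{LY21b}, because the reference operator here is the tensor product of marginals $\bigotimes_\ell (\mathscr{N}_{A\to B_\ell})(\rho_A)$, not the joint marginal $\mathscr{N}(\rho_A)_{B_1\cdots B_L}$; the two differ and both depend on $\rho_A$. So the claim that "the argument carries over by regrouping $B_1\cdots B_L$ into a single output register" is not a reduction but a re-derivation: one must check that the proof technique of \cite{LY21b} (joint convexity of $D_\alpha$ and data processing under a classical-control mixing isometry) still closes when the reference is a product of $L$ induced marginals. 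This is plausible and consistent with what the paper asserts, but it is not the same as citing the unipartite lemma.
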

% \begin{proof}
% 	{\color{red}
% 		Item~\ref{item:derivative_sandwiched} requires a proof but it should follow from Li and Hao-Chung's soft covering paper.
		
% 		Item~\ref{item:additivity_sandwiched} requires a proof but it should follow from Li and Hao-Chung's convex splitting paper.
% 	}

% 	{\color{red} Still needs lower/upper semi-continuity.}
% \end{proof}

The following Lemmas~\ref{lemma:dimension_bound} and \ref{lemma:convexity} will be used in Section~\ref{sec:simulation} for broadcast channel simulation. We delay their proofs to Appendix~\ref{sec:lemmas}.

\begin{lemma}[Dimension bound] \label{lemma:dimension_bound}
	For any states $\rho_{ABC} \in \mathcal{S}(\mathsf{A} \otimes \mathsf{B} \otimes \mathsf{C})$, $\tau_{A} \in \mathcal{S}(\mathsf{A})$, and $\alpha >0$, we have
	\begin{align}
	I_\alpha\left(\rho_{ABC} \,\Vert\, \tau_A \right)
	\leq I_\alpha\left(\rho_{AB} \,\Vert\, \tau_A \right) + \frac{2\alpha}{\alpha-1} \log |\mathsf{C}|.
	\end{align}
\end{lemma}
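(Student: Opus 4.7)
The plan is to bound $I_\alpha(\rho_{ABC}\,\Vert\,\tau_A)$ from above by evaluating its defining infimum at the suboptimal candidate $\sigma_{BC}=\sigma_B^*\otimes\pi_C$, where $\sigma_B^*$ attains the infimum in $I_\alpha(\rho_{AB}\,\Vert\,\tau_A)$ (attainability follows from the remark after Eq.~\eqref{eq:sandwiched_generalized}) and $\pi_C:=\id_C/|\mathsf{C}|$ is the maximally mixed state on $\mathsf{C}$. This immediately gives
\begin{align*}
I_\alpha(\rho_{ABC}\,\Vert\,\tau_A)\leq D_\alpha(\rho_{ABC}\,\Vert\,\tau_A\otimes\sigma_B^*\otimes\pi_C),
\end{align*}
so the task reduces to comparing the right-hand side to $D_\alpha(\rho_{AB}\,\Vert\,\tau_A\otimes\sigma_B^*)=I_\alpha(\rho_{AB}\,\Vert\,\tau_A)$ at a dimensional loss of at most $\tfrac{2\alpha}{\alpha-1}\log|\mathsf{C}|$.

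The core estimate is the dimensional operator inequality $\rho_{ABC}\leq|\mathsf{C}|^2(\rho_{AB}\otimes\pi_C)$. I would prove it by introducing $R:=(\rho_{AB}^{-1/2}\otimes\id_C)\rho_{ABC}(\rho_{AB}^{-1/2}\otimes\id_C)$ (with pseudo-inverse on $\supp(\rho_{AB})$), which satisfies $\Tr_C R=\Pi_{\supp(\rho_{AB})}$; hence on $\supp(\rho_{AB})\otimes\mathsf{C}$, $R$ is the Choi operator of a quantum channel with input $\supp(\rho_{AB})$ and output $\mathsf{C}$. A Schmidt-rank argument on the associated normalized Choi state---whose input marginal is maximally mixed---yields the output-dimension bound $\|R\|_\infty\leq|\mathsf{C}|$; the naive input-dimension bound $\operatorname{rank}(\rho_{AB})$ would be insufficient here. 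Multiplying each side by $\pi_C^{-1/2}=\sqrt{|\mathsf{C}|}\,\id_C$ then gives $\|(\rho_{AB}\otimes\pi_C)^{-1/2}\rho_{ABC}(\rho_{AB}\otimes\pi_C)^{-1/2}\|_\infty=|\mathsf{C}|\,\|R\|_\infty\leq|\mathsf{C}|^2$, equivalent to the desired operator inequality.

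Next I would invoke a pseudo-triangle for the sandwiched divergence: for $\alpha>1$, whenever $\rho\leq\lambda\sigma$,
\begin{align*}
D_\alpha(\rho\,\Vert\,\tau)\leq D_\alpha(\sigma\,\Vert\,\tau)+\frac{\alpha}{\alpha-1}\log\lambda.
\end{align*}
This follows because conjugation by $\tau^{(1-\alpha)/(2\alpha)}$ preserves the operator inequality, and Weyl's monotonicity theorem ($0\leq X\leq Y\Rightarrow\Tr[X^\alpha]\leq\Tr[Y^\alpha]$ for every $\alpha>0$, via eigenvalue domination) produces a factor $\lambda^\alpha$ inside the trace; taking $\tfrac{1}{\alpha-1}\log$ yields the inequality.

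Combining the two steps with $\sigma=\rho_{AB}\otimes\pi_C$, $\tau=\tau_A\otimes\sigma_B^*\otimes\pi_C$, $\lambda=|\mathsf{C}|^2$, and using that the common tensor factor $\pi_C$ cancels inside $D_\alpha$, I obtain
\begin{align*}
D_\alpha(\rho_{ABC}\,\Vert\,\tau_A\otimes\sigma_B^*\otimes\pi_C)\leq I_\alpha(\rho_{AB}\,\Vert\,\tau_A)+\frac{2\alpha}{\alpha-1}\log|\mathsf{C}|,
\end{align*}
which is exactly the claimed dimension bound. The main obstacle is the sharper output-dimension bound $\|R\|_\infty\leq|\mathsf{C}|$ on the Choi-type operator $R$; once in hand, the rest is Weyl monotonicity and the algebraic cancellation of $\pi_C$ from both slots of $D_\alpha$.
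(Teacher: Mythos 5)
Your proposal is correct and follows essentially the same route as the paper: both evaluate the infimum at the suboptimal $\sigma_{BC}=\sigma_B^*\otimes\pi_C$ (the paper uses an $\eps$-near-optimizer and lets $\eps\to0$) and then push the operator inequality $\rho_{ABC}\leq|\mathsf{C}|\,\rho_{AB}\otimes\mathds{1}_C$ through the conjugation by $(\tau_A\otimes\sigma_B\otimes\pi_C)^{(1-\alpha)/(2\alpha)}$ and the monotonicity of $\Tr[(\cdot)^\alpha]$ under operator domination---your ``pseudo-triangle'' is exactly this step repackaged. The only difference is that the paper simply asserts $\rho_{ABC}\leq|\mathsf{C}|\,\rho_{AB}\otimes\mathds{1}_C$ (a standard fact that follows from the pure-state Schmidt decomposition by linearity), whereas you supply a Choi-operator / Schmidt-rank derivation of $\|R\|_\infty\leq|\mathsf{C}|$; that argument is valid (one uses that the top eigenvector of $R$ has reduced state on $\supp(\rho_{AB})$ of rank at most $|\mathsf{C}|$, and $\mu_1\Tr_C\proj{\phi_1}\leq\Tr_C R=\Pi$) but heavier than needed, and you are also correct that the statement and both proofs really only hold for $\alpha>1$.
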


\begin{lemma}[Convexity] \label{lemma:convexity}
	Let $L$ be any integer and $\mathcal{I}$ be any finite set.
	Let $\rho_{A_1A_2\ldots A_L E} := \sum_{i\in\mathcal{I}} p_i \rho_{A_1A_2\ldots A_L E}^{i} $ and $\tau_{A_\ell} =  \sum_{i\in\mathcal{I}} p_i \tau_{A_{\ell}}^{i} $, $\ell\in [L]$ be statistical mixtures of states for any $p_i>0$, $\sum_{i\in\mathcal{I}} p_i  = 1$.
	Then, the following holds for every $\alpha \geq \sfrac12$,
	\begin{align}
	\begin{split}
	I_\alpha( \rho_{A_1\ldots A_L E} \,\Vert \otimes_{\ell \in [L]}  \tau_{A_\ell} ) &\leq \sum\nolimits_{i\in\mathcal{I} } p_i I_\alpha\left(\rho_{A_1\ldots A_L E}^{i} \,\Vert \otimes_{\ell\in[L]} \tau_{A_\ell}^{i}\right) + L \cdot H(\{p_i\}_{i\in\mathcal{I}})
	\\ &\le \sum\nolimits_{i \in \mathcal{I} }p_i I_\alpha\left(\rho_{A_1\ldots A_L E}^{i} \,\Vert \otimes_{\ell\in[L]} \tau_{A_\ell}^{i}\right) + L \log |\mathcal{I}|.
	\end{split}
	\end{align}
	% % and
	% % \begin{align}
	% % \begin{split}
	% % I_\alpha\left( A_1 : \cdots : A_L \right)_\rho &\leq \sum\nolimits_{i\in\mathcal{I} } p_i I_\alpha\left( A_1 : \cdots : A_L \right)_{\rho^i} + L \cdot H(\{p_i\}_{i\in\mathcal{I}})
	% % \\ &\le \sum\nolimits_{i \in \mathcal{I} }p_i I_\alpha\left( A_1 : \cdots : A_L \right)_{\rho^i} + L \log |\mathcal{I}|.
	% % \end{split}
	% \end{align}	
	Here, $H(\{p_i\}_{i\in\mathcal{I}}) := - \sum_{i\in\mathcal{I}} p_i \log p_i$ denotes the Shannon entropy.
%	Moreover, 
\end{lemma}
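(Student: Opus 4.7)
My plan is to prove Lemma~\ref{lemma:convexity} by lifting the mixture structure to a classical-quantum extension, applying the data processing inequality (DPI), and extracting the entropy penalty from a block-diagonal computation. Introduce $L$ classical flag registers $X_1,\ldots,X_L$, each recording the mixture index $i\in\mathcal{I}$, and form
\[
\hat\rho_{A_1 X_1 \cdots A_L X_L E} := \sum_{i\in\mathcal{I}} p_i\,\rho^{i}_{A_1\cdots A_L E}\otimes \bigotimes_{\ell\in[L]} |i\rangle\langle i|_{X_\ell},\qquad \hat\tau_{A_\ell X_\ell} := \sum_{i\in\mathcal{I}} p_i\,\tau^{i}_{A_\ell}\otimes |i\rangle\langle i|_{X_\ell}.
\]
DPI for $D_\alpha$ under the partial trace on $X_1\cdots X_L$ then gives $I_\alpha(\rho\,\Vert\,\otimes_\ell \tau_{A_\ell})\leq I_\alpha(\hat\rho\,\Vert\,\otimes_\ell \hat\tau_{A_\ell X_\ell})$, reducing the claim to a bound on the CQ extension.

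Both $\hat\rho$ and $\otimes_\ell \hat\tau_{A_\ell X_\ell}\otimes \sigma_E$ are block-diagonal in the joint basis on $X_1\cdots X_L$, and crucially $\hat\rho$ is supported only on the ``diagonal'' tuples $(i,i,\ldots,i)$. An explicit evaluation on these blocks yields
\[
D_\alpha\bigl(\hat\rho\,\big\Vert\,\otimes_\ell \hat\tau_{A_\ell X_\ell}\otimes \sigma_E\bigr) = \tfrac{1}{\alpha-1}\log \sum_{i\in\mathcal{I}} p_i\cdot 2^{(\alpha-1)\bigl[-(L-1)\log p_i + D_\alpha(\rho^{i}\,\Vert\,\otimes_\ell \tau^{i}_{A_\ell}\otimes \sigma_E)\bigr]}.
\]
To handle the infimum over $\sigma_E$ I would take $\sigma_E^\star := \sum_{j\in\mathcal{I}} p_j\,\sigma_E^{\star,j}$, where each $\sigma_E^{\star,j}$ attains the infimum in the definition of $I_\alpha(\rho^{j}\,\Vert\,\otimes_\ell \tau^{j}_{A_\ell})$ (attainable by the compactness argument noted after Eq.~\eqref{eq:sandwiched_generalized}). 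Since $\sigma_E^\star \geq p_i\,\sigma_E^{\star,i}$, operator monotonicity of $D_\alpha$ in its second slot together with the scaling $D_\alpha(\cdot\,\Vert\,c\sigma) = D_\alpha(\cdot\,\Vert\,\sigma) - \log c$ gives the pointwise estimate $D_\alpha(\rho^{i}\,\Vert\,\otimes_\ell \tau^{i}_{A_\ell}\otimes \sigma_E^\star) \leq I_\alpha(\rho^{i}\,\Vert\,\otimes_\ell \tau^{i}_{A_\ell}) - \log p_i$, absorbing an additional $-\log p_i$ and collapsing the inner $D_\alpha$ to $I_\alpha$.

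Substituting back reduces the task to a R\'enyi log-sum-exp estimate in the variables $y_i := -L\log p_i + I_\alpha(\rho^{i}\,\Vert\,\otimes_\ell \tau^{i}_{A_\ell})$, whose $p$-weighted average is precisely $\sum_i p_i I_\alpha(\rho^{i}\,\Vert\,\otimes_\ell \tau^{i}_{A_\ell}) + L\cdot H(\{p_i\})$. The hard part is uniformly closing this Jensen-type step across $\alpha \geq \tfrac12$: for $\alpha\in[\tfrac12,1]$ it follows directly by concavity properties of the Kolmogorov--Nagumo mean, whereas for $\alpha>1$ naive Jensen goes the wrong way, and one must either (i) settle for the coarser $L\log|\mathcal{I}|$ penalty via $H(\{p_i\})\leq\log|\mathcal{I}|$, which is already enough for the Post-Selection applications in Section~\ref{sec:simulation} since there $|\mathcal{I}|=\mathrm{poly}(n)$, or (ii) interpolate in $\alpha$ using Proposition~\ref{fact:properties_sandwiched}(\ref{item:concavity_sandwiched}) down to the $\alpha=1$ case, where the decomposition $I_1(\rho\,\Vert\,\otimes_\ell \tau_{A_\ell}) = I(A_1{:}\cdots{:}A_L{:}E)_\rho + \sum_\ell D(\rho_{A_\ell}\,\Vert\,\tau_{A_\ell})$, the bipartite chain rule $I(A_1{:}\cdots{:}A_L{:}E) = \sum_{\ell=1}^{L} I(A_\ell{:}A_{\ell+1}\cdots A_L E)$, standard bipartite MI concavity-with-entropy-correction per summand, and joint convexity of $D$ for the marginal divergences combine cleanly to deliver the $L\cdot H(\{p_i\})$ term.
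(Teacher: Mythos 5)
Your proposal takes a genuinely different route from the paper. The paper works directly on the operators: it uses the scaling identity $D_\alpha(\rho\Vert p_i^{L}\sigma) = D_\alpha(\rho\Vert\sigma) - L\log p_i$ to absorb the entropy penalty into the second argument, then applies joint convexity of $D_\alpha$ in a single stroke, and finishes with operator monotonicity in the second slot ($p_i\tau^i_{A_\ell}\leq\tau_{A_\ell}$) and the infimum defining $I_\alpha$. You instead lift to a classical--quantum extension with $L$ flag registers, apply DPI under partial trace (which is uncontroversial for all $\alpha\geq\sfrac12$), and evaluate on the block-diagonal structure. Your steps 1--4 (CQ lift, DPI, block-diagonal formula, choice of $\sigma_E^\star=\sum_j p_j\sigma_E^{\star,j}$) are correct and cleanly reduce the claim to the log-sum-exp estimate you state.

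That is where the proposal stops being a proof, and the gap is real. As you acknowledge, for $\alpha>1$ the exponential mean $\tfrac{1}{\alpha-1}\log\sum_i p_i\,2^{(\alpha-1)y_i}$ sits \emph{above} the arithmetic mean $\sum_i p_i y_i$, which is the wrong direction, and $\alpha\in(1,2]$ is exactly the range that feeds the error exponent. Neither remedy as sketched closes this. Remedy~(i) is circular: $H(\{p_i\})\leq\log|\mathcal{I}|$ is only useful after you already have the $L\cdot H$ bound. Nor does a direct $L\log|\mathcal{I}|$ bound drop out by replacing $\sigma_E^\star$ with the uniform average $\tfrac{1}{|\mathcal{I}|}\sum_j\sigma^{\star,j}_E$: the block-diagonal formula still leaves the residual $p_i^{-(L-1)(\alpha-1)}$, which is unbounded for small $p_i$ when $\alpha>1$. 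Remedy~(ii) is too vague to assess: concavity of $\alpha\mapsto\tfrac{1-\alpha}{\alpha}I_\alpha$ does not transport a bound whose right-hand side itself moves with $\alpha$ from $\alpha=1$ up to $\alpha>1$. More to the point, the obstruction is not an artefact of your method. Already for $L=1$, $E$ trivial, $\alpha=2$, $\mathcal{I}=\{0,1\}$, $p_0=p_1=\tfrac12$, $\rho^0=\mathrm{diag}(1,0)$, $\rho^1=\mathrm{diag}(0,1)$, $\tau^0=\tau^1=\mathrm{diag}(\delta,1-\delta)$, the claimed inequality reduces to
\begin{equation*}
\log\frac{1}{4\delta(1-\delta)}\;\le\;\frac12\log\frac{1}{\delta(1-\delta)}+1,
\end{equation*}
which fails with unbounded gap as $\delta\to0$. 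Joint convexity of $D_\alpha$ (the paper's key step (a)) is also known to fail for $\alpha>1$ even classically. This suggests the lemma, as stated for arbitrary mixture components $\tau^i$, needs an additional restriction for $\alpha>1$ (the paper's downstream application only ever uses $\tau^i=\rho^i_{A_\ell}$, the induced marginals) or a reformulation; you should resolve this before trying to close the Jensen step.
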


% For any $r>0$, we define the error-exponent functions for a multipartite state $\rho_{A_1 A_2 \ldots A_L B}$, positive semi-definite $\tau_{A_1} \in \mathcal{S}(\mathsf{A}_1)$, $\ldots$, $\tau_{A_L} \in \mathcal{S}(\mathsf{A}_L)$, and a quantum broadcast channel $\mathscr{N}_{A\to B_1 B_2 \ldots B_L} $ as :
% \begin{align}
% E_r (\rho_{A_1 \ldots A_L B} \, \Vert \, \otimes_{\ell\in[L]} \tau_{A_\ell} ) &:= \sup_{\alpha \in [1,2]} \frac{\alpha-1}{\alpha} \left( r - I_\alpha(\rho_{A_1 \ldots A_L B} \, \Vert \, \otimes_{\ell\in[L]} \tau_{A_\ell} ) \right); \label{eq:error_exponent_tau} \\
% E_r(A_1 : \cdots A_L : B)_\rho &:= E_r (\rho_{A_1 \ldots A_L B} \, \Vert \, \otimes_{\ell\in[L]} \rho_{A_\ell} ) \, \Vert \, \rho_A); \label{eq:error_exponent_mutual}\\
% % E_r (\mathscr{N}_{A\to B_1 B_2 \ldots B_L}) &:= \sup_{\alpha \in [1,2]} \frac{\alpha-1}{\alpha} \left( r - I_\alpha( \mathscr{N}_{A\to B_1 B_2 \ldots B_L} ) \right).
% \end{align}
% \begin{align} \label{eq:error_exponent}
% E_{{r}}(A_1: A_2 : \cdots : A_L : B )_\rho 
% &:= 	\sup_{\alpha \in [1,2]} \frac{\alpha-1}{\alpha}\left(
% r  - I_\alpha\left(A_1 : A_2 : \cdots : A_L: B\right)_\rho
% \right); \\
% \end{align}

We introduce the error-exponent functions as the \emph{Fenchel--Legendre transform} of the above R\'enyi information quantities, i.e.~for any $r>0$,
\begin{align}
E_r (\rho_{A_1 A_2 \ldots A_L B} \, \Vert \, \otimes_{\ell\in[L]} \tau_{A_\ell}) &:= \sup_{\alpha \in [1,2]} \frac{\alpha-1}{\alpha} \left( r - I_\alpha(\rho_{A_1 A_2 \ldots A_L E} \, \Vert \, \otimes_{\ell\in[L]} \tau_{A_\ell}) \right); \label{eq:error_exponent_tau}\\
E_r(A_1 : A_2 : \cdots : A_L : E)_\rho &:= E_r (\rho_{A_1 A_2 \ldots A_L E} \, \Vert \otimes_{\ell\in[L]} \rho_{A_\ell}) \label{eq:error_exponent_marginal};\\
E_r (\mathscr{N}_{A\to B_1 B_2 \ldots B_L}) &:= \sup_{\alpha \in [1,2]} \frac{\alpha-1}{\alpha} \left( r - I_\alpha( \mathscr{N}_{A\to B_1 B_2 \ldots B_L} ) \right). \label{eq:error_exponent_channel}
\end{align}

We collect the known properties of the error-exponent function \cite{Hao-Chung, CH17, CHT19, CHDH2-2018, CG22, SGC22b}, which are consequences of the the properties of the R\'enyi information given in
Proposition~\ref{fact:properties_sandwiched}.
and the minimax theorem \cite[\S36]{Roc70}.
\begin{proposition}[Properties of error-exponent function] \label{fact:properties_exponent}
For any multipartite state $\rho_{A_1 A_2 \ldots A_L B} $ and any quantum broadcast channel $\mathscr{N}_{A\to B_1 B_2 \ldots B_L}$, the error-exponent function satisfies the following properties.
\begin{enumerate}[(i)]
	\item\label{item:positivity_exponent} (Positivity) For any $r>0$,
    \begin{align}
        E_r (\rho_{A_1 A_2 \ldots A_L B} \, \Vert \otimes_{\ell\in[L]} \tau_{A_\ell}) > 0 &\Longleftrightarrow r > I(\rho_{A_1 A_2 \ldots A_L B} \, \Vert \otimes_{\ell\in[L]} \tau_{A_\ell}); \\
        E_r (\mathscr{N}_{A\to B_1 B_2 \ldots B_L}) > 0 &\Longleftrightarrow r > I( \mathscr{N}_{A\to B_1 B_2 \ldots B_L} ).
    \end{align}

	\item\label{item:additivity_exponent} (Additivity) 
    For any integer $n \in \mathds{N}$ and $r>0$,
	\begin{align}
	E_{nr} \left(\rho_{A_1 A_2 \ldots A_L B}^{\otimes n} \, \Vert \otimes_{\ell\in[L]} \tau_{A_\ell}^{\otimes n} \right)
	= n E_{{r}}\left(\rho_{A_1 A_2 \ldots A_L B} \, \Vert \otimes_{\ell\in[L]} \tau_{A_\ell} \right). 
	\end{align}

 \item\label{item:minimax_exponnet} (A minimax identity and saddle-point)
    Provided that the underlying Hilbert spaces are all finite dimensional, for any $r>0$,
    there exist a saddle-point $(\psi,\alpha) \in \mathcal{S}(\mathsf{A}\otimes \mathsf{R}) \times[1,2] $\cite[\S 36]{Roc70} such that
    \begin{align}
    E_r(\mathscr{N}_{A\to B_1 B_2 \ldots B_L}) &= 
     \inf_{\psi_{AR}} E_r(B_1: B_2: \cdots B_L: R)_{(\mathscr{N}\otimes \id_R)(\psi)} \\
     &= \frac{\alpha-1}{\alpha} \left( r - I_\alpha(B_1:B_2\cdots B_L: R)_{(\mathscr{N}\otimes \id_R)(\psi)} \right).
    \end{align}
	% \begin{align}
	% \inf_{ \tau_{A_1}\in\mathcal{S}(\mathsf{A}_1), \ldots, \tau_{A_L}\in\mathcal{S}(\mathsf{A}_L) }
	% E_r(\rho_{A_1 \ldots A_L} \, \Vert\, \otimes_{\ell\in[L]} \tau_{A_\ell}  )
	% = E_{{r}}(A_1 : \cdots : A_L)_\rho .
	% \end{align}
	
	\item\label{item:limiting_exponnet} (Limiting behavior)
    Provided that the underlying Hilbert spaces are all finite dimensional, then    
	for any sequence $r_n := I\left(\mathscr{N}_{A\to B_1 B_2\ldots B_L}\right) + a_n$ satisfying  
    $a_n \downarrow 0$,
	we have
    \begin{align}
	\liminf_{n\to \infty}\frac{E_{r_n}(\mathscr{N})}{a_n^2} 
	&\geq \frac{1}{2V(\mathscr{N}_{A\to B_1 B_2\ldots B_L})}.
    \end{align}

\end{enumerate}

\end{proposition}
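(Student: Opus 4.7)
The four items are consequences of the properties collected in Proposition~\ref{fact:properties_sandwiched} together with a minimax theorem, and I would dispatch them in the listed order.

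Parts (i) and (ii) are routine. For \emph{(i)}, the map $\alpha\mapsto\tfrac{\alpha-1}{\alpha}(r-I_\alpha)$ vanishes at $\alpha=1$, and by the monotone limit in Proposition~\ref{fact:properties_sandwiched}\eqref{item:limiting_sandwiched} combined with continuity of $I_\alpha$ at $\alpha=1^+$ it becomes strictly positive for some $\alpha>1$ precisely when $r>I_1$; otherwise $r\leq I_1\leq I_\alpha$ for all $\alpha\geq 1$, so every term under the supremum is $\leq 0$ and $E_r=0$. The same argument works in the channel formulation with $I_\alpha(\mathscr{N})$ replacing $I_\alpha(\rho\,\Vert\otimes\tau)$. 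For \emph{(ii)}, substituting the tensor additivity from Proposition~\ref{fact:properties_sandwiched}\eqref{item:additivity_sandwiched} into the Fenchel--Legendre definition and pulling the factor $n$ outside the supremum gives the claim.

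For \emph{(iii)} I would apply a classical minimax theorem to the saddle function
\begin{equation*}
F(\alpha,\psi)\;:=\;\tfrac{\alpha-1}{\alpha}\bigl(r - I_\alpha(B_1\!:\!\cdots\!:\!B_L\!:\!R)_{(\mathscr{N}\otimes\id_R)(\psi)}\bigr),\qquad (\alpha,\psi)\in[1,2]\times\mathcal{S}(\mathsf{A}\otimes\mathsf{R}).
\end{equation*}
Rewriting $F=\tfrac{\alpha-1}{\alpha}r+\tfrac{1-\alpha}{\alpha}I_\alpha$ and invoking Proposition~\ref{fact:properties_sandwiched}\eqref{item:concavity_sandwiched} gives concavity and upper semi-continuity in $\alpha$; combining Proposition~\ref{fact:properties_sandwiched}\eqref{item:concavity_p_sandwiched} with the nonnegativity of $\tfrac{\alpha-1}{\alpha}$ on $[1,2]$ gives convexity and lower semi-continuity in $\psi$. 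Sion's (or Rockafellar's~\cite[\S36]{Roc70}) minimax theorem then yields the swap, and finite-dimensional compactness of both factors secures existence of a saddle point; the outer $\sup_\alpha$ on the $\sup\inf$ side is $E_r(\mathscr{N})$ by definition of $I_\alpha(\mathscr{N})$.

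\emph{Part (iv)} is the main technical step. Since $\tfrac{d}{d\alpha}D_\alpha(\rho\,\Vert\,\sigma)\big|_{\alpha=1}=V(\rho\,\Vert\,\sigma)/2$, a Danskin/envelope argument on the compact set of $I$-optimizers (using finite dimensionality and continuity of $\psi\mapsto V(\psi)$) yields the one-sided Taylor expansion
\begin{equation*}
I_\alpha(\mathscr{N})\;=\;I(\mathscr{N})+\tfrac{\alpha-1}{2}V(\mathscr{N})+o(\alpha-1),\qquad\alpha\to 1^+,
\end{equation*}
with $V(\mathscr{N})$ as in~\eqref{eq:V}. Substituting $r_n=I(\mathscr{N})+a_n$ into the lower bound $E_{r_n}(\mathscr{N})\geq\tfrac{\alpha-1}{\alpha}(r_n-I_\alpha(\mathscr{N}))$ and choosing $\alpha_n=1+a_n/V(\mathscr{N})\in[1,2]$ (valid for large $n$ since $a_n\downarrow 0$) yields $E_{r_n}(\mathscr{N})\geq \tfrac{a_n^2}{2V(\mathscr{N})}(1-o(1))$, and dividing by $a_n^2$ and taking $\liminf$ gives the claim. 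The delicate point is the uniformity of the remainder along $\alpha_n\to 1$; I would handle this through the saddle-point representation from (iii), which confines the effective $\psi$ to a compact set and lets us pass to a convergent subsequence whose limit is an $I$-optimizer attaining $V(\mathscr{N})$.
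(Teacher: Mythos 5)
Your treatment of items (i), (ii), and (iii) mirrors the paper's proof precisely: (i)--(ii) are read off from the monotone limit and additivity in Proposition~\ref{fact:properties_sandwiched}, and (iii) invokes \cite[Theorem 36.3]{Roc70} after checking upper semi-continuous concavity in $\alpha$ and lower semi-continuous convexity in $\rho_A$ exactly as you do. Item (iv) is where you diverge, and your route is genuinely different and arguably cleaner. You expand $I_\alpha(\mathscr{N})$ \emph{itself} via a Danskin/envelope argument to get $I_\alpha(\mathscr{N})=I(\mathscr{N})+\tfrac{\alpha-1}{2}V(\mathscr{N})+o(\alpha-1)$, and then simply lower-bound $E_{r_n}$ by substituting $\alpha_n=1+a_n/V(\mathscr{N})$. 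The paper, by contrast, keeps the $n$-dependent saddle-point state $\psi_n$ from (iii), relaxes the sandwiched to the Petz R\'enyi information via $D_\alpha\le\bar D_\alpha$ to gain a tractable closed form, performs an explicit second-order Taylor of $\alpha\mapsto\bar I_\alpha(B\!:\!R)_{\mathscr{N}(\psi_n)}$ with a Lagrange remainder bounded by a uniform constant $\Upsilon$, and only at the very end uses $\psi_n\to\psi_\infty$, continuity of $\psi\mapsto V(\psi)$, and $V(\psi_\infty)\le V(\mathscr{N})$ to conclude. Your approach eliminates the sequence of input states entirely: once the envelope expansion of $I_\alpha(\mathscr{N})$ is in hand, the remainder depends only on the scalar $\alpha$, so there is no uniformity-in-$\psi$ issue at all --- which means the ``delicate point'' you flag at the end is actually a non-issue in your framing (it is precisely the issue the paper faces and resolves with $\Upsilon$). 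What your route does require, and you should spell out, is that the envelope argument survives the $\sup_\psi\inf_\sigma$ structure of $I_\alpha(\mathscr{N})$; the clean way is to first upper-bound $I_\alpha(\psi)\le D_\alpha(\mathscr{N}(\psi)\,\Vert\,\mathscr{N}(\psi_A)\otimes\rho_R)$ (fixing $\sigma_R=\rho_R$, the $\alpha=1$ minimizer), since this gives the same value and derivative at $\alpha=1$ and removes the inner infimum before Danskin is applied to the outer supremum over the compact set of pure states. One last small correction: your closing sentence says the subsequential limit is an $I$-optimizer ``attaining $V(\mathscr{N})$.'' That is neither needed nor generally true --- the limit is an $I$-optimizer whose variance is merely $\le V(\mathscr{N})$, which is exactly the direction of the inequality that helps you (and is how the paper closes its argument).
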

\noindent We delay the proof to Appendix~{\ref{sec:lemmas}}.

\begin{remark}
    Proposition~\ref{fact:properties_exponent}-\ref{item:minimax_exponnet} indicates that the error exponent of channel $E_r(\mathscr{N}_{A\to B_1 B_2 \ldots B_L})$ can be viewed as the error exponent $E_r(B_1: B_2: \cdots B_L: R)_{(\mathscr{N}\otimes \id_R)(\psi)}$
    for the output state with respect to the worst-case input $\psi_{AR}$.
\end{remark}

% \begin{proof}
% {\color{red} Item~\ref{item:minimax_exponnet} requires a proof.}

% Note that the map
% \begin{align}
% (\tau_{A_1}, \ldots, \tau_{A_L}, \alpha) \mapsto
% \frac{\alpha-1}{\alpha} \left( r - I_\alpha(\rho_{A_1  \ldots A_L} \,\Vert\, \otimes_{\ell \in [L]} \tau_{A_\ell} ) \right)
% \end{align}
% is  convex in $\tau_{A_1}\in\mathcal{S}(\mathsf{A}_1), \ldots, \tau_{A_L}\in\mathcal{S}(\mathsf{A}_L)$ by Proposition~\ref{fact:properties_sandwiched}-\ref{item:convexity_sandwiched} and concave in $\alpha \in (1,\infty)$ by Proposition~\ref{fact:properties_sandwiched}-\ref{item:concavity_sandwiched}. Hence, the assertion follows from Sion's minimax theorem.

% {\color{red} Still needs lower/upper semi-continuity.}

%\begin{align}
%\inf_{ \tau_{A_1}\in\mathcal{S}(\mathsf{A}_1), \ldots, \tau_{A_L}\in\mathcal{S}(\mathsf{A}_L) }
%\sup_{\alpha \in [1,2]} \frac{\alpha-1}{\alpha} \left( r - I_\alpha(\rho_{AE} \,\Vert\, \tau_A ) \right)
%\end{align}

% \end{proof}

%%%%%%%%%%%%%%%%%%%%%%%%%%%%%%%%

\section{Convex Splitting} \label{sec:convex_splitting}

In Ref.~\cite{CG23}, part of the authors established a one-shot error-exponent bound for \emph{unipartite} convex splitting, i.e.~for any density operators $\rho_{AE} \in \mathcal{S}(\mathsf{A\otimes E})$ and $\tau_{A} \in \mathcal{S}(\mathsf{A})$, and $M\in \mathbb{N}$,
\begin{align} \label{eq:error_unipartite_convex_splitting}
		% \begin{split}
		% \Delta_{M,K}(\rho_{AE} \,\Vert\, \tau_{A}) 
  \frac12\left\| \frac1M \sum\nolimits_{m\in[M]} \rho_{A_m E} \bigotimes\nolimits_{\bar{m}\in[M]\setminus \{m\}} \tau_{A_{\bar{m}}}
  - \tau_{A}^{\otimes M} \otimes \rho_E \right\|_1
    &\leq  2^{- E_{\log M}(\rho_{AE} \,\Vert\, \tau_A )},
		% \end{split}
	\end{align}
where $\rho_{A_m E} = \rho_{AE}$ and $\tau_{A_m} = \tau_A$ for all $m\in[M]$, and the error-exponent function $E_{\log M}$ is defined in Eq.~\eqref{eq:error_exponent_tau}. While this result is a neat application of complex interpolation, a straightforward generalization to the multipartite case does not give the R\'enyified quantities simultaneously. The key ingredient to bypass this difficulty is a \emph{mean-zero decomposition lemma} that will be introduced later in Lemma~\ref{lemma:tele} and Lemma~\ref{lemma:tele2}. We remark that a similar idea is independently proposed by Colomer Saus and Winter for deriving multipartite quantum decoupling theorems, termed the \emph{telescoping trick} in their work \cite{SW23}.

%%%%%%%%%%%%%%%%%%%%%%%%%%%%%%%%

\subsection{Bi-partite Convex Splitting} \label{sec:bipartite}

Let $\rho_{ABE}$ be a tripartite density operator in $\mathcal{S}(\mathsf{A\otimes B\otimes E})$, and let $\tau_{AB}$  be a bipartite density operator $\mathcal{S}(\mathsf{A\otimes B})$. Given integers $M$ and $K$, we define the density operator % \in \mathcal{S}(\mathsf{A}_1\cdots \mathsf{A}_M \mathsf{B}_1\cdots \mathsf{B}_K \mathsf{E})$ as
\begin{align} \label{eq:total_state}
	\begin{split}
	\omega_{A_1\ldots A_M B_1\ldots B_KE} := \frac{1}{MK}\sum_{(m,k)\in[M]\times[K]}
	\rho_{A_m B_kE} &\otimes \tau_{A_1} \otimes \tau_{A_2} \cdots \otimes \tau_{A_{m-1}} \otimes \tau_{A_{m+1}} \otimes \cdots \otimes \tau_{A_M}
	\\
	&\otimes \tau_{B_1} \otimes \tau_{B_2} \cdots \otimes \tau_{B_{k-1}} \otimes \tau_{B_{k+1}} \otimes \cdots \otimes \tau_{B_K},
%	\in \mathcal{S}(\mathsf{A_1\cdots A_M B_1\cdots B_K})
	\end{split}
\end{align}
where for each $m\in[M] $ and $k\in[K]$, we have the systems $A_m\cong A$, $B_k\cong B$ and the states
$\rho_{A_m B_k} = \rho_{AB}$, $\tau_{A_m} = \tau_A$, and $\tau_{B_k} = \tau_B$.
%Without loss of generality, we will further assume that the density operator $\tau_A$ is invertible; otherwise, we can always restrict the associated Hilbert space $\mathsf{A}$ to the support of $\tau_A$.
We use trace distance as the error criterion for bipartite convex splitting,
\begin{align}
	\begin{split} \label{eq:covering}
		\Delta_{M,K}(\rho_{ABE} \,\Vert\, \tau_{AB}) &:=
		\frac12\left\| \omega_{A_1\ldots A_M B_1\ldots B_KE} - \tau_{A}^{\otimes M} \otimes \tau_{B}^{\otimes K}\otimes \rho_E \right\|_1.
	\end{split}
\end{align}

Given $p\geq 1$, recall that the \emph{Kosaki's weighted $L_p$-norm} with respective to a density operator $\sigma$ \cite{Kos84, JX03}  is defined as,
\begin{align}
	\begin{split} \label{eq:norm}
	\left\|X\right\|_{p,\sigma} &:= \left( \Tr\left[ \left| \sigma^{\frac{1}{2p}} X \sigma^{\frac{1}{2p}} \right|^p \right] \right)^{\sfrac1p}
	\end{split}
\end{align}
and the associated \emph{noncommutative weighted $L_p$-space} is denoted as $L_p(\sigma)=
	 \left\{ X: \|X\|_{p, \sigma} < \infty  \right\}$.
%and the following \emph{noncommutative weighted $L_p$-space} with respective to $\rho_A\otimes \sigma_B$ as the set of operators with finite noncommutative weighted $L_p$-norm, i.e.~
%\begin{align}
%	L_{p, \rho_A\otimes \sigma_B} := \left\{ X: \|X\|_{p, \rho_A\otimes \sigma_B} < \infty  \right\}.
%\end{align}
For two positive operators $X$ and $Y$ %with $\mathtt{supp}(X)\subseteq\mathtt{supp}(Y)$,
we introduce the notation of \emph{non-commutative quotient for $X$ over $Y$} as\footnote{If $Y$ is not invertible, we then take the Moore--Penrose pseudo-inverse of $Y$ instead.}
\begin{align} \label{eq:quotient}
	\frac{X}{Y} := Y^{-\frac{1}{2}} X Y^{-\frac{1}{2}}.
\end{align}

Our approach (as in the unipartite convex splitting \cite{CG23}) is to formulate the error $\Delta_M\left( \rho_{ABE} \, \Vert\, \tau_{AB}\right)$ as the weighted $L_p$ norm and estimate it via complex interpolation. We refer to  \cite[Appendix B]{CG23} for a minimal introduction of complex interpolation needed for this paper and the readers are referred to  \cite{BL76} for more information on this topic . 
We start with rewriting the error quantity in Eq.~\eqref{eq:covering} using non-commutative $L_p$-norm.
Given a density operator $\tau_{AB}=\tau_A\ten \tau_B$, we define the $\tau$-preserving conditional expectation 
%and positive semi-definite operators $X_{AB} \geq 0$ and:
\begin{align}
	\label{eq:Theta}
&\mathrm{E}_{A} : \mathcal{B}(\mathsf{A})\to \mathcal{B}(\mathsf{A}),
\;X_{A} \mapsto \Tr_{A}\left[ \tau_{A} X_{A}  \right]\cdot \mathds{1}_{A}; \nonumber \\
	&\mathrm{E}_{B} :  \mathcal{B}(\mathsf{B})\to \mathcal{B}(\mathsf{B}), 
 \;X_{B} \mapsto \Tr_{B}\left[ \tau_{B} X_{B}  \right]\cdot \mathds{1}_{B}; \nonumber\\
&\mathrm{E}_{AB}:=\mathrm{E}_A\ten \mathrm{E}_B : \mathcal{B}(\mathsf{A\otimes B})\to \mathcal{B}(\mathsf{A\otimes B}).
\end{align}
The following map is an key object in the bipartite analysis.
\begin{align} 
\begin{split} \label{eq:tele}
&T: \mathcal{B}(\mathsf{A\otimes B})\to \mathcal{B}(\mathsf{A\otimes B}) \ , \ T:=\id_{AB}- \mathrm{E}_A-\mathrm{E}_B+\mathrm{E}_{AB}.
\end{split}
\end{align}
Here and in the following, $\mathrm{E}_A$ can be interpreted as $\mathrm{E}_A\ten \id_B$ and similar for $\mathrm{E}_B$. 
We will shortly see how the map $T$ comes into play. Let us first state the following lemma that estimates its norm between $L_p$-spaces.

\begin{lemma} \label{lemma:tele}
Let $\tau=\tau_{A}\ten \tau_{B}\ten \tau_{E} \in \mathcal{S}(\mathsf{A}\otimes \mathsf{B}\otimes \mathsf{E})$ be a tripartite product density operator and a map $T$ introduced in Eq.~\eqref{eq:tele}. Then
\[ \norm{T: L_1(\tau_{ABE})\to L_1(\tau_{ABE})}{}\le 4\pl, \pl \norm{T: L_2(\tau_{ABE})\to L_2(\tau_{ABE})}{}\le 2.\]
\end{lemma}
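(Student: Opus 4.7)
The plan is to identify $\mathrm{E}_A$, $\mathrm{E}_B$, and $\mathrm{E}_{AB}$ (each tensored implicitly with $\id_E$ on the $E$-factor) as $\tau$-preserving conditional expectations. For instance, $\mathrm{E}_A(X_A)=\Tr_A[\tau_A X_A]\cdot\mathds{1}_A$ is idempotent (using $\Tr[\tau_A]=1$ one gets $\mathrm{E}_A(\mathrm{E}_A(X))=\Tr[\tau_A X]\Tr[\tau_A]\,\mathds{1}_A=\mathrm{E}_A(X)$) and state-preserving ($\Tr[\tau_A\mathrm{E}_A(X)]=\Tr[\tau_A X]$), with image $\mathds{C}\mathds{1}_A$; it is therefore the $\tau_A$-preserving conditional expectation from $\mathcal{B}(\mathsf{A})$ onto the scalar subalgebra, and analogously for $\mathrm{E}_B$, while $\mathrm{E}_{AB}=\mathrm{E}_A\ten \mathrm{E}_B$ is the conditional expectation onto $\mathds{C}\mathds{1}_{AB}$. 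By the standard noncommutative $L_p$-theory of Kosaki's weighted spaces (cf.~\cite{Kos84,JX03}), any $\tau$-preserving conditional expectation is a contraction on every $L_p(\tau_{ABE})$ for $p\in[1,\infty]$. Moreover, on $L_2(\tau_{ABE})$ equipped with the KMS inner product $\langle X,Y\rangle=\Tr[\tau_{ABE}^{1/2}X^*\tau_{ABE}^{1/2}Y]$ compatible with Eq.~\eqref{eq:norm}, each of these conditional expectations is self-adjoint (its range is contained in the modular fixed-point algebra of $\tau_{ABE}$), hence an \emph{orthogonal} projection.

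The $L_1$ bound is then a one-line triangle inequality applied to the four-term defining expression of $T$: each of the maps $\id_{AB},\mathrm{E}_A,\mathrm{E}_B,\mathrm{E}_{AB}$ has $L_1(\tau_{ABE})$-operator norm at most one, so
\[
\|T:L_1(\tau_{ABE})\to L_1(\tau_{ABE})\|\le 1+1+1+1=4.
\]

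For the $L_2$ bound the key algebraic observation is that $\mathrm{E}_A$ and $\mathrm{E}_B$ commute and satisfy $\mathrm{E}_A\mathrm{E}_B=\mathrm{E}_B\mathrm{E}_A=\mathrm{E}_{AB}$, because they act on disjoint tensor factors. This gives the factorization
\[
T=(\id_{AB}-\mathrm{E}_A)(\id_{AB}-\mathrm{E}_B)=(\id_{AB}-\mathrm{E}_A)-\mathrm{E}_B(\id_{AB}-\mathrm{E}_A),
\]
where in the second equality we used $(\id_{AB}-\mathrm{E}_A)\mathrm{E}_B=\mathrm{E}_B-\mathrm{E}_{AB}=\mathrm{E}_B(\id_{AB}-\mathrm{E}_A)$. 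On $L_2(\tau_{ABE})$, the map $\id_{AB}-\mathrm{E}_A$ is an orthogonal projection (onto the orthogonal complement of the range of $\mathrm{E}_A$) and hence has norm at most one, and $\mathrm{E}_B$ is a contraction; one more application of the triangle inequality thus gives $\|T:L_2(\tau_{ABE})\to L_2(\tau_{ABE})\|\le 1+1\cdot 1=2$.

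The only non-routine conceptual ingredient is the identification of the somewhat unusually normalized maps $\mathrm{E}_A,\mathrm{E}_B,\mathrm{E}_{AB}$ as \emph{genuine} $\tau$-preserving conditional expectations (as opposed to merely completely positive maps), together with the orthogonality statement on $L_2$ via the KMS inner product; once that is in place, both bounds reduce to triangle-inequality manipulations and the algebraic identity $\mathrm{E}_A\mathrm{E}_B=\mathrm{E}_{AB}$ that yields the factorization of $T$.
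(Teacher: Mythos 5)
Your proof is correct and follows essentially the same route as the paper's: identify $\mathrm{E}_A,\mathrm{E}_B,\mathrm{E}_{AB}$ as $\tau$-preserving conditional expectations, obtain the $L_1$ bound by a four-term triangle inequality once contractivity of conditional expectations on $L_1(\tau)$ is in hand, and obtain the $L_2$ bound from the factorization $T=(\id-\mathrm{E}_A)(\id-\mathrm{E}_B)=(\id-\mathrm{E}_A)-\mathrm{E}_B(\id-\mathrm{E}_A)$ together with orthogonal-projection bounds and one more triangle inequality. The only cosmetic difference is that for $L_1$ the paper spells out the duality $L_1(\tau)^*=L_\infty(\tau)$ (unital CP $\Rightarrow$ contraction on $L_\infty$, plus self-adjointness for the $\tau$-inner product) whereas you appeal to the standard contractivity fact directly.
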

\begin{proof}We first argue for $L_1$. Note that $\mathrm{E}_A \mathrm{E}_B= \mathrm{E}_B \mathrm{E}_A= \mathrm{E}_{AB}$. By triangle inequality, it suffices to show both $\mathrm{E}_A$ and $\mathrm{E}_B$ are contraction. Indeed, consider the duality
$L_1(\tau)^*=L_\infty(\tau)$, where the pairing is given the $\tau$-inner product $$
\lan X,Y \ran_\tau=\Tr(X\tau^{1/2}Y\tau^{1/2})\pl .$$
Note that $\mathrm{E}_A$ is completely positive unital map, hence a contraction on $L_\infty(\tau)\cong \mathcal{B}(\mathsf{A\otimes B\otimes E})$ (equipped with usual operator norm). Also, as $\mathrm{E}_A$ is self-adjoint for $\tau$-inner product. Then
\[ \norm{ \mathrm{E}_A:L_1(\tau)\to L_1(\tau)}{}=\norm{ \mathrm{E}_A:=L_\infty(\tau)\to L_\infty(\tau)}{}=1. \]
The same argument applies for $\mathrm{E}_B$. Thus we prove that $\norm{T: L_1\to L_1}{}\le 4$ by triangle inequality.

For $L_2$, we note that $\mathrm{E}_A$ is a $\tau$-preserving conditional expectation. That is, 
$\Tr[\tau \mathrm{E}_A(X)]=\Tr[\tau X]$ for all $X\in\mathcal{B}(\mathsf{A}\ten \mathsf{B}\ten \mathsf{E})$,
and $\mathrm{E}_A^2=\mathrm{E}_A$ is a unital complete positive idempotent. In particular, $\mathrm{E}_A$ is a projection on $L_2(\tau)$. As $\mathrm{E}_B$ is a projection commute with $\mathrm{E}_B$, we have 
\[  X-\mathrm{E}_A(X)-\mathrm{E}_B(X)+\mathrm{E}_{AB}(X)=(X-\mathrm{E}_A(X))-\mathrm{E}_B(X-\mathrm{E}_{A}(X))\ .\]
Hence
\[ \norm{X}{L_2(\tau)}\ge \norm{X-\mathrm{E}_A(X)}{L_2(\tau)}\ge \norm{\mathrm{E}_B(X)-\mathrm{E}_{AB}(X)}{L_2(\tau)}.\]
The second assertion follows from triangle inequality for $L_2(\tau)$.
\end{proof}

In the following, we use the short notation $A^M:=A_1\cdots A_M\cong A^{\ten M}$ and $B^K:=B_1\cdots B_K\cong B^{\ten K}$.
Given $(m,k)\in [M]\times [K]$, we define the following maps
\begin{align*}
	\pi_{m,k}: &\;\mathcal{B}(\mathsf{A} \otimes \mathsf{B})\to  \mathcal{B}(\mathsf{A}^{M} \otimes \mathsf{B}^{K})\\
&\;\;X_{AB} \mapsto \mathds{1}_{A_1 B_1}\otimes \cdots \otimes \mathds{1}_{A_{m-1} B_{k-1}} \otimes  X_{A_m B_k} \otimes \mathds{1}_{A_{m+1} B_{k+1}} \otimes \cdots \otimes \mathds{1}_{A_M B_K};\\
\mathrm{E}_{A_m} : &\;X_{A_m} \mapsto \Tr_{A_m}\left[ \rho_{A_m} X_{A_m}  \right]\cdot \mathds{1}_{A_m};\\
	\mathrm{E}_{B_k} : &\;X_{B_k} \mapsto \Tr_{B_k}\left[ \rho_{B_k} X_{B_k}  \right]\cdot \mathds{1}_{B_k};\\
	\Theta:=
	&\;\frac{1}{MK}\sum_{(m,k)\in[M]\times [K]} \pi_{m,k}
\end{align*}
Here, the condition expectation $\mathrm{E}_{A_m}$ is only acting on the ${A}_m$ system while other systems remain unchanged; similarly for $\mathrm{E}_{B_k}$. 
It is clear that
\[ \pi_{m,k}\mathrm{E}_A=\mathrm{E}_{A_m}\pi_{m,k}\ , \pi_{m,k}\mathrm{E}_B=\mathrm{E}_{B_k}\pi_{m,k}\ ,  \pi_{m,k}\mathrm{E}_{AB}=\mathrm{E}_{A_mB_k}\pi_{m,k} \pl.\]
Take $\tau_{ABE}=\tau_A\ten \tau_B\ten \sigma_E$ for an arbitrary density $\sigma_E\in \mathcal{S}(\mathsf{E})$. We have
\begin{align}
		\Delta_{M,K}\left( \rho_{ABE}\,\Vert\,\tau_{AB} \right) &=
		\frac12\left\| \omega_{A_1\ldots A_M B_1\ldots B_KE} - \tau_{A}^{\otimes M} \otimes \tau_{B}^{\otimes K}\otimes \rho_E \right\|_1
\\ 
&= \frac12\left\| \frac{\omega_{A_1\ldots A_M B_1\ldots B_KE}}{\tau_{A^MB^KE}} - \mathds{1}_{A^MB^K}\ten \frac{\rho_E}{\sigma_E} \right\|_{L_1(\tau_{A^MB^KE})}
\\ 
&= \frac12\left\| \frac{1}{MK}\sum_{m,k}\pi_{m,k}\left(\frac{\rho_{ABE}}{\tau_A\ten \tau_B\ten \sigma_E}\right) -\mathds{1}_{A^MB^K}\ten\frac{\rho_E}{\sigma_E} \right\|_{L_1(\tau_{A^MB^KE})}
		\\ 
  &= \frac12\left\| \frac{1}{MK}\sum_{m,k}\pi_{m,k}\left(\frac{\rho_{ABE}}{\tau_A\ten \tau_B\ten \sigma_E}-\mathds{1}_{AB}\ten\frac{\rho_E}{\sigma_E}\right)  \right\|_{L_1(\tau_{A^MB^KE})}
\\ 
&= \frac12\left\| \frac{1}{MK}\sum_{m,k}\pi_{m,k}\circ (\id_{AB}-\mathrm{E}_{AB}) \left(\frac{\rho_{ABE}}{\tau_A\ten \tau_B\ten \sigma_E}\right)  \right\|_{L_1(\tau_{A^MB^KE})}
\\ 
&= \frac12\left\| \Theta\circ (\id_{AB}-\mathrm{E}_{AB}) \left(\frac{\rho_{ABE}}{\tau_A\ten \tau_B\ten \sigma_E}\right)  \right\|_{L_1(\tau_{A^MB^KE})}. \label{eq:rewrite}
\end{align}

A key observation is that, for $X\equiv X_{ABE}=\frac{\rho_{ABE}}{\tau_A\ten \tau_B\ten \sigma_E}$, we can decompose in Eq.~\eqref{eq:rewrite} into the following three terms:
\begin{align} \label{eq:decomposition}
(\id_{AB}-\mathrm{E}_{AB})(X)=X-\mathrm{E}_{AB}(X)=(X-\mathrm{E}_AX-\mathrm{E}_BX+\mathrm{E}_{AB}X)+(\mathrm{E}_AX-\mathrm{E}_{AB}X)+(\mathrm{E}_BX-\mathrm{E}_{AB}X)\pl.
\end{align}
Let us evaluate one of them:
\begin{align*}
&\left\| \Theta(\mathrm{E}_AX-\mathrm{E}_{AB}X)  \right\|_{L_1(\tau_{A^MB^KE})}
\\ 
&=\norm{\Theta\left( \mathds{1}_{A}\ten \frac{\rho_{BE}}{\tau_B\ten \sigma_E}-\mathds{1}_{AB}\ten \frac{\rho_{E}}{ \sigma_E}\right) }{L_1(\tau_{A^MB^KE})}
\\
&= \norm{ \frac{1}{MK}\sum_{m,k} 1_{A^M}\ten \frac{\rho_{B_kE}}{\tau_B\ten \sigma_E}\otimes 1_{B_1\cdots B_{k-1}B_{k+1}\cdots B_K}  - \mathds{1}_{AB} \ten \frac{\rho_E}{\sigma_E}}{L_1(\tau_{A^MB^KE})}
\\
&=\norm{ \frac{1}{K}\sum_{k}  \tau_{A^M}\otimes \rho_{B_kE}\otimes \tau_{B_1} \otimes \tau_{B_2} \cdots \otimes \tau_{B_{k-1}} \otimes \tau_{B_{k+1}} \otimes \cdots \otimes \tau_{B_K} - \tau_{A^MB^K} \ten \rho_E}{1}
\\
&=\norm{ \tau_{A^M}}{1}\cdot \norm{ \frac{1}{K}\sum_{k}  \rho_{B_kE}\otimes \tau_{B_1} \otimes \tau_{B_2} \cdots \otimes \tau_{B_{k-1}} \otimes \tau_{B_{k+1}} \otimes \cdots \otimes \tau_{B_K} - \tau_{B^K} \ten \rho_E}{1}
\\
&= 2\Delta_{K}\left( \rho_{BE} \,\Vert\, \tau_B\right).
\end{align*}
which is exactly the unipartite  convex-splitting error for density operators $\rho_{BE}$ and $\tau_B$. 
Similarly, 
\begin{align*}
\left\| \Theta(\mathrm{E}_BX-\mathrm{E}_{AB}X)  \right\|_{L_1(\tau_{A^MB^KE})}
&=\Delta_{K}\left( \rho_{AE} \, \Vert \, \tau_A\right).
\end{align*}
Given the unipartite convex-split result in Eq.~\eqref{eq:error_unipartite_convex_splitting}, it suffices to deal with the term
\[ \Theta(X-\mathrm{E}_AX-\mathrm{E}_BX+\mathrm{E}_{AB}X)=\Theta\circ T(X) \pl. \]

Using Lemma~\ref{lemma:tele} for estimating the norm of map $T$, we obtain a key technical lemma to bound the norm $\Theta\circ T$ as follows.
\begin{lemma}[Map norm of bipartite biconvex splitting] \label{lemma:key}
	Given any $M,K\in \mathds{N}$ and density operator $\tau_{ABE}=\tau_A\ten \tau_B\ten \sigma_E$, let $\Theta$ and $T$ be the map defined as in Eq.~\eqref{eq:Theta}. Then, for every density operator $\sigma_E$, we have
	\begin{align}
		\left\| \Theta\circ T : L_{p}(\tau_{ABE}) \to L_{p}(\tau_{A^MB^KE}) \right\| \leq  2^{\frac{3}{p}-1} (MK)^{ \frac{1-p}{p} }, \quad \forall \, p\in[1,2].
	\end{align}
\end{lemma}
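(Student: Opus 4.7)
The plan is to apply complex (Riesz--Thorin) interpolation between the endpoints $p=1$ and $p=2$, exploiting the mean-zero properties of $T$ to gain a factor of order $(MK)^{-1/2}$ at the $L_2$ endpoint that is completely invisible in $L_1$. The key conceptual point is that the crude $L_1$ estimate only knows $\|T\|_{L_1\to L_1}\le 4$ and $\|\Theta\|_{L_1\to L_1}\le 1$, whereas in $L_2$ the images $\pi_{m,k}(T(X))$ for distinct $(m,k)$ are orthogonal, so the sum telescopes.

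At $p=1$, I first observe that each embedding $\pi_{m,k}$ is an $L_1(\tau)$-isometry: because $\tau=\tau_{A}^{\otimes M}\otimes\tau_{B}^{\otimes K}\otimes\sigma_E$ is product, the $\tau_{A_i}$- and $\tau_{B_j}$-weighted traces on all slots outside $\{A_m,B_k\}$ contribute unit factors. Consequently $\Theta=\frac{1}{MK}\sum_{m,k}\pi_{m,k}$ is an $L_1$-contraction by the triangle inequality, and combined with $\|T\|_{L_1\to L_1}\le 4$ from Lemma~\ref{lemma:tele} this yields $\|\Theta\circ T\|_{L_1\to L_1}\le 4$.

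At $p=2$, the crucial step is to exploit the mean-zero properties $\mathrm{E}_A T(X)=0=\mathrm{E}_B T(X)$, which follow immediately from the factorisation $T=(\id_{AB}-\mathrm{E}_A)(\id_{AB}-\mathrm{E}_B)$ together with the idempotency and commutativity of the $\tau$-preserving conditional expectations $\mathrm{E}_A,\mathrm{E}_B$ (in particular $T$ is a product of commuting orthogonal projections on $L_2(\tau)$, so $\|T\|_{L_2\to L_2}\le 1\le \sqrt{2}$). Expanding the norm squared,
\[
\|\Theta T(X)\|_{L_2(\tau_{A^MB^KE})}^2=\frac{1}{(MK)^2}\sum_{(m,k),(m',k')}\langle \pi_{m,k}T(X),\,\pi_{m',k'}T(X)\rangle_{L_2(\tau_{A^MB^KE})},
\]
I would show that every off-diagonal term vanishes: if $m\neq m'$, then the $L_2(\tau)$ inner product factorises over the tensor structure of $\tau$, and the $A_{m'}$-slot contributes the partial trace $\Tr_{A_{m'}}[\tau_{A_{m'}}T(X)_{A_{m'}B_{k'}E}]$, which equals zero by $\mathrm{E}_A T(X)=0$; the case $k\neq k'$ is handled symmetrically via $\mathrm{E}_B T(X)=0$. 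Hence only the $MK$ diagonal terms survive and
\[
\|\Theta T(X)\|_{L_2}^2=\frac{1}{MK}\|T(X)\|_{L_2(\tau_{ABE})}^2\le \frac{2}{MK}\|X\|_{L_2(\tau_{ABE})}^2,
\]
giving $\|\Theta\circ T\|_{L_2\to L_2}\le \sqrt{2}\,(MK)^{-1/2}$.

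Finally, Riesz--Thorin interpolation between the endpoints $(p_0,M_0)=(1,4)$ and $(p_1,M_1)=(2,\sqrt{2}(MK)^{-1/2})$, parametrised by $\theta\in[0,1]$ with $\tfrac{1}{p}=1-\tfrac{\theta}{2}$, i.e.~$\theta=2-2/p$, gives
\[
\|\Theta\circ T\|_{L_p\to L_p}\le 4^{1-\theta}\bigl(\sqrt{2}\,(MK)^{-1/2}\bigr)^\theta=2^{3/p-1}(MK)^{(1-p)/p},
\]
as claimed. The main technical obstacle is verifying the orthogonality claim cleanly at the $L_2$ endpoint: one must unfold the tensor product $\tau^{1/2}=\bigotimes_i\tau_{A_i}^{1/2}\bigotimes_j\tau_{B_j}^{1/2}\,\sigma_E^{1/2}$, confirm that slots outside $\{A_m,A_{m'},B_k,B_{k'}\}$ contribute unit factors, and that on a ``partially active'' slot like $A_{m'}$ (identity on one side, $T(X)$ on the other) the resulting partial trace genuinely reduces to $\Tr_A[\tau_A T(X)]=0$; the shared reference system $E$ is harmless because it appears in the same configuration on both sides of the inner product and is simply carried along.
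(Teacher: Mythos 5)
Your proof is correct and follows the same overall strategy as the paper's argument: verify that each $\pi_{m,k}$ is an $L_p$-isometry, deduce $\|\Theta\|_{L_1\to L_1}\le 1$ by the triangle inequality, show that the vectors $\pi_{m,k}(T(X))$ for distinct indices $(m,k)$ are mutually orthogonal in $L_2(\tau)$ using the mean-zero property $\mathrm{E}_A T(X)=\mathrm{E}_B T(X)=0$, and then close with Riesz--Thorin interpolation between $p=1$ and $p=2$. Your orthogonality argument (trace the ``passive'' $A_{m'}$-slot, which collapses to the conditional expectation $\mathrm{E}_A$ applied to $T(X)$ and hence vanishes, with the symmetric case via $\mathrm{E}_B$) is the same computation the paper carries out by tracing over the $B_k,B_{k'}$ slots.

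Where you genuinely improve on the paper is the $L_2$ estimate for $T$ itself. The paper's Lemma~\ref{lemma:tele} proves only $\|T\|_{L_2\to L_2}\le 2$, by writing $T(X)=(X-\mathrm{E}_AX)-\mathrm{E}_B(X-\mathrm{E}_AX)$ and applying the triangle inequality; feeding that into the interpolation actually yields $4^{1-\theta}\bigl(2(MK)^{-1/2}\bigr)^\theta = 2^{2/p}(MK)^{(1-p)/p}$, which is strictly weaker than the stated $2^{3/p-1}(MK)^{(1-p)/p}$ for $p>1$ (at $p=2$: $2$ versus $\sqrt{2}$). Your observation that $T=(\id-\mathrm{E}_A)(\id-\mathrm{E}_B)$ is a product of two commuting orthogonal projections on $L_2(\tau)$, hence $\|T\|_{L_2\to L_2}\le 1$, is exactly what is needed to land on the stated constant. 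Your subsequent relaxation from $1$ to $\sqrt{2}$ is purely cosmetic---it is chosen so that the interpolation gives $2^{3/p-1}$ on the nose rather than the even sharper $2^{4/p-2}$ that the bound $\|T\|_{L_2\to L_2}\le 1$ would produce. In short, your proof both establishes the lemma as stated and supplies the sharper $L_2$ estimate for $T$ that the displayed interpolation step implicitly requires; you might as well keep $\|T\|_{L_2\to L_2}\le 1$ and record the stronger conclusion.
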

\begin{proof}%[Proof of Lemma~\ref{lemma:key}]
We first note that for any $p\in [1,\infty]$ and every $(m,k)\in [M]\times [K]$,
	\begin{align}
		 \pi_{m,k}: L_{p}(\tau_{ABE}) \to L_{p}(\tau_{A^MB^KE})
	\end{align}
is an isometry. Indeed,
\begin{align*} \norm{\pi_{m,k}(X_{ABE})}{L_{p}(\tau_{A^MB^KE})} &=\norm{\mathds{1}_{A^{M/\{m\} }B^{K/\{k\}}}\ten X_{A_kB_mE}}{L_{p}(\tau_{A^MB^KE})}
\\ &=\norm{ \tau_{A^{M/\{m\} }B^{K/\{k\}}}^{\frac{1}{p}}\ten \tau_{A_mB_k}^{\frac{1}{2p}}X_{A_kB_mE}\tau_{A_mB_k}^{\frac{1}{2p}}}{p}
\\ &=\norm{ \tau_{ABE}^{\frac{1}{2p}}X_{ABE}\tau_{ABE}^{\frac{1}{2p}}}{p}=\norm{X_{ABE}}{L_{p}(\tau_{ABE})}
\end{align*}
For $p=1$, by triangle inequality we have
	\begin{align}
		&\left\| \Theta: L_{1}(\tau_{ABE}) \to L_{1}(\tau_{A^MB^KE}) \right\|\leq \frac{1}{MK}\sum_{m,k} \| \pi_{m,k}:  L_{1}(\tau_{ABE}) \to L_{1}(\tau_{A^MB^KE}) \|\le 1, \label{eq:L1}
	\end{align}
and hence by Lemma \ref{lemma:tele}
\[ \left\| \Theta\circ T: L_{1}(\tau_{ABE}) \to L_{1}(\tau_{A^MB^KE}) \right\|\le 4\pl.\]
For $p=2$, we note that for every $(m,k)\neq (m',k')\in [M]\times [K]$, the range of $\pi_{m,k}\circ T$ and $\pi_{m',k'}\circ T$ are mutually orthogonal in $L_2( \tau_{A^MB^KE})$. 
	Indeed, without loss of generosity, we assume $k\neq k'$, and for any $X_{ABE},Y_{ABE}$ denote $\mathring{X}:=T(X)=X-\mathrm{E}_AX-\mathrm{E}_BX+\mathrm{E}_{AB}X$. Then 
	\begin{align}
		&\langle \pi_{m,k}(\mathring{X}), \pi_{m',k'}(\mathring{Y}) \rangle_{\tau_{A^MB^KE}} \\
&=  \lan \mathds{1}_{A^{[M]/\{m\}}B^{[K]/{k}}} \otimes \mathring{X}_{A_m B_kE}, \mathds{1}_{A^{[M]/\{m'\}}B^{[K]/{k'}}} \otimes \mathring{X}_{A_{m'} B_{k'}E}\ran_{\tau_{A^MB^KE}}\\
&=  \Tr\Big(\tau_{A^MB^KE}^{\frac{1}{2}}\mathring{X}_{A_m B_kE} \tau_{A^MB^KE}^{\frac{1}{2}}\mathring{Y}_{A_{m'} B_{k'}E}\Big)\\
&=  \Tr\Big(\tau_{A_mA_{m'}B_kB_{k'}E}^{\frac{1}{2}}\mathring{X}_{A_m B_kE} \tau_{A_mA_{m'}B_kB_{k'}E}^{\frac{1}{2}}\mathring{Y}_{A_{m'} B_{k'}E}\Big)\\
&=  \Tr\Big(\tau_{A_mA_{m'}E}^{\frac{1}{2}}\mathrm{E}_{B_{k}}(\mathring{X}_{A_mB_kE}) \tau_{A_mA_{m'}E}^{\frac{1}{2}}\mathrm{E}_{B_{k'}}(\mathring{X}_{A_{m'}B_{k'} E})\Big)\\
&= \Tr\Big(\tau_{A_mA_{m'}E}^{\frac{1}{2}}(\pi_{m',k'}\circ \mathrm{E}_B(\mathring{X})) \tau_{A_mA_{m'}E}^{\frac{1}{2}}(\pi_{m',k'}\circ \mathrm{E}_B(\mathring{Y}))\Big)
\\
&= 0
\end{align}
because $\mathrm{E}_B\mathring{X}=\mathrm{E}_B\mathring{Y}=0 $. In the second last inequality, we used the commutation relation $\mathrm{E}_{B_{k}}\pi_{m,k}=\pi_{m,k}\mathrm{E}_B$.

By the orthogonality, for any $X_{ABE}$
	\begin{align}
		\left\| \frac{1}{MK} \sum_{m,k} \pi_{m,k} \left( \mathring{X} \right) \right\|_{L_2(\tau_{A^MB^KE})}^2
&\overset{\textnormal{(a)}}{=} \frac{1}{MK^2} \sum_{m,k}  \left\| \pi_{m,k} \left( \mathring{X} \right) \right \|_{L_2(\tau_{A^MB^KE})}^2
\\
		&\overset{\textnormal{(b)}}{=} \frac{1}{MK^2} \sum_{m,k}  \left\| \mathring{X} \right\|_{L_2(\tau_{ABE})}^2\\
		&= \frac{1}{MK}  \left\| \mathring{X} \right\|_{L_2(\tau_{ABE})}^2 %\label{eq:L2}
\\
  &\overset{\textnormal{(c)}}{\le } \frac{4}{MK}  \left\| X \right\|_{L_2(\tau_{ABE})}^2 \label{eq:L2}
	\end{align}
	where (a) follows from orthogonality;
	(b) is because $\pi_{m,k}$ are isometry;
	and the last inequality (c) follows by the Lemma \ref{lemma:tele}.
	The case of general $p\in[1,2]$ follows from complex interpolation for $\frac{1}{p}=\frac{1-\theta}{1}+\frac{\theta}{2}$, 
 $\theta = \frac{2(p-1)}{p} \in [0,1]$,
	\begin{align}
		\left\| \Theta\circ T: L_{p} \to L_{p}  \right\|
		&\leq \left\| \Theta: L_{1} \to L_{1}\right\|^{1-\theta} \left\| \Theta: L_{2} \to L_{2}\right\|^\theta =  2^{\frac{3-p}{p} }(MK)^{ \frac{1-p}{p} },
	\end{align}
	which completes the proof.
\end{proof}
\begin{remark}
The above estimate holds for asymmetric case $L_{p,\gamma}(\tau)$ with norm $\norm{X}{L_p(p,\gamma,\tau)}=\left\| \tau^{\frac{1-\gamma}{p}} X\tau^{\frac{\gamma}{p}} \right\|_{p}$ for any $\gamma\in [0,1]$, although this point will not be used in our discussion.
\end{remark}

%and the following \emph{noncommutative weighted $L_p$-space} with respective to $\rho_A\otimes \sigma_B$ as the set of operators with finite noncommutative weighted $L_p$-norm, i.e.~
%\begin{align}
%	L_{p, \rho_A\otimes \sigma_B} := \left\{ X: \|X\|_{p, \rho_A\otimes \sigma_B} < \infty  \right\}.
%\end{align}

%Given $M\in\mathds{N}$ and density operators $\rho_{AB}$ and $\tau_A > 0$,
%our key construction is the following formulation that for density operator $\sigma_B$:
%\begin{align} \label{eq:formulation}
%	\left\| \omega_{A_1\ldots A_M B} - \tau_A^{\otimes n}\otimes \rho_{B} \right\|_1
%	&= \left\| \Theta\left( \frac{\rho_{AB}}{\tau_A\otimes \sigma_B} \right) \right\|_{1,\gamma, \tau_A^{\otimes M}\otimes \sigma_B}.
%\end{align}
%Note that the above identity is for any $\gamma \in [0,1]$ with corresponding noncommutative quotient $\frac{\rho_{AB}}{\tau_A\ten \sigma_B}$.
Combining with unipartite convex splitting $\Delta_M(\rho_{AE}\Vert\tau_A),\Delta_K(\rho_{BE}\Vert\tau_B)$ given in Eq.~\eqref{eq:error_unipartite_convex_splitting}, we have
\begin{shaded_theorem}[$2$-party convex splitting] \label{theorem:bipartite_convex_splitting}
	For any density operators $\rho_{ABE} \in \mathcal{S}(\mathsf{A\otimes B\otimes E})$ and $\tau_{AB} \in \mathcal{S}(\mathsf{A\otimes B})$, and $M,K\in \mathbb{N}$,  we have
    \begin{align} \label{eq:error_bipartite_convex_splitting}
		\begin{split}
		\Delta_{M,K}(\rho_{ABE} \,\Vert\, \tau_{AB}) &\leq 2\cdot \,2^{ -  E_{\log MK}(\rho_{ABE} \,\Vert\, \tau_A \otimes \tau_B ) }
        +  2^{- E_{\log M}(\rho_{AE} \,\Vert\, \tau_A )}
        +  2^{- E_{\log K}(\rho_{BE} \,\Vert\, \tau_B )},
		\end{split}
	\end{align}
    where the error exponents are defined in Eq.~\eqref{eq:error_exponent_tau}.
	% \begin{align} \label{eq:error_bipartite_convex_splitting}
	% 	\begin{split}
	% 	\Delta_{M,K}(\rho_{ABE} \,\Vert\, \tau_{AB}) &\leq 4 \,\mathrm{e}^{ - \sup_{\alpha \in [1,2]} \frac{\alpha-1}{\alpha} \left( \log MK -  I_\alpha(\rho_{ABE} \,\Vert\, \tau_A \otimes \tau_B ) \right)} \\
	% 	&+2 \,\mathrm{e}^{ - \sup_{\alpha \in [1,2]} \frac{\alpha-1}{\alpha} \left( \log M -  I_\alpha(\rho_{AE} \,\Vert\, \tau_A ) \right)}
	% 	+2 \,\mathrm{e}^{ - \sup_{\alpha \in [1,2]} \frac{\alpha-1}{\alpha} \left( \log K -  I_\alpha(\rho_{BE} \,\Vert\, \tau_B ) \right)}.
	% 	\end{split}
	% \end{align}
	% Here, the sandwiched R\'enyi divergence is defined in \eqref{eq:sandwiched},
 % \cite{MDS+13, WWY14} is defined as:
	% \begin{align}
	% D_\alpha(\rho\,\Vert\,\sigma) &:= \frac{1}{\alpha-1} \log \Tr\left[ \left( \sigma^{\frac{1-\alpha}{2\alpha}} \rho \sigma^{\frac{1-\alpha}{2\alpha}} \right)^\alpha \right],
	% \end{align}
	% and the generalized sandwiched R\'enyi information is defined in \eqref{eq:sandwiched_generalized}.
% Note that $\sigma_E$ in the three infimums above can be different.
	Moreover, the exponents are all  negative if and only if,
	\begin{align} \label{eq:region_original}
		\begin{cases}
		&\log MK > I(\rho_{ABE} \,\Vert\, \tau_A \otimes \tau_B ), \\
	&\log M > I(\rho_{AE} \,\Vert\, \tau_A ),\\
	&\log K > I(\rho_{BE} \,\Vert\, \tau_B ).\\
	\end{cases}
	\end{align}

\end{shaded_theorem}

\begin{remark}
Theorem~\ref{theorem:bipartite_convex_splitting} holds for infinite-dimensional Hilbert spaces $\mathsf{A}$, $\mathsf{B}$, and $\mathsf{E}$ as well.
\end{remark}

%%%%%%%%%%%%%%%%%%%%%%%%%%%%%%%%

\subsection{Multipartite Convex Splitting} \label{sec:multivariate}

In this section, we derive multipartite convex splitting. Following the ``pedestrian'' argument for the bipartite case, we will focus more on illustrating the mathematical structure of convex splitting.
Let $A_1,\cdots, A_L$ be $L$ systems. For each $\ell\in [L]$, we denote $A_\ell^{M_{\ell}}:=A_{\ell,1}\cdots A_{\ell,M_\ell}\cong A_\ell^{\ten M_{\ell}}$    as  $M_\ell$ copies of system $A_\ell$. Given a multipartite state $\rho_{A_1 \ldots A_L E}$, a product state $\tau_{A_1 \ldots A_L}=\tau_{A_1}\ten \cdots \ten \tau_{A_L}$, and integers $M_1$, $M_2$, $\ldots$, $M_L$, we define 
\begin{align} \label{eq:total_state_multipartite}
	\omega_{A^{M_1} \cdots A^{M_L} E} := \frac{1}{M_1 \cdots M_L}\sum_{m_\ell \in [M_\ell], \, \ell\in[L]}
	\rho_{A_{ 1, m_1} \cdots A_{ L, m_L} E} \left(\bigotimes_{  \bar{m}_{\ell} \in [M_{\ell}]/\{m_\ell\}, \,
		\ell \in [L] } \tau_{ A_{\ell, \bar{m}_{\ell} } },\right)
\end{align}
where for each $\ell\in [L]$,  $m_\ell,\bar{m}_\ell \in [M_L]$,
we let
$\rho_{A_{ 1, m_1} \cdots A_{ L, m_L} E} = \rho_{A_1 \ldots A_L E}$ and
$\tau_{A_{\ell, \bar{m}_{\ell} }} = \tau_{A_\ell}$. The error for the $L$-partite convex splitting is
\begin{align}
	\begin{split} \label{eq:covering_mulipartite}
		\Delta_{M_1, \cdots, M_L}\left( \rho_{A_1 \ldots A_L E} \,\Vert\, \tau_{A_1 \ldots A_L}  \right) &:=
		\frac12\left\| \,\omega_{A^{M_1} \cdots A^{M_L} E} - \bigotimes_{\ell\in[L]}\tau_{A_\ell}^{\otimes M_\ell} \otimes \rho_{E}\, \right\|_1.
	\end{split}
\end{align}

The key argument in the multipartite case is the decomposition map as we introduced in Eq.~\eqref{eq:tele} for the bipartite case.
% a telescoping trick independently discovered by by Colomer Saus and Winter \cite{CSW}. 
For a set $S=\{j_1,\cdots, j_k\}\subseteq [L]$, we introduce the following notation
\begin{align*} &\rho_{A_{S} B}=\rho_{A_{j_{1}}\ldots A_{j_k} E}\pl, \;  \tau_{A_{S}}=\tau_{A_{j_{1}}}\otimes \cdots \otimes \tau_{A_{j_k} }\pl , \\ \pl  
&M_{S}=\prod\nolimits_{\ell\in {S}}M_{j}\pl.
\end{align*}
Given the product density operator $\tau_{A_{[L]}}=\tau_{A_1}\ten \cdots \ten \tau_{A_{L}}$, we define the $\tau$-preserving conditional expectation
\begin{align*}
&\mathrm{E}_{A_\ell}: \mathcal{B}(A_\ell) \to \mathcal{B}(A_\ell)\pl, \pl \mathrm{E}_{A_\ell}(X)=\Tr(X\tau_{A_{\ell}}) \mathds{1}_{A_{\ell}} \\
&\mathrm{E}_{{S}}:=\prod\nolimits_{\ell\in {S}} \mathrm{E}_{A_{\ell}} : \mathcal{B}(A_{S}) \to \mathcal{B}(A_{S})\pl.
\end{align*}
For a subset $S\subseteq [L]$ and its complement $S^c=[L]\setminus {S}$, we define the \emph{mean-zero map}
\begin{align}
&T_{{S}}:   \mathcal{B}(A_{[L]}) \to \mathcal{B}(A_{[L]})\pl, \pl \\
&T_{{S}}:=\sum\nolimits_{{S}^c\subseteq  R \subseteq [L]} (-1)^{|R\setminus {S}^c|} \mathrm{E}_{ R }
\\
&\quad\; =\mathrm{E}_{{S}^c}\left( \sum\nolimits_{  R' \subseteq {S}} (-1)^{|R'|} \mathrm{E}_{R'}\right) \pl.
\end{align}
Here and in the following, the map $\mathrm{E}_{R}:=\mathrm{E}_{R}\ten \id_{R^c}$ (respectively for  $T_{R}$) can be interpreted as $\mathrm{E}_R= \prod_{\ell\in R} \mathrm{E}_{A_{\ell}}$ on $A_R$ and identity map on other systems $A_{R^c}$. 

The key property of the mean-zero map is that $T_{S}$ decomposes an element map $X$ into its “${S}$-mean-zero” part $T_{S}(X)$, which i) supported on $A_S$ (identity on other systems) and ii) all partial condition expectations (e.g.~a weighted partial trace) from $S$ are zero. We remark that a similar idea is independently proposed by Colomer Saus and Winter for deriving multipartite quantum decoupling theorems, termed the \emph{telescoping trick} in their work \cite{SW23}.

\begin{lemma}[Mean-zero decomposition] \label{lemma:tele2} Let ${S}\subseteq [L]$ be a subset and  $\ell\in [L]$. Then\begin{enumerate}
\item[i)] $\mathrm{E}_\ell \circ T_{S}=T_S$ if $\ell\notin S$ and $\mathrm{E}_\ell \circ T_{S}=0$ if $\ell\in S$.
\item[ii)] We have
\begin{align} \mathrm{E}_{{S}^c} X= \sum\nolimits_{R\subseteq {S}
} T_{R} ( X).
\end{align}
In particular,
\begin{align} X-\mathrm{E}_{[L]} X= \sum\nolimits_{\varnothing\neq {S}\subseteq[L] }T_{{S}}(X).
\end{align}
\end{enumerate}
\end{lemma}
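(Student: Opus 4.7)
The plan is to exploit two elementary facts about the local conditional expectations $\mathrm{E}_{A_\ell}$: they act on different tensor factors (hence commute), and each is idempotent. This gives the absorption rule $\mathrm{E}_{R_1}\mathrm{E}_{R_2}=\mathrm{E}_{R_1\cup R_2}$ for any subsets $R_1,R_2\subseteq[L]$. Using the reindexing $Q=S^c\cup R'$ with $R'\subseteq S$, the definition of $T_S$ is most conveniently rewritten as $T_S=\sum_{R'\subseteq S}(-1)^{|R'|}\mathrm{E}_{S^c\cup R'}$. The whole proof is then combinatorial bookkeeping on top of these observations.

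For part (i), the computation gives
\begin{align*}
\mathrm{E}_\ell\circ T_S=\sum_{R'\subseteq S}(-1)^{|R'|}\mathrm{E}_{\{\ell\}\cup S^c\cup R'}.
\end{align*}
If $\ell\notin S$, then $\ell\in S^c$, so $\{\ell\}\cup S^c\cup R'=S^c\cup R'$ and the sum reproduces $T_S$. If $\ell\in S$, I would pair each $R'\subseteq S\setminus\{\ell\}$ with $R'\cup\{\ell\}$; both partners produce the same operator $\mathrm{E}_{S^c\cup R'\cup\{\ell\}}$ but carry opposite signs $(-1)^{|R'|}$ and $(-1)^{|R'|+1}$, so the sum telescopes to zero.

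For part (ii), the main step is a double-sum rearrangement. Expanding the definitions,
\begin{align*}
\sum_{R\subseteq S}T_R=\sum_{R\subseteq S}\sum_{R'\subseteq R}(-1)^{|R'|}\mathrm{E}_{R^c\cup R'}.
\end{align*}
I would introduce the new index $Q:=R\setminus R'$, noting that $R^c\cup R'=Q^c$ (because $R'\subseteq R$ forces $R^c\cup R'=[L]\setminus(R\setminus R')$), and that the pairs $(R,R')$ with $R'\subseteq R\subseteq S$ are in bijection with pairs $(Q,R')$ where $Q\subseteq S$ and $R'\subseteq S\setminus Q$ (via $R=Q\sqcup R'$). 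Swapping the order of summation,
\begin{align*}
\sum_{R\subseteq S}T_R=\sum_{Q\subseteq S}\mathrm{E}_{Q^c}\sum_{R'\subseteq S\setminus Q}(-1)^{|R'|},
\end{align*}
and the inner binomial sum vanishes unless $S\setminus Q=\varnothing$, so only the term $Q=S$ survives and contributes $\mathrm{E}_{S^c}$, as claimed. The ``in particular'' statement follows by taking $S=[L]$, for which $\mathrm{E}_{S^c}=\mathrm{E}_\varnothing=\id$ and $T_\varnothing=\mathrm{E}_{[L]}$, then moving $T_\varnothing$ to the other side.

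I do not anticipate any serious obstacle beyond keeping the two natural parametrizations of $T_S$ (by $R\supseteq S^c$ versus by $R'\subseteq S$) consistent, and being careful with the disjointness when substituting $R=Q\sqcup R'$. Everything else is elementary inclusion--exclusion applied to commuting idempotent projections.
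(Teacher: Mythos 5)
Your proof is correct and follows essentially the same route as the paper's: the absorption rule $\mathrm{E}_{R_1}\mathrm{E}_{R_2}=\mathrm{E}_{R_1\cup R_2}$ together with pairwise sign cancellation gives part (i), and a double-sum rearrangement collapsing under the binomial identity $(1-1)^m=0$ gives part (ii). The only cosmetic difference is your reindexing by $Q=R\setminus R'$ in part (ii), which packages the same inclusion--exclusion step a bit more cleanly than the paper's direct manipulation of the outer index $O$.
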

\begin{proof}By the definition of $T_S$, if $\ell\in S^c$, $\mathrm{E}_\ell \mathrm{E}_{{S}^c}=\mathrm{E}_{{S}^c}$,
\begin{align*}
\mathrm{E}_\ell\circ T_{{S}}&= \mathrm{E}_\ell \mathrm{E}_{{S}^c}\left( \sum_{  R \subset {S}} (-1)^{|R|} \mathrm{E}_{R}\right) =\mathrm{E}_\ell \mathrm{E}_{{S}^c}\left( \sum_{  R \subset {S}} (-1)^{|R|} \mathrm{E}_{R}\right) =T_{{S}}
\end{align*}
If $\ell\in S$,
\begin{align*}
\mathrm{E}_\ell\circ T_{{S}}&= \mathrm{E}_\ell \mathrm{E}_{{S}^c}\left( \sum_{  R \subset {S}} (-1)^{|R|} \mathrm{E}_{R}\right) 
\\ 
&=\mathrm{E}_{{S}^c} \mathrm{E}_\ell\left( \sum_{  \ell \in R \subset {S}} (-1)^{|R|} \mathrm{E}_{R}+ \sum_{  \ell \notin  R' \subset {S}} (-1)^{|R'|} \mathrm{E}_{R'}\right)
\\ 
&=\mathrm{E}_{{S}^c}\left( \sum_{  \ell \in R \subset {S}} (-1)^{|R|} \mathrm{E}_{R}+ \sum_{  \ell \notin  R' \subset {S}} (-1)^{|R'|} \mathrm{E}_{R'\cup \{\ell \}}\right)
\\ 
&=\mathrm{E}_{{S}^c}\left( \sum_{  \ell \in R \subset {S}} (-1)^{|R|}+(-1)^{|R\setminus\{\ell\}|} \mathrm{E}_{R}\right)=0.
\end{align*}
The ii) is a direct application of Fubini's theorem, i.e.
\begin{align*}
    \sum_{R\subset {S}} T_{R}&=  \sum_{R\subset {S}} \sum_{ R^c\subset  O \subset [L]} (-1)^{|O\setminus R^c|} \mathrm{E}_{O}\\
    &=\sum_{{S}^c\subset R^c} \sum_{ R^c\subset  O \subset [L]} (-1)^{|O\setminus R^c|} \mathrm{E}_{O}\\
   &= \sum_{{S}^c\subset  O \subset [L]}\sum_{k=0}^{|O\setminus {S}^c|} \binom{|O\setminus {S}^c|}{k} (-1)^{k} \mathrm{E}_{O} \\
   &= \sum_{{S}^c\subset  O \subset [L]} (1-1)^{|O\setminus{S}^c|} \mathrm{E}_{O} \\
   &= \mathrm{E}_{{S}^c}
\end{align*}
because the only non-zero coefficient is when $O={S}^c$.  The second assertion follows from $T_{\varnothing}=E_{[L]}$ and
\begin{align*}
  X-\mathrm{E}_{[L]} X=\mathrm{E}_{\varnothing} X-\mathrm{E}_{[L]} X=  \sum_{{S}\subset [L]} T_{{S}} ( X) -T_{\varnothing} (X)=  \sum_{\varnothing\neq {S}\subset[L] }T_{{S}}(X).
\end{align*}
	\vspace{-\baselineskip}
\end{proof}

By triangle inequality, the norm of the mean-zero map can be estimate as follows.
\begin{lemma}\label{lemma:multitele}
For any ${S}\subset [L]$ and $p\in [1,\infty]$
\[\norm{T_{S}:L_p(\tau_{A_{[L]}}\ten \sigma_E) \to L_p(\tau_{A_{[L]}}\ten \sigma_E)}{}\le 2^{|{S}|} \pl. \]
\end{lemma}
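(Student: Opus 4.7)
The plan is to bound $\|T_S\|$ by triangle inequality, reducing the problem to showing that each summand $\mathrm{E}_R$ is a contraction on $L_p(\tau_{A_{[L]}} \otimes \sigma_E)$. First I would observe that the defining sum $T_S = \sum_{S^c \subseteq R \subseteq [L]} (-1)^{|R \setminus S^c|} \mathrm{E}_R$ has exactly $2^{|S|}$ terms: as $R$ ranges over subsets of $[L]$ containing $S^c$, the set $R \setminus S^c$ ranges freely over subsets of $S$. So by the triangle inequality
\[
\|T_S : L_p(\tau_{A_{[L]}} \otimes \sigma_E) \to L_p(\tau_{A_{[L]}} \otimes \sigma_E)\|
\;\le\; \sum_{S^c \subseteq R \subseteq [L]} \|\mathrm{E}_R\|_{L_p \to L_p} ,
\]
so it suffices to show $\|\mathrm{E}_R\|_{L_p \to L_p} \le 1$ for every $R \subseteq [L]$.

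Next I would handle the contractivity of a single conditional expectation $\mathrm{E}_{A_\ell}$ on $L_p(\tau_{A_{[L]}} \otimes \sigma_E)$, exactly as in the proof of Lemma~\ref{lemma:tele}. The map $\mathrm{E}_{A_\ell}$ is a unital completely positive map fixing $\tau$, hence a contraction on $L_\infty$ with its usual operator norm. It is also self-adjoint with respect to the $\tau$-inner product $\langle X, Y\rangle_\tau = \Tr(X \tau^{1/2} Y \tau^{1/2})$, so the duality $L_1(\tau)^* = L_\infty(\tau)$ gives
\[
\|\mathrm{E}_{A_\ell} : L_1(\tau) \to L_1(\tau)\| = \|\mathrm{E}_{A_\ell} : L_\infty(\tau) \to L_\infty(\tau)\| = 1 .
\]
By complex (Riesz--Thorin) interpolation between the endpoints $p = 1$ and $p = \infty$, the same bound holds for all $p \in [1, \infty]$.

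Then I would assemble $\mathrm{E}_R = \prod_{\ell \in R} \mathrm{E}_{A_\ell}$ (the individual $\mathrm{E}_{A_\ell}$ act on disjoint tensor factors and thus commute), so $\|\mathrm{E}_R\|_{L_p \to L_p} \le \prod_{\ell \in R} \|\mathrm{E}_{A_\ell}\|_{L_p \to L_p} \le 1$ for all $p \in [1, \infty]$. Combining this with the triangle inequality above yields
\[
\|T_S : L_p(\tau_{A_{[L]}} \otimes \sigma_E) \to L_p(\tau_{A_{[L]}} \otimes \sigma_E)\| \le 2^{|S|},
\]
as claimed. There is no real obstacle here: the only subtlety is the $L_1$ contractivity of $\mathrm{E}_{A_\ell}$, which follows from duality and unitality, and this step has already been carried out in the proof of Lemma~\ref{lemma:tele}; the multipartite statement is just a triangle-inequality bookkeeping over the $2^{|S|}$ terms in $T_S$.
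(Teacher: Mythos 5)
Your proof is correct and takes essentially the same route as the paper: reduce via the triangle inequality to contractivity of each $\mathrm{E}_R$ on $L_p(\tau)$, which the paper asserts in one line and you flesh out by the same $L_1$/$L_\infty$ duality argument already used in the proof of Lemma~\ref{lemma:tele}, plus Riesz--Thorin interpolation to cover all $p$.
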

\begin{proof} It suffices to note that for each $R\subset [L]$, the $\tau$-preserving conditional expectation  $\mathrm{E}_R$ is a contraction on $L_p(\tau)$ for all $p\in [1,\infty]$.
\end{proof}
\begin{remark}
The $L_2$ estimate for $T_{S}$ can be at least improved to 
\[ \norm{T_{S}:L_2(\tau_{A_{[L]}}\ten \sigma_E) \to L_2(\tau_{A_{[L]}}\ten \sigma_E)}{}\le 2^{|{S}|-1}.\]
\end{remark}

We now discussing the multipartite convex splitting map. Given a multi-index ${\bf m}=(m_1,\cdots, m_\ell)\in [M_1]\times \cdots [M_L]$, we introduce the short notation
\begin{align}
&A_{\bm}=A_{1,m_1}\cdots A_{L,m_L}\pl, \pl A_{\bm^c}=\bigotimes_{ \bar{m}_\ell\in [M_{\ell}]\setminus\{m_{\ell}\}, \ell \in [L] }A_{\ell,\bar{m}_\ell}\pl, \\
&A_\bm\ten A_{\bm^c}= \bigotimes_{m_\ell\in [M_\ell],\ell \in [L]} A_{\ell,m_\ell}\cong A_1^{\ten M_1}\cdots A_L^{\ten M_L}.\end{align}
Define the following map
\begin{align*}
	\pi_{\bm}: &\;\mathcal{B}(\mathsf{A_1\cdots A_L E})\to  \mathcal{B}(\mathsf{A_1^{M_1}\cdots A_L^{M_L}E})\  \\ 
 &\; X_{A_1\cdots A_L E} \mapsto   X_{A_{\bm } E} \otimes  \mathds{1}_{A_{\bm}^c}; \\
\mathrm{E}_{A_{\ell, m_\ell }} : &\;\mathcal{B}(A_{\ell, m_\ell })\to \mathcal{B}(A_{\ell, m_\ell })\pl\\
&\;X \mapsto \Tr\left[ \tau_{A_\ell} X  \right]\cdot \mathds{1}_{A_{\ell,m_\ell}};\\
	\Theta_{[L]}:=
	&\;\frac{1}{M_{[L]}}\sum_{\bm\in [M_1]\times \cdots \times [M_L]} \pi_{\bm}.
\end{align*}

\begin{lemma}\label{lemma:contraction}
For any $\bm$, $\pi_{\bm}$ is an isometry from $L_p(\tau_{A_1\cdots A_L}\ten \sigma_E)$ to $L_p\left(\tau_{A_1^{M_1}\cdots A_L^{M_L}}\ten \sigma_E\right)$ for all $p\in [1,\infty]$. Thus, by triangle inequality,
\begin{align*}
\norm{\Theta_{[L]}: L_p(\tau_{A_1\cdots A_L}\ten \sigma_E)\to L_p(\tau_{A_1^{M_1}\cdots A_L^{M_L}}\ten \sigma_E)}{}\le 1.
\end{align*}
\end{lemma}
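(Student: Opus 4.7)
The plan is to verify directly — as in the argument for $p=1$ in Lemma~\ref{lemma:key} — that each $\pi_{\bm}$ is an isometry between the relevant weighted $L_p$-spaces, and then to obtain the bound on $\Theta_{[L]}$ from the triangle inequality applied to the convex combination.

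First, I would write $\tau' := \tau_{A_1^{M_1} \cdots A_L^{M_L}} = \bigotimes_{\ell \in [L],\, m \in [M_\ell]} \tau_{A_{\ell,m}}$ and factor it as $\tau_{A_{\bm}} \otimes \tau_{A_{\bm^c}}$, where $\tau_{A_{\bm}} := \bigotimes_{\ell \in [L]} \tau_{A_{\ell, m_\ell}}$ and $\tau_{A_{\bm^c}}$ is the product of the remaining single-register states. Since $\pi_{\bm}(X) = X_{A_{\bm} E} \otimes \mathds{1}_{A_{\bm^c}}$ acts trivially on $A_{\bm^c}$, the weighted insertion factorises cleanly:
\begin{align*}
(\tau' \otimes \sigma_E)^{\frac{1}{2p}}\, \pi_{\bm}(X)\, (\tau' \otimes \sigma_E)^{\frac{1}{2p}} = \Bigl((\tau_{A_{\bm}} \otimes \sigma_E)^{\frac{1}{2p}}\, X\, (\tau_{A_{\bm}} \otimes \sigma_E)^{\frac{1}{2p}}\Bigr) \otimes \tau_{A_{\bm^c}}^{\frac{1}{p}}.
\end{align*}
Taking Schatten $p$-norms, using their multiplicativity under tensor products, and noting that $\|\tau_{A_{\bm^c}}^{1/p}\|_p = (\Tr \tau_{A_{\bm^c}})^{1/p} = 1$, I would conclude
\begin{align*}
\|\pi_{\bm}(X)\|_{L_p(\tau' \otimes \sigma_E)} = \|X\|_{L_p(\tau_{A_1 \cdots A_L} \otimes \sigma_E)}
\end{align*}
for every $p \in [1,\infty)$; the endpoint $p=\infty$ follows from the same factorization together with $\|\mathds{1}_{A_{\bm^c}}\|_\infty = 1$. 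This establishes the claimed isometry property of each $\pi_{\bm}$.

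With the isometries in hand, the contraction statement is immediate, since $\Theta_{[L]} = \frac{1}{M_{[L]}} \sum_{\bm} \pi_{\bm}$ is a convex combination of the $M_{[L]}$ isometries $\pi_{\bm}$, so the triangle inequality gives
\begin{align*}
\|\Theta_{[L]}(X)\|_{L_p(\tau' \otimes \sigma_E)} \le \frac{1}{M_{[L]}} \sum_{\bm} \|\pi_{\bm}(X)\|_{L_p(\tau' \otimes \sigma_E)} = \|X\|_{L_p(\tau_{A_1 \cdots A_L} \otimes \sigma_E)}.
\end{align*}
I do not anticipate any real obstacle here; the only small bookkeeping is ensuring that the product state $\tau'$ splits correctly against the tensor structure of $\pi_{\bm}(X)$ and that the common $E$-side weight $\sigma_E$ sits consistently on both sides of the would-be isometry. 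Both of these are direct consequences of the strictly product form of the reference state $\tau' \otimes \sigma_E$ and of the trivial action of $\pi_{\bm}$ on $A_{\bm^c}$.
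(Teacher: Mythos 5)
Your proposal is correct and takes essentially the same route as the paper: the paper simply says the proof is "similar to the bipartite case in Lemma~\ref{lemma:key}", where exactly this factorization of the weighted norm across the product reference state, multiplicativity of the Schatten $p$-norm under tensor products, and $\|\tau_{A_{\bm^c}}^{1/p}\|_p=1$ are used to establish the isometry, followed by the triangle inequality for $\Theta_{[L]}$.
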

\begin{proof}The proof is similar to the bipartite case in Lemma \ref{lemma:key}
\end{proof}

Similar to the bipartite case, the condition expectation $\mathrm{E}_{A_{\ell, m_\ell }}$ is only acting on the ${A}_{\ell, m_\ell}$ system  and 
\[ \pi_{\bm}\mathrm{E}_{A_\ell}=\mathrm{E}_{A_{\ell,m_\ell}}\pi_{\bm}\ ,
\;\pi_{\bm}\mathrm{E}_{A_{S}}=\mathrm{E}_{A_{{S},\bm_{S}}}\pi_{\bm},\]
where $A_{S,\bm_S}=\prod_{\ell \in S} A_{\ell, m_\ell}$. Given the density $\tau_{A_1\cdots A_L}=\tau_{A_1}\ten \cdots \ten\tau_{A_L}$, we write \[\tau_{\bf A}:=\tau_{A_1^{M_1}\cdots A_L^{M_L}}= \left(\bigotimes_{m_{\ell}\in [M_L] \ell\in [L]} \tau_{A_\ell}\right) =\bigotimes_{\ell\in[L]}\tau_{A_\ell}^{\otimes M_\ell} \pl.\]
For an arbitrary full rank density $\sigma_E\in S(\mathsf{E})$, the multipartite convex splitting error can be expressed by 
\begin{align*}
&\Delta_{M_1, \cdots, M_L}\left( \rho_{A_1 \ldots A_L E} \,\Vert\, \tau_{A_1 \ldots A_L}  \right)\\
&=\frac{1}{2}\norm{\frac{1}{M_{[L]}}\sum_{\bm }
	\rho_{A_{\bm} E}\ten \tau_{A_{\bm}^c}-\bigotimes_{\ell\in[L]}\tau_{A_\ell}^{\otimes M_\ell} \otimes \rho_{E} }{1}\\
  &=\frac{1}{2}\norm{\frac{1}{M_{[L]}}\sum_{\bm}
	\frac{\rho_{A_{\bm} E}\ten \tau_{A_{\bm}^c}}{\tau_{A_1^{M_1}\cdots A_L^{M_L}}\ten \sigma_E}-\bigotimes_{\ell\in[L]}{\mathds{1}}_{A_\ell}^{\otimes M_\ell} \otimes \frac{\rho_{E}}{\sigma_E} }{L_1(\tau_{\bf A}\ten \sigma_E)}
\\&=\frac{1}{2}\norm{\frac{1}{M_{[L]}}\sum_{\bm }
	\frac{\rho_{A_{\bm} E}}{\tau_{A_{\bm}}\ten \sigma_E} \ten {\bf 1}_{A_{\bm}^c}-\bigotimes_{\ell\in[L]}{\mathds{1}}_{A_\ell}^{\otimes M_\ell} \otimes \frac{\rho_{E}}{\sigma_E} }{L_1(\tau_{\bf A}\ten \sigma_E)}
\\&=\frac{1}{2}\norm{\frac{1}{M_{[L]}}\sum_{\bm}\pi_{\bm} 
	\left(\frac{\rho_{A_{ 1} \cdots A_{ L} E}}{\tau_{A_1 \ldots A_L}\ten \sigma_E} -{\mathds{1}}_{A_1\cdots A_L}\ten\frac{\rho_{E}}{\sigma_E}  \right)}{L_1(\tau_{\bf A}\ten \sigma_E)}
 \\&=\frac{1}{2}\norm{\Theta_{[L]}
	\left(\id_{[L]}-\mathrm{E}_{[L]}\right)(X)}{L_1(\tau_{\bf A}\ten \sigma_E)}.
\end{align*}
where $X=\frac{\rho_{A_{ 1} \cdots A_{ L} E}}{\tau_{A_1 \ldots A_L}\ten \sigma_E}$. Using the mean-zero decomposition lemma given in Lemma~\ref{lemma:tele2},
\begin{align*}
\Delta_{M_1, \cdots, M_L}\left( \rho_{A_1 \ldots A_L E} \,\Vert\, \tau_{A_1 \ldots A_L}  \right)
&=\frac{1}{2} \norm{\Theta_{[L]}\left(\sum_{\varnothing\neq S\subset [L]} T_S(X)\right)}{L_1(\tau_{\bf A}\ten \sigma_E)}
\\ &\le  \frac{1}{2} \sum_{\varnothing\neq S\subset [L]}\norm{\Theta_{[L]}\circ T_S(X)}{L_1(\tau_{\bf A}\ten \sigma_E)} \\
&=:\sum_{\varnothing\neq S\subseteq [L]}\Delta_S\pl,
\end{align*}
where for each subset $S\subseteq [L]$, we define the error term
\[\Delta_S(\rho_{A_1\cdots A_LE} \,\Vert\, \tau_{A_1\cdots A_{L}})=\frac{1}{2}\norm{\Theta_{[L]}\circ T_S(X)}{L_1(\tau_{\bf A}\ten \sigma_E)} \pl.\]
Note that $T_S(X)={\bf 1}_{A_{S^c} }\ten \mathring{X}_S$ is supported on $A_{S}$ where $\mathring{X}_S=T_S(X_S)$, so the convex splitting for $A_{S^c}$ is trivial. More precisely, one have  
\begin{align*}
&\Delta_S(\rho_{A_1\cdots A_LE}\,\Vert\,\tau_{A_1\cdots A_{L}})\\ 
&=\frac{1}{2}\norm{\Theta_{[L]}( \mathds{1}_{A_{S^c} }\ten\mathring{X}_S)}{L_1(\tau_{\bf A}\ten \sigma_E)}\\
&=\frac{1}{2}\left\|\frac{1}{M_{[L]}}\sum_{\bm } \left(\mathds{1}_{{S^c}}\ten \mathring{X}_{S}\right)_{A_{\bm}}\otimes \mathds{1}_{A_{\bm^c}}- \mathds{1}_{{\bf A}} \ten \frac{\rho_E}{\sigma_E}\right\|_{L_1(\tau_{\bf A}\ten \sigma_E)}
\\
&=\frac{1}{2}\left\|\frac{1}{M_{S}M_{S^c}}\mathds{1}_{{\bf A}_{S^c}} \ten \sum_{\bm }   \mathring{X}_{A_{\bm_S}}\otimes \mathds{1}_{A_{\bm_S^c}}  - \mathds{1}_{{\bf A}_S} \ten \frac{\rho_E}{\sigma_E}\right\|_{L_1(\tau_{{\bf A}_S}\ten \tau_{{\bf A}_{S^c}}\ten \sigma_E)}
\\
&=\frac{1}{2}\left\|\frac{1}{M_{S}} \sum_{\bm }   \mathring{X}_{A_{\bm_S}}\otimes \mathds{1}_{A_{\bm_S^c}}  - \mathds{1}_{{\bf A}_S} \ten \frac{\rho_E}{\sigma_E}\right\|_{L_1(\tau_{{\bf A}_S}\ten \tau_{{\bf A}_{S^c}}\ten \sigma_E)}
\\ &=\frac{1}{2}\norm{\Theta_{S}( \mathring{X}_S)}{L_1(\tau_{\bf A_{S}}\ten \sigma_E)}
\\ &=\frac{1}{2}\norm{\Theta_{S}\circ T_S( \frac{\rho_{A_{ S}  E}}{\tau_{A_S}\ten \sigma_E})}{L_1(\tau_{{\bf A}_{S}}\ten \sigma_E)}
\\ &=\Delta_{S}(\rho_{A_SE}\,\Vert\,\tau_{A_S}),
\end{align*}
which is an $|S|$-partite convex splitting term $\Theta_{[S]}\circ T_{[S]}$. Here, $\bf A_{S}=\bigotimes_{m_\ell \in [M_\ell], \ell \in S} A_\ell$, and we use the notation ${\bm_S}= (m_\ell)_{\ell \in S}$ as the restriction of multi-index $\bm \in \times_{\ell\in [L]} [M_\ell]$ to $\bm \in \times_{\ell\in S} [M_\ell]$.

\begin{lemma}
For any subset $S\subseteq [L]$ and any density operator $\sigma_E$,
\[\Delta_S(\rho_{A_1\cdots A_LE}\,\Vert\,\tau_{A_1\cdots A_{L}})\le 2^{|S|-1}M_{[L]}^{\frac{1-p}{p}} \norm{\frac{\rho_{A_SE}}{\tau_{A_S}\ten \sigma_E } }{ L_p (\tau_{A_S}\ten \sigma_E)}, p\in [1,2].\]
\end{lemma}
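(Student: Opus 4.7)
The derivation immediately preceding the statement already carries out the main reduction: one shows that $\Delta_{S}(\rho_{A_1\cdots A_LE}\,\Vert\,\tau_{A_1\cdots A_L})=\tfrac{1}{2}\|\Theta_{S}\circ T_{S}(X)\|_{L_1(\tau_{{\bf A}_{S}}\otimes \sigma_E)}$, where $X=\rho_{A_{S}E}/(\tau_{A_{S}}\otimes \sigma_E)$. So my plan is to prove the lemma by bounding the operator norm of $\Theta_{S}\circ T_{S}:L_p(\tau_{A_{S}}\otimes \sigma_E)\to L_p(\tau_{{\bf A}_{S}}\otimes \sigma_E)$ at the two endpoints $p=1$ and $p=2$, and then interpolating. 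This is the exact analogue of the argument for Lemma~\ref{lemma:key}, but now applied to the $|S|$-partite splitting map. Note that only the $\ell\in S$ coordinates play a role, since $T_{S}(X)=\mathds{1}_{A_{S^c}}\otimes \mathring{X}_{S}$ is already identity on $A_{S^c}$, and the bound should therefore involve $M_{S}$ (the factor $M_{[L]}$ in the displayed inequality should be read as $M_{S}$, which is what my analysis yields and what the bipartite instance in Lemma~\ref{lemma:key} specializes to).

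For the $L_1$ endpoint, the triangle inequality together with the fact that each $\pi_{\bm_{S}}$ is an isometry (Lemma~\ref{lemma:contraction}) gives $\|\Theta_{S}:L_1\to L_1\|\le 1$, and Lemma~\ref{lemma:multitele} provides $\|T_{S}:L_1\to L_1\|\le 2^{|S|}$, hence $\|\Theta_{S}\circ T_{S}:L_1\to L_1\|\le 2^{|S|}$. The substantive step is the $L_2$ endpoint, which I expect to be the main obstacle: I would show that the family $\{\pi_{\bm_{S}}\circ T_{S}(X)\}_{\bm_{S}\in \times_{\ell\in S}[M_\ell]}$ is mutually orthogonal in $L_2(\tau_{{\bf A}_{S}}\otimes \sigma_E)$. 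For distinct multi-indices $\bm_{S}\neq \bm_{S}'$, pick an $\ell^{*}\in S$ with $m_{\ell^{*}}\neq m'_{\ell^{*}}$. Expanding the weighted inner product, all factors outside $A_{\ell^{*},m_{\ell^{*}}}$ and $A_{\ell^{*},m'_{\ell^{*}}}$ combine to produce a partial $\tau$-weighted trace over one of these factors, i.e.\ $\mathrm{E}_{A_{\ell^{*},m_{\ell^{*}}}}$, which via the intertwining relation $\pi_{\bm_{S}}\mathrm{E}_{A_{\ell^{*}}}=\mathrm{E}_{A_{\ell^{*},m_{\ell^{*}}}}\pi_{\bm_{S}}$ pulls back to $\mathrm{E}_{A_{\ell^{*}}}\circ T_{S}$, and this vanishes by Lemma~\ref{lemma:tele2}(i) because $\ell^{*}\in S$. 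Pythagoras then gives
\begin{align}
\|\Theta_{S}\circ T_{S}(X)\|_{L_2}^{2}
=\frac{1}{M_{S}^{2}}\sum_{\bm_{S}}\|\pi_{\bm_{S}}(T_{S}X)\|_{L_2}^{2}
=\frac{1}{M_{S}}\|T_{S}X\|_{L_2}^{2}
\le \frac{2^{2|S|}}{M_{S}}\|X\|_{L_2}^{2},
\end{align}
since $\pi_{\bm_{S}}$ is an $L_2$-isometry and $\|T_{S}\|_{L_2\to L_2}\le 2^{|S|}$ by Lemma~\ref{lemma:multitele}.

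Finally, complex interpolation (Riesz--Thorin) for the interpolation couple $(L_1(\tau_{A_{S}}\otimes \sigma_E),L_2(\tau_{A_{S}}\otimes \sigma_E))$ with parameter $\theta=\tfrac{2(p-1)}{p}$ yields
\begin{align}
\|\Theta_{S}\circ T_{S}:L_p\to L_p\|\le (2^{|S|})^{1-\theta}\left(\frac{2^{|S|}}{\sqrt{M_{S}}}\right)^{\theta}=2^{|S|}\,M_{S}^{(1-p)/p},
\end{align}
and absorbing the overall factor $\tfrac{1}{2}$ in $\Delta_{S}$ gives the stated bound $2^{|S|-1}M_{S}^{(1-p)/p}\|X\|_{L_p}$. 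The one routine check I would spell out carefully in the actual proof is the orthogonality relation for general $L$, since relative to the bipartite case in Lemma~\ref{lemma:key} there are now more factors to separate; but the mechanism is identical, driven entirely by the mean-zero property of $T_{S}$ on coordinates inside $S$ granted by Lemma~\ref{lemma:tele2}(i).
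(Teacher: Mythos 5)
Your proof is correct and follows essentially the same route as the paper's: reduce to the $|S|$-partite map $\Theta_S\circ T_S$, bound it as a map $L_1\to L_1$ by $2^{|S|}$ (triangle inequality plus Lemma~\ref{lemma:multitele} and Lemma~\ref{lemma:contraction}), bound it as a map $L_2\to L_2$ by $2^{|S|}M_S^{-1/2}$ via the orthogonality of $\{\pi_{\bm_S}(T_S X)\}$ driven by the mean-zero property $\mathrm{E}_{\ell^*}\circ T_S=0$ for $\ell^*\in S$ (Lemma~\ref{lemma:tele2}(i)), and interpolate. You also correctly flag that the exponent in the stated bound should involve $M_S$ rather than $M_{[L]}$: since $(1-p)/p\le 0$ and $M_{[L]}\ge M_S$, writing $M_{[L]}^{(1-p)/p}$ would assert a strictly stronger bound than the orthogonality argument delivers, and it is the $M_S$ version that is actually used downstream to obtain Theorem~\ref{theorem:convex_splitting_multipartite} with the term $2^{-E_{\log M_S}(\rho_{A_SE}\,\Vert\,\tau_{A_S})}$.
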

\begin{proof}
By the above observation, it suffices to argue for the case $S=[L]$. Recall that by H\"older inequality,  for a density operator $\tau$, the identity map  $\id: L_p(\tau)\to L_1(\tau)$ is a contraction. 
Then it suffices to show that for $p\in [1,2]$
\[\norm{\Theta_{[L]}\circ T_{[L]}:L_p(\tau_{A_1\cdots {A_L}}\ten \sigma_E)\to  L_p\left(\tau_{A_1^{M_1}\cdots A_L^{M_L}}\ten \sigma_E\right)}{}\le 2^{|S|}M_{[L]}^{\frac{1-p}{p}}\pl.\]
For $p=1$, we use Lemma \ref{lemma:multitele} and \ref{lemma:contraction}
\begin{align*}
\norm{\Theta_{[L]}\circ T_{[L]}:L_1\to L_1 }{}\le \norm{\Theta_{[L]}:L_1\to L_1 }{}\norm{ T_{[L]}:L_1\to L_1 }{}=2^{L}\pl.
\end{align*}
For $p=2$, given an operator $X\in \mathcal{B}(\mathsf{A}_1\ten\ldots\ten\mathsf{A}_L\ten E)$, we adopt the short notation $\o{X}=T_{[L]}(X)$. Note that by Lemma \ref{lemma:tele2}, 
$\mathrm{E}_{A_\ell} (\o{X})=0$
for all $\ell\in [L]$. This implies that the set $\{\pi_{\bm}(\o{X} )\}_{\bm \in [M_1]\times\cdots \times [M_L]}$ is orthogonal in $L_2(\tau_{\bf A}\ten \sigma_E)$. Indeed, for $\bm\neq \bm'$, without loss of generality, assume $m_{1}\neq m'_1$. Write $\tau_{{\bf A }E}=\tau_{\bf A} \ten \sigma_E$. We have 
\begin{align*}
		&\langle \pi_{\bm}(\mathring{X}), \pi_{\bm'}(\mathring{Y}) \rangle_{\tau_{{\bf A }E}} \\
&=  \lan \mathds{1}_{A_{\bm^c}E} \otimes \mathring{X}_{A_\bm}, \mathds{1}_{A_{{\bm'}^c}} \otimes \mathring{Y}_{A_{\bm'}E}\ran_{\tau_{{\bf A }E}}\\
&=  \Tr\left[\tau_{{\bf A }E}^{\frac{1}{2}}\mathring{X}_{A_\bm} \tau_{{\bf A }E}^{\frac{1}{2}}\mathring{Y}_{A_{\bm'}E}\right]\\
&=  \Tr\left[\tau_{A_{\bm \cup \bm'}E}^{\frac{1}{2}}\mathring{X}_{A_\bm} \tau_{A_{\bm \cup \bm'}E}^{\frac{1}{2}}\mathring{Y}_{A_{\bm'}E}\right]\\
&=  \Tr\left[\tau_{A_{\bm \cup \bm'\setminus \{m_1,m_1'\} } E}^{\frac{1}{2}}E_{A_{1,m_1}}(\mathring{X}_{A_\bm}) \tau_{A_{\bm \cup \bm'\setminus \{m_1,m_1'\}}E}^{\frac{1}{2}}E_{A_{1,m_1'}}(\mathring{Y}_{A_{\bm'}E})\right]\\
&=  \lan E_{A_{1,m_1}}(\pi_{\bm}(\o{X})), E_{A_{1,m_1'}}(\pi_{\bm'}(\o{Y})) \rangle_{ \tau_{{\bf A }E} }\\
&=  \lan \pi_{\bm}\circ E_{A_1}( \o{X}), \pi_{\bm'}\circ E_{A_1}(\o{Y}) \rangle_{ \tau_{{\bf A }E} }\\
&= 0.
\end{align*}
By the orthogonality, for any $X_{A_1\cdots A_L E}$
	\begin{align}
		\left\| \Theta_{[L]}(\mathring{X}) \right\|_{L_2(\tau_{{\bf A }E})}^2
&\overset{\textnormal{(a)}}{=} \frac{1}{M_{[L]}^2} \sum_{\bm}  \left\| \pi_{\bm} \left( \mathring{X} \right) \right \|_{L_2(\tau_{{\bf A }E})}^2
\nonumber\\
		&\overset{\textnormal{(b)}}{=} \frac{1}{M_{[L]}^2} \sum_{\bm}  \left\| \mathring{X} \right\|_{L_2(\tau_{A_1\cdots A_L E})}^2 \nonumber\\
		&= \frac{1}{M_{[L]}} \left\| \mathring{X} \right\|_{L_2(\tau_{A_1\cdots A_L E})}^2 
 \nonumber \\
  &\overset{\textnormal{(c)}}{\le } \frac{2^{2L}}{M_{[L]}}  \left\| X \right\|_{L_2(\tau_{A_1\cdots A_L E})}^2 \label{eq:L2}
	\end{align}
	where (a) follows from orthogonality;
	(b) is because $\pi_{\bm}$ are isometry;
	and the last inequality (c) follows by the Lemma \ref{lemma:multitele}.
For general $p\in[1,2]$, we apply complex interpolation for $\frac{1}{p}=\frac{1-\theta}{1}+\frac{\theta}{2}$, 
 $\theta = \frac{2(p-1)}{p} \in [0,1]$,
	\begin{align*}
		\left\| \Theta_{[L]}\circ T_{[L]}: L_{p} \to L_{p}  \right\|
		&\leq \left\| \Theta_{[L]}\circ T_{[L]}: L_{1} \to L_{1}\right\|^{1-\theta} \left\| \Theta_{[L]}\circ T_{[L]}: L_{2} \to L_{2}\right\|^\theta = 2^L(M_{[L]})^{ \frac{1-p}{p} },
	\end{align*}
	which finishes the proof for $S=[L]$. 
\end{proof}

\begin{remark}{\rm As we mentioned, the key property of the element $\mathring{X}:=T_{[L]}(X)$ is that $\mathring{X}$ has all ``partial mean'' zero $E_\ell T_S(X)=0, \forall \ell \in [L]$. Indeed,  for each $\ell\in [L]$, the partial mean-zero $E_\ell(\mathring{X})=0$ implies that for any fixed $(m_1,\cdots, m_{\ell-1}, m_{\ell+1},m_{L} )$ the set $\{\pi_\bm( \mathring{X})\}_{m_{\ell}=1}^{M_\ell}$ is orthogonal. Then to make the whole set $\{\pi_\bm( \mathring{X})\}_{\bm\in \prod[M_\ell]}$ orthogonal is exactly our motivation for the mean-zero decomposition Lemma~\ref{lemma:tele2}.

}    
\end{remark}
Using triangle inequality, we have the following one-shot multipartite convex splitting lemma. Note that in applying the above estimate the $\sigma_E$ can be optimized differently for each term $\Delta_S$.
\begin{shaded_theorem}[$L$-party convex splitting] \label{theorem:convex_splitting_multipartite}
    Let $\rho_{A_1A_2\ldots A_L E}$ and $\tau_{A_1 \ldots A_L}=\tau_{A_1}\ten \tau_{A_2} \cdots \ten \tau_{A_L}$ be multipartite states.
    For integers $M_1$, $M_2$, $\ldots$, $M_L \in \mathds{N}$,
	\begin{align} \label{eq:error_convex_splittgin_multipartite}
    \Delta_{M_1, \cdots, M_L}\left( \rho_{A_1 \ldots A_L E} \,\Vert\, \tau_{A_1 \ldots A_L}  \right) \le \frac{1}{2}\sum_{\varnothing\neq S\subseteq [L] } 2^{|S|} \cdot 2^{ -E_{\log M_S}\left( \rho_{A_S E} \,\Vert\, \tau_{A_S}\right)}, 
\end{align}
where the error-exponent function is defined in Eq.~\eqref{eq:error_exponent_tau},
$M_S:= \Pi_{ \ell \in S } M_\ell$, and $A_S$ denotes  systems $A_\ell$ for all $\ell \in S$. Moreover, the error exponents are all positive if and only if for all subsets $S\subseteq[L]$,
	\begin{align}
    \sum\nolimits_{\ell\in S} \log M_\ell > I_1(\rho_{A_SE} \,\Vert\, \tau_{A_S} ).
  \end{align}
\end{shaded_theorem}

\begin{remark}
Theorem~\ref{theorem:convex_splitting_multipartite} holds even if the underlying Hilbert spaces are all infinite-dimensional.
\end{remark}

%%%%%%%%%%%%%%%%%%%%%%%%%%%%%%%%

\section{Multipartite Quantum State Splitting} \label{sec:QSS}

In this section, we derive the one-shot multipartite Quantum State Splitting. 
We first provide a formal definition (see Figure~\ref{fig:QSS}).

\begin{figure}[h!]
	\centering
	\resizebox{1\columnwidth}{!}{ 
		\includegraphics{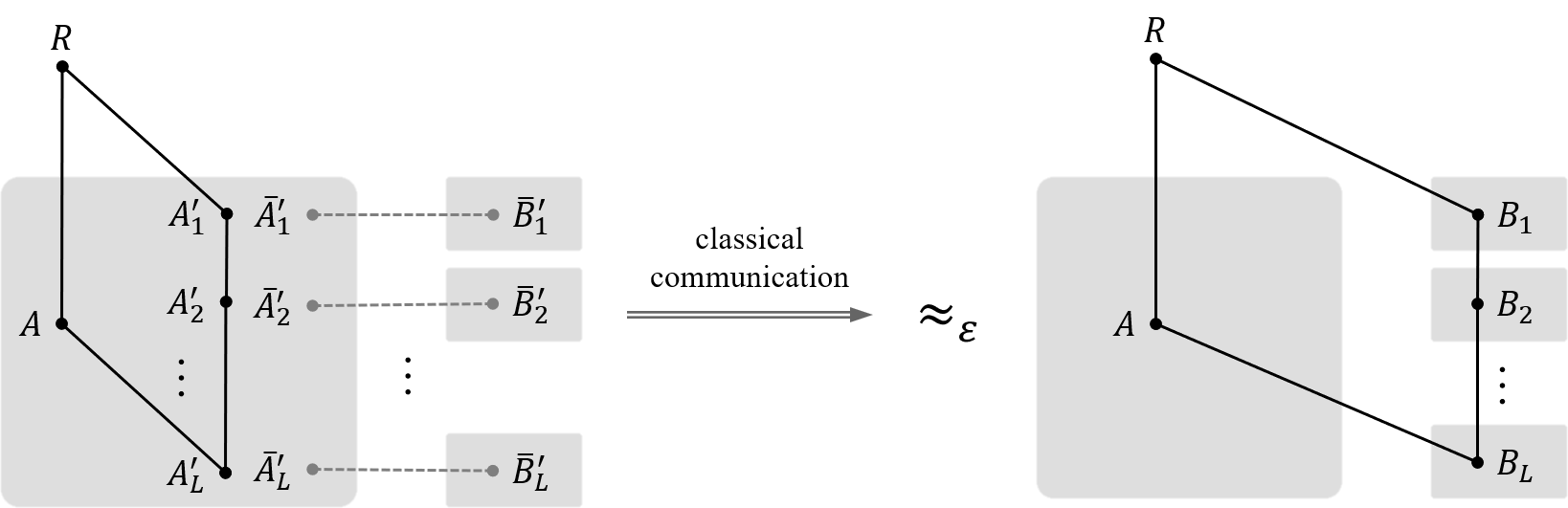}     		   		
	}
    \caption{
    \small
    Depiction of $L$-party Quantum State Splitting protocol.
    Sender holds registers $A$, $A_1'$, $\bar{A}_1'$, $A_2'$, $\bar{A}_2'$, $\cdots$, $A_L'$, $\bar{A}_L'$, and pre-shares entanglement with each of the $L$ Receivers holding register $B_\ell'$.
    $\ell \in [L]$; register $R$ represents an inaccessible Reference.
    Each grey-shaded region is allowed to perform local quantum operation at its holding registers, and (limited) noiseless classical communication from the sender to the $L$ receivers is permitted.
    After executing the protocol, we wish the systems $A_1'$, $\cdots$, $A_L'$ end up at each of the $\ell$-th receiver's side (at the right part of the figure).
    Note that each gray-shaded region is only allowed to perform local quantum operation.
	}
	\label{fig:QSS}
\end{figure}

\begin{definition}[$L$-party Quantum State Splitting] \label{definition:QSS}

	Let $\rho_{RAA_1'A_2'\ldots A_L'}$ be a pure state as input of the protocol.
	\begin{enumerate}[1.]
		\item Quantum registers $A_1'$, $A_2'$, $\ldots$, $A_L'$ and $A$ at Alice,
		$B_1 \cong A_1'$, $B_2 \cong A_2'$, $\ldots$, $B_L \cong A_L'$ at $L$ Receivers, 
		and $R$ at an inaccessible Reference.
		
		\item A resource of   entanglement, say $|\tau\rangle_{\bar{A}_1' \bar{A}_2' \ldots \bar{A}_L \bar{B}_1' \bar{B}_2' \ldots \bar{B}_L'}$, shared between Sender (holding registers $\bar{A}_1' \bar{A}_2' \ldots \bar{A}_L'$) and $L$ Receivers (each holding register $\bar{B}_\ell'$, $\ell \in [L]$), and noiseless one-way classical communication from the sender to $L$ receivers are available.
		
		\item The sender applies a local operation on her system and the shared entanglement to obtain $L$-tuple
		  $\vec{r} := (r_1, r_2, \ldots r_L)$ bits of classical messages.
		%	quantum system $M_\text{q}$.
		
		\item The sender sends the above message to each receiver, respectively,
		via noiseless one-way classical communication. % {\color{red}\cite{BW92}}.
		%	quantum communication.
		
		\item 
		Upon receiving the messages, each receiver applies a local operation on his shared entanglement to obtain an overall state $\widehat{\rho}_{RAB_1 B_2 \ldots B_L}$.
				\end{enumerate}
    
A $( \vec{r}, \eps)$ $L$-party Quantum State Splitting protocol for $\rho_{RAA_1'\ldots A_L'}$ with entanglement \\$|\tau\rangle_{\bar{A}_1' \bar{A}_2' \ldots \bar{A}_L \bar{B}_1' \bar{B}_2' \ldots \bar{B}_L'}$ satisfies
		\begin{align}
			\frac12\left\| \widehat{\rho}_{RAB_1 \ldots B_L} - \rho_{RAB_1\ldots B_L} \right\|_1 \leq \eps,
		\end{align}
		where $\rho_{RAB_1 \ldots B_L} := (\id_{A_1'\ldots A_L' \to B_1 \ldots B_L} \otimes \id_{AR}) \rho_{RAA_1'A_2'\ldots A_L'}$.
	
\end{definition}

For readability, we first show a special case of the bipartite Quantum State Splitting.

\begin{shaded_theorem}[$2$-party Quantum State Splitting] \label{theorem:bipartite_QSS}
	For any pure state $\rho_{AA_1'A_2'R} = |\rho\rangle\langle \rho|_{AA_1'A_2'R} $ , there exists a $(r_1, r_2, \eps)$ $2$-party Quantum State Splitting protocol for $\rho_{RAA_1'A_2'}$ with entanglement $|\tau\rangle_{{A}_1' {B}_1}$ and $|\tau\rangle_{{A}_2' {B}_2}$ such that
	\begin{align} \label{eq:error_bipartite_QSS}
	\eps &\leq
	\sqrt{
	2 \cdot 2^{ - E_{r_1+r_2} \left( \rho_{A_1'A_2'R} \, \Vert \, \tau_{A_1'} \otimes \tau_{A_2'} \right)  } 
	+  2^{ - E_{r_1} \left( \rho_{A_1'R} \, \Vert \, \tau_{A_1'} \right)  } 
	+  2^{ - E_{r_2} \left( \rho_{A_2'R} \, \Vert \, \tau_{A_2'} \right)  } 
	},
	% \\
 % &\leq
	% % &{\color{red}{ \overset{?}{=} }}
	% 	\sqrt{
	% 		4 \cdot 2^{ - E_{r_1+r_2} (A_1': A_2' : R)_\rho }
	% 		+ 2 \cdot 2^{ - E_{r_1} (A_1': R)_\rho }
	% 		+ 2 \cdot 2^{ - E_{r_2} (A_2': R)_\rho }
	% 	}.
	\end{align}
  where the error-exponent functions are defined in Eq.~\eqref{eq:error_exponent_tau}.
%	\begin{align} 
%	\eps &\leq \sqrt{ 4 \cdot \delta_{1,2} 
%		+ 2 \cdot \delta_1
%		+ 2 \cdot  \delta_2 }, \label{eq:error_bipartite_QSS} \\
%	\delta_{1,2} &:= 2^{- E_{r_1+r_2}(A_1': A_2' : R)_\rho }, \\
%	\delta_1 &=	2^{- E_{r_1}(A_1':  R)_\rho }, \\
%	\delta_2 &= 2^{- E_{r_2}(A_2': R)_\rho }.
%	\end{align}
Moreover, the error exponents are all positive if and only if,
	\begin{align} 
		\begin{cases}
		&r_1+r_2 > I(\rho_{A_1'A_2'R} \,\Vert\, \tau_{A_1'} \otimes \tau_{A_2'} ), \\
	&r_1 > I(\rho_{A_1'R} \,\Vert\, \tau_{A_1'} ),\\
	&r_2 > I(\rho_{A_2'R} \,\Vert\, \tau_{A_2'} ).\\
	\end{cases}
    \end{align}
%	Moreover, the overall error exponent 
%	\begin{align}
%	\frac12 \min \left\{  E_{r_1+r_2}(A_1': A_2' : R)_\rho , 
%	E_{r_1}(A_1':  R)_\rho,
%	E_{r_2}(A_2':  R)_\rho
%	\right\}
%	\end{align}
%%	\begin{align}
%%	\sup_{\alpha \in [0,1]} \min \left\{
%%	\left( \log M_1 M_2 - I_\alpha \left( \rho_{RA_1'A_2'} \, \Vert \, \tau_{A_1'} \otimes \tau_{A_2'} \right) \right),
%%	\frac{\alpha-1}{2\alpha} \left( \log M_1 -  I_\alpha \left( \rho_{RA_1'} \, \Vert \, \tau_{A_1'} \right) \right),
%%	\frac{\alpha-1}{2\alpha} \left( \log M_1 -  I_\alpha \left( \rho_{RA_1'} \, \Vert \, \tau_{A_1'} \right) \right)
%%	\right\}
%%	\end{align}
%	is positive if and only if 
%	\begin{align} \label{eq:region_original_biQSS}
%	\begin{cases}
%	& r_1 + r_2 > I( A_1' : A_2' : R )_\rho,  \\
%	& r_1 > I(A_1' : R )_\rho, \\
%	& r_2 > I(A_2' : R )_\rho. \\
%	\end{cases}
%	\end{align}
	
\end{shaded_theorem}

%\begin{remark} 
%    We pose a question that if we allowed to use any entanglement, can we achieve the error in the following form:
%	\begin{align}
%	\eps &\leq \inf_{(\tau_{A_1'}, \tau_{A_2'})\in \mathcal{S}(\mathsf{A}_1')\times \mathcal{S}(\mathsf{A}_2') }
%	\sqrt{
%	4 \cdot 2^{ - E_{r_1+r_2} \left( \rho_{A_1'A_2'R} \, \Vert \, \tau_{A_1'} \otimes \tau_{A_2'} \right)  } 
%	+ 2 \cdot 2^{ - E_{r_1} \left( \rho_{A_1'R} \, \Vert \, \tau_{A_1'} \right)  } 
%	+ 2 \cdot 2^{ - E_{r_2} \left( \rho_{A_2'R} \, \Vert \, \tau_{A_2'} \right)  } 
%	}
%	\\
%	&{\color{red}{ \overset{?}{=} }}
%		\sqrt{
%	\inf_{\tau_{A_1'}, \tau_{A_2'}} 4 \cdot 2^{ - E_{r_1+r_2} \left( \rho_{A_1'A_2'R} \, \Vert \, \tau_{A_1'} \otimes \tau_{A_2'} \right)  } 
%	+ \inf_{\tau_{A_1'}} 2 \cdot 2^{ - E_{r_1} \left( \rho_{A_1'R} \, \Vert \, \tau_{A_1'} \right)  } 
%	+ \inf_{ \tau_{A_2'}} 2 \cdot 2^{ - E_{r_2} \left( \rho_{A_2'R} \, \Vert \, \tau_{A_2'} \right)  } 
%	}?
%	\end{align}
%\end{remark}

\begin{proof}
Unipartite Quantum State Splitting has been shown via unipartite convex splitting in Ref.~\cite{ADK+17} (see also \cite{AJW18}),
	Below we will demonstrate applying their approach with the newly established bipartite convex splitting (Theorem~\ref{theorem:bipartite_convex_splitting}) to achieving a bipartite Quantum State Splitting protocol using $r_1$ and $r_2$ bits of classical communication with the desired error $\eps$.
 
Let $M = 2^{r_1}$ and $K = 2^{r_2}$.
We fix two states $\tau_{A_1'}$  and $\tau_{A_2'}$ that will be specify later.
To begin the protocol, we let the sender (Alice) and the first receiver (Bob $1$) share $M$-copies of  entanglement $\otimes_{m\in[M]} |\tau\rangle_{A_{1,m}' B_{1,m}}$, and Alice and the second receiver (Bob $2$) share $K$-copies of entanglement $\otimes_{k\in[K]} |\tau\rangle_{A_{2,k}' B_{2,k}}$, 
where for each $(m,k) \in [M]\times [K]$, Bob $1$ holds register $B_{1,m} \cong B_1$  that purifies Alice's register $A_{1,m}' \cong A_1'$, and Bob $2$ holds register $B_{2,k} \cong B_2$  that purifies register Alice's $A_{2,k}' \cong A_2'$.
Hence, we start with the following pure state:
\begin{align} \label{eq:initial_bipartite}
|\overline{\omega}\rangle := |\rho\rangle_{A A_{1}' A_{2}' R } 
\otimes_{m\in[M]} |\tau\rangle_{A_{1,m}' B_{1,m}} 
\otimes_{k\in[K]} |\tau\rangle_{A_{2,k}' B_{2,k}}. 
\end{align}

Suppose, hopefully, by the protocol, we end up with the following pure state:
\begin{align} \label{eq:target_bipartite}
\begin{split} 
|{\omega}\rangle
&:= \frac{1}{\sqrt{MK}} \sum_{(m,k)\in [M] \times [K] } |m\rangle_M |k\rangle_K
|\rho\rangle_{A B_{1,m} B_{2,k} R } |0\rangle_{A_{1,m}' A_{2,k}'}
\otimes_{\bar{m} \in [M]\setminus \{m\}} |\tau\rangle_{A_{1,\bar{m}}' B_{1,\bar{m}}}
\otimes_{\bar{k} \in [K]\setminus \{k\}} |\tau\rangle_{A_{2,\bar{k}}' B_{2,\bar{k}}},
\end{split}
\end{align}
where Alice holds registers $M$, $K$, $A$, $A_{1,[M]}' := A_{1,1}'\ldots A_{1,M}'$ and $A_{2,[K]}' := A_{2,1}' \ldots A_{2,K}'$,
and Bob 1 and Bob $2$ hold registers $B_{1,[M]} := B_{1,1} \ldots B_{1,M}$ and $B_{2,[K]} := B_{2,1} \cdots B_{2,K}$, respectively.
Here, we shorthand for $A_{1,[M]}' := A_{1,1}'\ldots A_{1,M}'$ and $A_{2,[K]}' := A_{2,1}' \ldots A_{2,K}'$ (and similarly for $B_{1,[M]}$ and $B_{2,[K]}$) to ease the burden of notation.
(We slightly abuse notation to use $M$ and $K$ denoting registers representing classical systems $[M]$ and $[K]$.)
Alice measures on registers $M$ and $K$, and send on the outcome $m$ to Bob $1$ via $r_1$ bits of classical communication, and the outcome $k$ to Bob $2$ via $r_2$ bits of classical communication.
Then, the two receivers can pick up registers $B_{1,m}$ and $B_{2,k}$, respectively, to end up with $|\rho\rangle_{A B_{1,m} B_{2,k} R } \cong |\rho\rangle_{A B_1 B_2 R }$, which is exactly the target state we aimed for the bipartite Quantum State Splitting protocol.

In what follows, we will show that there exists a local operation protocol at Alice such that we can approximate $|{\omega}\rangle$ in Eq.~\eqref{eq:target_bipartite} within an error $\eps$ no larger than the right-hand side of Eq.~\eqref{eq:error_bipartite_QSS}.
Note that a reduced state of $|\overline{\omega}\rangle$ is
\begin{align}
\overline{\omega}_{A_{1,[M]}'A_{2,[K]}'R} &= \otimes_{m\in[M]} \tau_{A_{1,m}'} \otimes_{k\in[K]} \tau_{A_{2,k}'} \otimes \rho_R.
\end{align}
Then, the bipartite convex split lemma (Theorem~\ref{theorem:bipartite_convex_splitting}) guarantees that $\overline{\omega}_{A_{1,[M]}'A_{2,[K]}'R}$ can be approximated  
by a state
\begin{align} %\label{eq:omega_bipartite_QSS}
\begin{split}
\omega_{A_{1,[M]}'A_{2,[K]}' R}
&:= \frac{1}{MK} \sum_{(m,k)\in [M] \times [K] }
\rho_{A_{1,m}' A_{2,k}' R } 
\otimes_{\bar{m} \in [M]\setminus \{m\}} \tau_{A_{1,\bar{m}}'}
\otimes_{\bar{k} \in [K]\setminus \{k\}} \tau_{A_{2,\bar{k}}'},
\end{split}
\end{align}
within an error $\eps'$ (in terms of trace distance) no larger than the right-hand side of Eq.~\eqref{eq:error_bipartite_convex_splitting} (by substituting register $A$ by $A_1'$, $B$ by $A_2'$, and $E$ by $R$).
Observe that $\omega_{A_{1,[M]}'A_{2,[K]}' R}$ is a reduced state of the desired pure state $|{\omega}\rangle$ given in Eq.~\eqref{eq:target_bipartite}. 
Hence, by Uhlmann's theorem (Fact~\ref{fact:Uhlmann}), there exists an isometry $V$ acting on register $AA_1'A_2'A_{1,[M]}'A_{2,[K]}'$ to register $MKAA_{1,[M]}'A_{2,[K]}'$ such that $V |\overline{\omega}\rangle $ is $\sqrt{2\eps'}$-close (in trace distance) to $|\omega\rangle$.
Moreover, since the isometry $V$ only acts on Alice's registers, this constitutes the bipartite Quantum State Splitting protocol with error $\eps$ no larger than $\sqrt{2\eps'}$.
% Lastly, we optimize the states $\tau_{A_1'}$  and $\tau_{A_2'}$ to conclude the proof.
\end{proof}

The above bipartite Quantum State Splitting is straightforwardly genearlized to any $L$-party case as follows.
\begin{shaded_theorem}[$L$-party Quantum State Splitting] \label{theorem:multipartite_QSS}
For any pure state $\rho_{AA_1'A_2'\ldots A_L' R} = |\rho\rangle\langle \rho|_{AA_1'A_2'\ldots A_L' R} $ , there exists a $(\vec{r}, \eps)$ $L$-party Quantum State Splitting protocol for $\rho_{RAA_1'A_2'\ldots A_L' R}$ with entanglement $|\tau\rangle_{{A}_\ell' {B}_\ell}$, $\ell \in [L]$ such that
	\begin{align} \label{eq:error_multipartite_QSS}
	\eps &\leq
	\sqrt{ \sum\nolimits_{\varnothing\neq S\subseteq[L] } 2^{|S|} \cdot 2^{ - 
    E_{r_S}\left( \rho_{A_S' R} \,\Vert\, \tau_{A_S'}\right)}
    },
    \end{align}
    where the error-exponent function is defined in Eq.~\eqref{eq:error_exponent_tau},
    $r_S:= \sum_{ \ell \in S } r_\ell$, $A_S'$ denotes  systems $A_\ell'$ for all $\ell \in S$, and $\tau_{A_S'} := \otimes_{\ell\in S} \tau_{A_\ell'}$. Moreover, the error exponents are all positive if and only if for all $\varnothing \neq S \subseteq [L]$ ,
\begin{align}
    % \sum\nolimits_{\ell\in[S]} r_\ell \
    r_S> I\left( \rho_{A_S' R} \,\Vert\, \tau_{A_S'}\right).
\end{align}
\end{shaded_theorem}

%%%%%%%%%%%%%%%%%%%%%%%%%%%%%%%%

\section{Entanglement-Assisted Quantum Broadcast Channel Simulation} \label{sec:simulation}

Let $\mathscr{N}_{A\to B_1 B_2\cdots B_L}$ be an $L$-receiver quantum broadcast channel, which is a completely positive and trace-preserving map from system $A$ to $L$ systems $B_1 B_2 \ldots B_L$.
% Moreover, we assume that the underlying Hilbert spaces $\mathsf{A}$, $\mathsf{B}_1$, $\mathsf{B}_2, $$\ldots$, $\mathsf{B}_L$ are all finite-dimensional.

\begin{definition}[$L$-party Quantum Broadcast Channel Simulation]\label{def:channel-simulation}
Let $\mathscr{N}_{A\to B_1 B_2\cdots B_L}$ be a quantum broadcast channel as input of the simulation protocol.
\begin{enumerate}[1.]
	\item 
	Sender holds registers $A$. Each Receiver holds register
	$B_1$, $B_2$, $\ldots$, $B_L$, respectively.
%	and $R$ is an inaccessible Reference.
	
	\item Free resource of perfect  entanglement is shared between Sender (holding registers $\bar{A}_1' \bar{A}_2' \ldots \bar{A}_L'$) and each of the $L$ Receivers (holding register $\bar{B}_\ell'$, $\ell \in [L]$).
	
	\item Sender applies a local operation on her systems and sends an $L$-tuple $\vec{r} := (r_1, r_2, \ldots r_L)$
	bits of classical information to each of the $L$ Receivers, respectively.
	%	quantum system $M_\text{q}$.

	\item 
	Upon receiving the message, each Receiver applies a local operation on his own system.
\end{enumerate}		
	A $( \vec{r}, \eps)$ $L$-party Quantum Broadcast Channel Simulation protocol for $\mathscr{N}_{A\to B_1 B_2\cdots B_L}$ satisfies
	\begin{align}
	\frac12\left\| \widehat{\mathscr{N}}_{A\to B_1 B_2\cdots B_L} -\mathscr{N}_{A\to B_1 B_2\cdots B_L} \right\|_{\diamond} \leq \eps,
	\end{align}
	where $\widehat{\mathscr{N}}_{A\to B_1 B_2\cdots B_L}$ is the effectively resulting linear transformation from Sender's register $A$ to $L$ Receivers' registers $B_1$, $\ldots$, $B_L$. 
	The $L$-tuple $\vec{r}$ denotes the classical communication costs.

\end{definition}

\begin{definition}[Capacity region for i.i.d.~broadcast channel simulation]
	The capacity region of simulating $\mathscr{N}_{A\to B_1 B_2\cdots B_L}$, denoted as $\mathcal{C}(\mathscr{N}_{A\to B_1 B_2\cdots B_L})$ is the closure of
	\begin{align}
	% \mathcal{C}( \mathscr{N} ) := \texttt{cl} 
    \bigcup_{\eps>0} \left\{ \vec{r} \in \mathds{R}^L: \exists \, (n \vec{r}, \eps)\text{-$L$-party Quantum Broadcast Channel Simulation protocol for $\mathscr{N}^{\otimes n}$}  \right\}.
	\end{align}
%	We denote by $\mathcal{C}( \mathscr{N} ) := \limsup_{\eps \to 0} \mathcal{C}_\eps( \mathscr{N} )$ as its capacity region.
\end{definition}
	
\begin{shaded_theorem}[Non-asymptotic achievability for $2$-party Quantum Broadcast Channel Simulation] \label{theorem:bipartite_simulation}
	Let $\mathscr{N}_{A\to B_1 B_2}$ be an $L$-receiver quantum broadcast channel.
	For any integer $n$, there exists an $(n r_1, n r_2, \eps)$ Quantum Broadcast Channel Simulation protocol for $\mathscr{N}_{A\to B_1 B_2}^{\otimes n}$ satisfying
	\begin{align} 
	\eps &\leq   k_n \cdot \sqrt{ 2 \cdot \delta_{1,2} 
		+ \delta_1
		+ \delta_2 }, \label{eq:error_bipartite_simulation} \\
	\delta_{1,2} &:=  2^{- n E_{r_1+r_2}(\mathscr{N}_{A\to B_1 B_2}) }, \\
	\delta_1 &:=	 2^{- n E_{r_1}(\mathscr{N}_{A\to B_1} ) }, \\
	\delta_2 &:=  2^{- n E_{r_2}(\mathscr{N}_{A\to  B_2}) },
	\end{align}
	% where the maximization is over all pure states $\theta_{AR}$ (where $R$ is a register purifying $A$),
	% $\rho_{B_1 B_2 R} := (\mathscr{N}_{A\to B_1B_2}\otimes \id_R) (\theta_{AR})$, and
	where the error-exponent functions are defined in Eq.~\eqref{eq:error_exponent_channel}, and
    the polynomial prefactor is
	$k_n := (n+1)^{\frac{5}{2} (|\mathsf{A}|^2-1) } $.
 % The error-exponent function is defined in \eqref{eq:error_exponent_mutual}.
	Moreover, the three exponents $E_{r_1+r_2}(\mathscr{N}_{A\to B_1 B_2})$ , $E_{r_1}(\mathscr{N}_{A\to B_1})$, $E_{r_2}(\mathscr{N}_{A\to B_2})$ are all positive
	if and only if 
	\begin{align} \label{eq:region_original_bi_simulation}
	\begin{cases}
	& r_1 + r_2 > I( \mathscr{N}_{A\to B_1 B_2} ),  \\
	& r_1 > I( \mathscr{N}_{A\to B_1} ), \\
	& r_2 > I( \mathscr{N}_{A\to B_2} ). \\
	\end{cases}
	\end{align}
	% for all purification $\theta_{AR}$.
\end{shaded_theorem}

\begin{shaded_theorem}[Capacity region for $2$-party Quantum Broadcast Channel Simulation] \label{theorem:QRST_bipartite}
	Let $\mathscr{N}_{A\to B_1 B_2}$ be a quantum broadcast channel.
	The capacity region of simulating $\mathscr{N}_{A\to B_1 B_2}$ is given by
	\begin{align} \label{eq:capacity_region_bipartite}
	\mathcal{C}(\mathscr{N}_{A\to B_1 B_2}) &=
	% \bigcap_{\theta_{AR}}
	\left\{ \vec{r} \in \mathds{R}^2:
	r_1 + r_2 \geq I(\mathscr{N}_{A\to B_1 B_2} ), \,
	r_1 \geq I( \mathscr{N}_{A\to B_1}), \,
	r_2 \geq I( \mathscr{N}_{A\to B_2} )
	\right\}.
	\end{align}
	% where the intersection is for all purification $\theta_{AR}$ and 	$\rho_{B_1 B_2 R} := \mathscr{N}_{A\to B_1B_2}\otimes \id_R (\theta_{AR})$.
\end{shaded_theorem}

\begin{remark} \label{remark:coherent_feedback_simulation}
    Though our target is to simulate the broadcast channel $\mathscr{N}_{A\to B_1 B_2}$, the established achievability given in Theorems~\ref{theorem:bipartite_simulation} and \ref{theorem:QRST_bipartite} are actually stronger for the \emph{coherent feedback simulation} of $\mathscr{N}_{A\to B_1 B_2}$; namely, we are able to simulate the isometry $\mathscr{V}_{A\to B_1 B_2 E}$, where the complementary part $E$ is retained at Sender \cite{Win06, BDH+14}.
\end{remark}

\begin{remark}
    The minimax identity given in Proposition~\ref{fact:properties_exponent}-\ref{item:minimax_exponnet} shows that the error-exponent function of channel in Theorem~\ref{theorem:bipartite_simulation} can be viewed as the error-exponent function for the channel output state induced by the worst-case input state.
    Namely, the error terms (i.e.~$\delta_{1,2}$, $\delta_1$, and $\delta_2$ in Theorem~\ref{theorem:bipartite_simulation}) for channel simulation are dominated by the worst-case input states, respectively.
\end{remark}

\begin{proof}[Proof of Theorem~\ref{theorem:QRST_bipartite}]
	The achievability directly follows from the exponential decreases of error given in Theorem~\ref{theorem:bipartite_simulation}, and noting that $\lim_{n\in\infty} \frac1n \log k_n = 0$.

	For the converse, we first note that the the requirements on the separate rates $r_1$ and $r_2$ follow from the converse of the respective single-sender single-receiver reverse Shannon theorem \cite{BSS+02,BDH+14,BCR11}. That is, for an asymptotically vanishing error we need
	\begin{align}
	\text{$r_1 \geq I( \mathscr{N}_{A\to B_1})$ and $r_2 \geq I( \mathscr{N}_{A\to B_2} )$.}
	\end{align}
	For the rate sum constraint $r_1+r_2$ the argument is similar as in the classical case \cite[Theorem 36]{CRB+22} and a simple version for the first order asymptotics as needed here is as follows. The idea is to analyze the correlations between the purifying reference system $R$ and the respective receivers in terms of the multipartite mutual information. Any quantum broadcast simulation protocol applies an encoder on the sender's side, uses classical communication at a rate $r_i$ from the sender to the $i$-th receiver's side, and then applies local decoders on the receiver's end. The multipartite mutual information is monotone under local operations at the receiver's end (similarly as the mutual information) and increases at most by $nr_i$ by sending bits at a rate $r_i$ to the $i$-the receiver, as shown by the following dimension bound (cf. the textbook methods\cite{wilde2011classical})
	\begin{align}
		I(C_1X_1:C_2X_2:R)_\sigma\leq I(C_1:C_2:R)_\sigma+\log|\mathsf{X}_1|+\log|\mathsf{X}_2|
	\end{align}
	for any classical-quantum state $\sigma_{X_1X_2C_1C_2}$, classical on systems $X_1X_2$. However, at the end of any $\varepsilon$-good protocol (Definition \ref{def:channel-simulation}) for a purified i.i.d.~input state $(\psi^\rho_{AR})^{\otimes n}$, the output state on the relevant systems has to be $\varepsilon$-close in variational distance to $\left[(\mathscr{N}_{A\to B_1B_2}\otimes\mathcal{I}_R)(\psi^\rho_{AR})\right]^{\otimes n}$. As the continuity of the multipartite mutual information is inherited from the continuity of the von Neumann entropy (see, e.g., \cite{Winter16} for tight estimates), we need at the end of the protocol that the multipartite mutual information on the relevant systems obeys
	\begin{align}
	n\cdot I(B_1:B_2:R)_{(\mathscr{N}\otimes\mathcal{I})(\psi^{\rho})}\leq nr_1+nr_2+ O(n\varepsilon\log|\mathsf{B}_1||\mathsf{B}_2|)\,,
	\end{align}
	where we also employed that the multipartite mutual information is additive on tensor product states. Since this has to hold for any i.i.d.~input states $\rho_A^{\otimes n}$, the claim follows from taking the limit $\varepsilon\to0$ for asymptotically perfect protocols.
%except for the entanglement-assistance $\bar{A}_i'\bar{B}_i'$ that is uncorrelated with the reference system, and as such we have that $I(\bar{B}_1':\bar{B}_2':R)=0$ for the corresponding state. 
\end{proof}

\begin{proposition}[Error exponent] \label{proposition:expoent_simulation}
	Let $\mathscr{N}_{A\to B_1 B_2}$ be a quantum broadcast channel.
	There exists a sequence of $(n \vec{r}, \eps_n)$ Quantum Broadcast Channel Simulation protocol for $\mathscr{N}_{A\to B_1 B_2}^{\otimes n}$ satisfying achievable error exponent bound:
	\begin{align}
	\liminf_{n\to\infty} - \frac1n \log \eps_n \geq
	\frac12 \min \left\{  E_{r_1+r_2}(\mathscr{N}_{A\to B_1 B_2}) , 
	E_{r_1}(\mathscr{N}_{A\to B_1}),
	E_{r_2}(\mathscr{N}_{A\to B_2})
	\right\}.
	\end{align}
	Moreover, the overall error exponent in the lower bound is positive if and only if $\vec{r}$ is in the interior of the capacity region $\mathcal{C}(\mathscr{N})$ given in Eq.~\eqref{eq:capacity_region_bipartite}.
\end{proposition}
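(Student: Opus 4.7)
The plan is to derive Proposition~\ref{proposition:expoent_simulation} essentially as a direct corollary of the one-shot/non-asymptotic achievability in Theorem~\ref{theorem:bipartite_simulation}, applied to the i.i.d.~channel $\mathscr{N}_{A\to B_1 B_2}^{\otimes n}$, together with the positivity characterization in Proposition~\ref{fact:properties_exponent}-\ref{item:positivity_exponent} and the capacity region from Theorem~\ref{theorem:QRST_bipartite}. The main point is that Theorem~\ref{theorem:bipartite_simulation} already furnishes, for every finite $n$, an achievable simulation error
\begin{align}
\eps_n \;\leq\; k_n \cdot \sqrt{\,2\cdot 2^{-nE_{r_1+r_2}(\mathscr{N}_{A\to B_1B_2})} + 2^{-nE_{r_1}(\mathscr{N}_{A\to B_1})} + 2^{-nE_{r_2}(\mathscr{N}_{A\to B_2})}\,},
\end{align}
where $k_n$ is polynomial in $n$, so the exponent extraction is a completely mechanical computation.

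First, I would fix the rate pair $\vec{r}=(r_1,r_2)$ and let $(n\vec{r},\eps_n)$ be the sequence of protocols guaranteed by Theorem~\ref{theorem:bipartite_simulation}. Taking $-\tfrac{1}{n}\log$ of the above inequality and using $\log(k_n)=O(\log n)$ (so $\tfrac{1}{n}\log k_n\to 0$), we obtain
\begin{align}
-\frac{1}{n}\log\eps_n \;\geq\; -\frac{1}{n}\log k_n \;-\; \frac{1}{2n}\log\!\Bigl(2\cdot 2^{-nE_{r_1+r_2}} + 2^{-nE_{r_1}} + 2^{-nE_{r_2}}\Bigr).
\end{align}
For the second term, I use the standard fact that for a finite sum of exponentials with uniformly bounded prefactors,
\begin{align}
\liminf_{n\to\infty} -\frac{1}{n}\log\!\Bigl(\textstyle\sum_{i=1}^{3} c_i\, 2^{-n E_i}\Bigr) \;=\; \min_{i\in\{1,2,3\}} E_i,
\end{align}
which follows from lower-bounding the sum by its largest summand (giving $\leq \min_i E_i$) and upper-bounding it by $3\max_i c_i \cdot 2^{-n\min_i E_i}$ (giving $\geq \min_i E_i$). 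Combining the square-root (which produces the factor $\tfrac{1}{2}$) with the vanishing polynomial correction yields the claimed lower bound
\begin{align}
\liminf_{n\to\infty} -\frac{1}{n}\log\eps_n \;\geq\; \frac{1}{2}\min\!\Bigl\{E_{r_1+r_2}(\mathscr{N}_{A\to B_1B_2}),\; E_{r_1}(\mathscr{N}_{A\to B_1}),\; E_{r_2}(\mathscr{N}_{A\to B_2})\Bigr\}.
\end{align}

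For the positivity characterization, the minimum on the right-hand side is strictly positive if and only if each of the three channel error-exponent functions is strictly positive. By Proposition~\ref{fact:properties_exponent}-\ref{item:positivity_exponent}, this is equivalent to
\begin{align}
r_1+r_2 > I(\mathscr{N}_{A\to B_1B_2}),\qquad r_1 > I(\mathscr{N}_{A\to B_1}),\qquad r_2 > I(\mathscr{N}_{A\to B_2}),
\end{align}
which by Theorem~\ref{theorem:QRST_bipartite} is precisely the condition that $\vec{r}$ lies in the interior of $\mathcal{C}(\mathscr{N}_{A\to B_1B_2})$. No real obstacle is present here; the only point to be slightly careful about is the harmless polynomial prefactor $k_n$ and the use of $\liminf$ rather than $\lim$, both handled above by the monotonicity of the $\min$-rate identity for finite sums of exponentials.
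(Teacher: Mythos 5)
Your proposal is correct and follows exactly the route the paper intends: Proposition~\ref{proposition:expoent_simulation} is a direct corollary of the finite-blocklength bound in Theorem~\ref{theorem:bipartite_simulation}, obtained by taking $-\tfrac{1}{n}\log$ of the error bound, noting the polynomial prefactor $k_n$ contributes nothing to the exponent, extracting the minimum exponent from the finite sum of exponentials (with the factor $\tfrac12$ coming from the square root), and invoking Proposition~\ref{fact:properties_exponent}-\ref{item:positivity_exponent} together with the capacity region of Theorem~\ref{theorem:QRST_bipartite} for the positivity characterization. No gaps.
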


\begin{remark}
The optimal error exponent (under \emph{channel purified distance}) was established for point-to-point quantum channel simulation \cite{LY21b}.
Even for the special case of point-to-point quantum channel simulation, it is not clear whether the error exponent given in Proposition~\ref{proposition:expoent_simulation} or the result in \cite{LY21b} will give a better error exponent under diamond norm due to the fundamental different expressions of the error-exponent functions.

Our intention in this paper, however, is not to derive the optimal error exponent, but to devise a machinery for analyzing one-shot quantum broadcast channel simulation and the corresponding capacity region.
\end{remark}

% \begin{comment}
% \medskip
The overall error exponent for Quantum Broadcast Channel Simulation is positive if and only if the classical communication cost $\vec{r}$ is in the interior of the capacity region $\mathcal{C}(\mathscr{N})$ as shown in Theorem~\ref{theorem:bipartite_simulation} and Proposition~\ref{proposition:expoent_simulation}.
One may wonder if the broadcast channel simulation is still achievable as $\vec{r}$ asymptotically converges to the boundary $\mathcal{C}(\mathscr{N})$? Note that in this scenario the overall error exponent will vanish asymptotically.
In the following Proposition~\ref{proposition:moderate_simulation}, we show that the broadcast channel simulation is still achievable once $\vec{r}$ converges to the boundary $\mathcal{C}(\mathscr{N})$ of speed $\Theta(n^{-t})$ for some $t\in(0,\sfrac12)$.
We call such $\Theta(n^{-t})$ a \emph{strictly moderate sequence} \cite{HW16,WH17,CG22,RTB23} and such a study as a \emph{moderate deviation analysis} (compared to the \emph{large deviation analysis} given in Theorem~\ref{theorem:bipartite_simulation} and Proposition~\ref{proposition:expoent_simulation}, in which $\vec{r}$ is bounded away from the boundary of $\mathcal{C}(\mathscr{N})$).

% We call $(a_n)_{n\in\mathds{N}}$ to be a \emph{strong moderate deviation sequence} if $a_n = \Theta(n^{-t})$ for some $t\in(0,\sfrac12)$.
 
\begin{proposition}[Achievability for moderate deviation analysis] \label{proposition:moderate_simulation}
	Let $\mathscr{N}_{A\to B_1 B_2}$ be a quantum broadcast channel on finite-dimensional Hilbert spaces.
 % , and let
	% $(a_n)_{n\in\mathds{N}}$, $(b_n)_{n\in\mathds{N}}$, and $(c_n)_{n\in\mathds{N}}$ be any strong moderate deviation sequences.
	For any sequence $\vec{r}_n := (r_{1,n}, r_{2,n})\in \mathcal{C}(\mathscr{N}_{A\to B_1 B_2})$ satisfying
    \begin{align}
        a_n := \min\left\{
        \frac{r_{1,n} - I(\mathscr{N}_{A\to B_1})}{\sqrt{V(\mathscr{N}_{A\to B_1})}},
        \frac{r_{2,n} - I(\mathscr{N}_{A\to B_2})}{\sqrt{V(\mathscr{N}_{A\to B_1})}}
        \frac{r_{1,n} + r_{2,n} - I(\mathscr{N}_{A\to B_1 B_2})}{\sqrt{V(\mathscr{N}_{A\to B_1 B_2})}}
        \right\}
        = \Theta(n^{-t})
    \end{align}
    for some $t\in (0,\sfrac12)$,
	% \begin{align}
	% r_{1,n} = I(B_1:R)_\rho + a_n,
	% \text{or }
	% r_{2,n} = I(B_2:R)_\rho + b_n,
	% \text{or }
	% r_{1,n} + r_{2,n} = I(B_1:B_2:R)_\rho + c_n,
	% \end{align}
	then
	there exists a sequence of $(n \vec{r}_n, \eps_n)$ Quantum Broadcast Channel Simulation protocols for $\mathscr{N}_{A\to B_1 B_2}^{\otimes n}$ such that
    \begin{align}
        \liminf_{n\to\infty} -\frac{1}{n a_n^2} \log \eps_n \geq \frac14.
    \end{align}
	% \begin{align}
	% \eps_n \precsim \max_{ \theta_{AR} } \max\left\{
	% 2^{ - \frac{ n a_n^2}{4 V(B_1:R)_\rho} },
	% 2^{ - \frac{ n b_n^2}{4 V(B_2:R)_\rho} },
	% 2^{ - \frac{ n c_n^2}{4 V(B_1:B2:R)_\rho} }.
	% \right\}
	% \end{align}
\end{proposition}

\begin{remark}
For the special case of point-to-point quantum channel simulation, Proposition~\ref{theorem:multipartite_simulation} 
translates to the following scenario: given $\eps_n = 2^{- n a_n^2}$, then there exists a sequence of $(nr_n, \eps_n)$-quantum channel simulation protocols for $\mathscr{N}_{A\to B}^{\otimes n}$ such that $r_n \leq I(\mathscr{N}) + 2\sqrt{V(\mathscr{N}) }\cdot a_n$.
Such an achievable rate for classical communication cost coincides with the result given in \cite{RTB23} (under \emph{channel purified distance}).
We expect that the best achievable moderate deviation expansion of the classical communication cost under diamond norm is $I(\mathscr{N}) + \sqrt{2V(\mathscr{N})}\cdot a_n$, which is left as a future work.

% implies that for rate of communication cost $r_n = I(\mathscr{N}) + a_n$, the error is upper bounded by $2^{- \sfrac{na_n^2}{4}}$ asymptotically (by ignoring the higher-order terms).
% We expect that the best decaying asymptotics (under diamond norm) should be $2^{- \sfrac{na_n^2}{2}}$ and left it as a future work.

% In \cite{RTB23}, an optimal characterization for point-to-point channel simulation under the \emph{channel purified distance} was proved, in which the same error decaying rate $2^{- \sfrac{na_n^2}{4}}$ was also obtained.
\end{remark}

The proof of Proposition~\ref{proposition:moderate_simulation} follows from the achievability given in Theorem~\ref{theorem:bipartite_simulation}, the properties of error-exponent function (Proposition~\ref{fact:properties_exponent}), and 
the standard moderate deviation analysis given in \cite{Hao-Chung, CH17, CG22}.

\begin{proof}

Let us first consider the unipartite case, i.e.~$\mathscr{N}_{A\to B}$ provided that $V(\mathscr{N}_{A\to B}) >0$.
Let 
\begin{align}
    r_n &:= I(\mathscr{N}) + a_n,
    % \psi_n &:= \argmax I(B:R)_{\mathscr{N}(\psi)}
\end{align}
where $(a_n)_{n\in\mathds{N}}$ is any positive sequence satisfying $a_n = \Theta(n^{-t})$ for some $t\in(0,\sfrac12)$.
We recall Proposition~\ref{fact:properties_exponent}-\ref{item:limiting_exponnet}:
\begin{align}
	\liminf_{n\to \infty}\frac{E_{r_n}(\mathscr{N})}{a_n^2} 
	&\geq \frac{1}{2V(\mathscr{N})}. \label{eq:moderate_goal_uni0}
\end{align}
Then by Theorem~\ref{theorem:bipartite_simulation} (with $L=1$), we estimate the error probability for simulating $\mathscr{N}_{A\to B}^{\otimes n}$ using rate $n r_n$ as
\begin{align}
	\liminf_{n\to \infty} -\frac{1}{n a_n^2 }\log \eps(\mathscr{N}_{A\to B}^{\otimes n}) &\geq 
	\liminf_{n\to \infty} -\frac{1}{n a_n^2 }\log \left\{ 
	k_n \sqrt{ 2^{-E_{n r_n}(\mathscr{N}^{\otimes n})  }  } \right\} \\
	&= \liminf_{n\to \infty}\left\{ \frac{E_{n r_n}(\mathscr{N}^{\otimes n})  }{2 n a_n^2}
	 - \log \frac{k_n}{n a_n^2}  \right\} \\
	&=  \liminf_{n\to \infty}\left\{ \frac{E_{r_n}(\mathscr{N})  }{2 a_n^2}
	- \log \frac{k_n}{n^{1 - 2t}} \right\} \\
	&\geq \frac{1}{4 V(\mathscr{N}_{A\to B})},
\end{align}
where the last line holds for any polynomial pre-factor $k_n$ for any $t\in(0,\sfrac12)$.
This leads to our assertion of the moderate deviation achievability for simulating point-to-point channel $\mathscr{N}_{A\to B}$ via re-scaling $r_n = I(\mathscr{N}) + a_n$ by $r_n = I(\mathscr{N}) + \sqrt{V(\mathscr{N})} \cdot a_n$.

Similar reasoning straightforwardly applies to the multipartite case, i.e.~for any non-empty subset $S\subseteq[L]$ (with fixed $L \in \mathds{N}$), we have
\begin{align}
	\liminf_{n\to \infty}\frac{E_{r_n}(\mathscr{N}_{A\to B_S})}{a_n^2} 
	&\geq \frac{1}{2V(\mathscr{N}_{A\to B_S})},
\end{align}
and thus
\begin{align}
	\liminf_{n\to \infty} -\frac{1}{n a_n^2 }\log \eps(\mathscr{N}_{A\to B_S}^{\otimes n}) 
	&\geq \frac{1}{4 V(\mathscr{N}_{A\to B_S})}.
\end{align}
Combined with Theorem~\ref{theorem:bipartite_simulation}.
The overall asymptotic error decay will be dominated by the worst channel output subset $S$. This concludes the proof.
\end{proof}

% \end{comment}

We now present the proof of bipartite quantum broadcast channel simulation.
\begin{proof}[Proof of Theorem~\ref{theorem:bipartite_simulation}]
Before diving into the proof, let us first elaborate on the proof structure. Our main technique relies on the multipartite Quantum State Splitting established in Section~\ref{sec:QSS} and the \emph{Post-Selection Technique} \cite{ChristKoenRennerPostSelect}. Note that the idea of unipartite Quantum State Splitting and the Post-Selection Technique have been used to prove point-to-point quantum channel simulation \cite{BCR11}. Unlike \cite{BCR11} working with the smooth entropy formalism, we will demonstrate how the Post-Selection Technique can work with R\'enyi information measures by employing certain properties shown in Lemmas~\ref{lemma:dimension_bound} and \ref{lemma:convexity}.

We start with the \emph{de Finetti} type input state: 
$\zeta_{AR}^n = \int \psi_{AR}^{\otimes n} \, \mathrm{d}(\psi_{AR})$,
where $\psi_{AR} \in \mathcal{S}(\mathsf{A}\otimes \mathsf{R})$ is a pure state, and the integration is with respect to the Haar measure on the unitary group acting on $\mathsf{A}\otimes \mathsf{R}$.
Moreover, we denote by $\zeta_{ARR'}^n$ a purification of $\zeta_{AR}^n$.
Let $\mathscr{U}_{A\to B_1 B_2 E}$ be a Stinespring dilation of the broadcast channel $\mathscr{N}_{A\to B_1 B_2}$.
Then, Sender first simulate a local isometry $\mathscr{N}_{A\to B_1 B_2}$ at her side to obtain the state
\begin{align}
\zeta_{B_1 B_2 E R R'}^n := \left(\mathscr{U}_{A\to B_1 B_2 E}^{\otimes n} \otimes \id_{RR'} \right) \zeta_{ARR'}^n.
\end{align}

Next, we apply the $2$-party Quantum State Splitting protocol (Theorem~\ref{theorem:bipartite_QSS}) with registers $A\leftarrow E$, $A_1'\leftarrow B_1$, $A_2'\leftarrow B_2$, and $R\leftarrow RR'$, to send the $B_1$-part to Receiver $1$ via $n r_1$ bits of classical communication and send the  $B_2$-part to Receiver $2$ via $n r_2$ bits of classical communication.
The pre-shared entanglement used in the protocol is
many copies of $|\zeta^n\rangle_{B_1 \bar{B}_1}$ and $|\zeta^n\rangle_{B_2 \bar{B}_2}$, where the first Receiver holds register $\bar{B}_1^n$ and the second Receiver holds register $\bar{B}_2^n$.
The above procedure is embodied by a simulated isometry $\widehat{\mathscr{U}}_{A\to B_1 B_2 E}^n$ acting on the state $\zeta_{ARR'}^n$.
Then, the resulting error, denoted by $\eps'(\zeta^n)$, is 
\begin{align}
\frac12 \left\| \left( \widehat{\mathscr{U}}_{A\to B_1 B_2 E}^n - \mathscr{U}_{A\to B_1 B_2 E}^{\otimes n} \right) (\zeta_{ARR'}^n)\right\|_1
&\leq \eps'(\zeta^n), \\
\eps'(\zeta^n) &:= \sqrt{ 2 \cdot \delta_{1,2}'(\zeta^n)
	+  \delta_1'(\zeta^n)
	+  \delta_2'(\zeta^n) }, \label{eq:after_bi_QSS} \\
\delta_{1,2}'(\zeta^n) &:= 2^{- E_{nr_1+nr_2}(B_1: B_2 : R R')_{(\mathscr{N}^{\otimes n} \otimes \id)(\zeta^n)} }, \\
\delta_1'(\zeta^n) &:=	2^{- E_{nr_1}(B_1:  RR')_{(\mathscr{N}^{\otimes n} \otimes \id)(\zeta^n)} }, \\
\delta_2'(\zeta^n) &:= 2^{- E_{nr_2}(B_2: RR')_{(\mathscr{N}^{\otimes n} \otimes \id)(\zeta^n)} },
\end{align}
where the error-exponent function is defined in Eq.~\eqref{eq:error_exponent_marginal}.
Moreover, by tracing out the system $E^n$ and the monotonicity of trace distance \cite{NC09}, we obtain the bound:
\begin{align}
\frac12 \left\| \left( \widehat{\mathscr{N}}_{A\to B_1 B_2}^{n} - \mathscr{N}_{A\to B_1 B_2}^{\otimes n} \right) (\zeta_{ARR'}^n)\right\|_1 \leq \eps'(\zeta^n).
\end{align}
Here, the channel $\widehat{\mathscr{N}}_{A\to B_1 B_2}^{n}$ is the effectively simulated proximity to our target $\mathscr{N}_{A\to B_1 B_2}^{\otimes n}$.

By using the Post-Selection Technique (Proposition~\ref{posti}), this guarantees that
\begin{align}
\begin{split} \label{eq:prefactor1}
\frac12 \left\| \widehat{\mathscr{N}}_{A\to B_1 B_2}^{n} - \mathscr{N}_{A\to B_1 B_2}^{\otimes n} \right\|_\diamond \leq (n+1)^{|\mathsf{A}|^2 - 1} \eps'(\zeta^n).
\end{split}
\end{align}
It remains to remove the auxiliary system $R'$ and to upper bound the error $\eps'(\zeta^n)$ with the one dominated by the worst-case state $\psi_{AR}^{\otimes n}$ in the mixture of the {de Finetti} type state $\zeta_{AR}^n$.

Note that we can assume $|\mathsf{R}'|\leq  (n+1)^{|\mathsf{A}|^2-1}$ by Fact~\ref{posti}.
Further, Lemma~\ref{lemma:dimension_bound} shows that, for each $\alpha>0$,
\begin{align}
I_\alpha(B_1 : B_2 : RR')_{(\mathscr{N}^{\otimes n} \otimes \id)(\zeta^n)} \leq I_\alpha(B_1 : B_2 : R)_{(\mathscr{N}^{\otimes n} \otimes \id)(\zeta^n)} + \frac{2\alpha}{\alpha-1} \log \left[ (n+1)^{|\mathsf{A}|^2 - 1} \right],
\end{align}
which by the definition of the error-exponent function in Eq.~\eqref{eq:error_exponent_marginal}, in turn, implies that
\begin{align} \label{eq:prefactor2}
E_{nr_1 + nr_2}(B_1: B_2 : RR')_{(\mathscr{N}^{\otimes n} \otimes \id)(\zeta^n)} \geq E_{nr_1 + nr_2}(B_1: B_2 : R)_{(\mathscr{N}^{\otimes n}  \otimes  \id)(\zeta^n)} - 2  \log \left[ (n+1)^{|\mathsf{A}|^2 - 1} \right];
\end{align}
likewise for $E_{nr_1 }(B_1 : RR')_{(\mathscr{N}^{\otimes n} \otimes  \id)(\zeta^n)} $ and $E_{nr_2}( B_2 : RR')_{(\mathscr{N}^{\otimes n} \otimes  \id)(\zeta^n)} $.

Using Carath\'eodory's type theorem (Fact~\ref{mario}), we can write
\begin{align}
\zeta_{AR}^n = \sum\nolimits_{i\in\mathcal{I}} p_i (\psi_{AR}^i)^{\otimes n}
\end{align}
for some pure state $\psi_{AR}^i$,  finite set $\mathcal{I}$ with $|\mathcal{I}| = (n+1)^{2|\mathsf{A}||\mathsf{R}|-2}$, and  probability distribution $\{p_i\}_{i\in\mathcal{I}}$.
Then, Lemma~\ref{lemma:convexity} with $L=2$ and the additivity of R\'enyi information (Proposition~\ref{fact:properties_sandwiched}-\ref{item:additivity_sandwiched}) show that
\begin{align}
I_\alpha(B_1 : B_2 : R)_{(\mathscr{N}^{\otimes n} \otimes \id)(\sum_i p_i (\psi^i)^{\otimes n})} &\leq \sum\nolimits_{i\in\mathcal{I}} p_i I_\alpha(B_1 : B_2 : R)_{(\mathscr{N}^{\otimes n} \otimes \id)((\psi^i)^{\otimes n})} + 2 \log \left[ (n+1)^{2|\mathsf{A}||\mathsf{R}| - 2} \right] \\
&= \sum\nolimits_{i\in\mathcal{I}} p_i n I_\alpha(B_1 : B_2 : R)_{(\mathscr{N} \otimes \id)(\psi^i)} + 2 \log \left[ (n+1)^{|\mathsf{A}||\mathsf{R}| - 2} \right] \\
&\leq \max_{i\in\mathcal{I}} n I_\alpha(B_1 : B_2 : R)_{(\mathscr{N}\otimes \id)(\psi^i)} + 2 \log \left[ (n+1)^{2|\mathsf{A}||\mathsf{R}| - 2} \right] \\
&\leq n I_\alpha(\mathscr{N}_{A\to B_1 B_2})+ 2 \log \left[ (n+1)^{2|\mathsf{A}||\mathsf{R}| - 2} \right].
\end{align}
This then implies that
\begin{align} \label{eq:prefactor3}
E_{nr_1 + nr_2}(B_1: B_2 : R)_{(\mathscr{N}^{\otimes n} \otimes \id)(\zeta^n)} \geq  n E_{r_1 + r_2}(\mathscr{N}_{A\to B_1 B_2}) - \log \left[ (n+1)^{2|\mathsf{A}||\mathsf{R}| - 2} \right].
\end{align}
Similar bounds hold for $E_{nr_1 }(B_1 : R)_{(\mathscr{N}^{\otimes n} \otimes \id)(\zeta^n)} $ and $E_{nr_2}( B_2 : R)_{(\mathscr{N}^{\otimes n} \otimes \id)(\zeta^n)} $ as well.

Lastly, combining the prefactors incurred in Eq.~\eqref{eq:prefactor1}, Eq.~\eqref{eq:prefactor2}, and Eq.~\eqref{eq:prefactor3}, we conclude the proof from Eq.~\eqref{eq:after_bi_QSS} with the overall polynomial prefactor:
\begin{align}
 (n+1)^{\frac{|\mathsf{A}|^2-1}{2}}  (n+1)^{|\mathsf{A}|^2-1}(n+1)^{{|\mathsf{A}||\mathsf{R}|-1}} 
\leq (n+1)^{\frac{5}{2} ( |\mathsf{A}|^2-1 ) } =: k_n.
\end{align} 
\end{proof}

The scenario of simulation arbitrary $L$-party quantum broadcast channel follows the same proof as in the bipartite case (Theorem~\ref{theorem:bipartite_simulation}) and the multipartite Quantum State Splitting established in Theorem~\ref{theorem:multipartite_QSS}.
\begin{shaded_theorem}[Non-asymptotic achievability for $L$-party Quantum Broadcast Channel Simulation] \label{theorem:multipartite_simulation}
	Let $\mathscr{N}_{A\to B_1 B_2\ldots B_L}$ be an $L$-receiver quantum broadcast channel.
	For any integer $n$, there exists an $(n \vec{r}, \eps)$ Quantum Broadcast Channel Simulation protocol for $\mathscr{N}_{A\to B_1 B_2\ldots B_L}^{\otimes n}$ satisfying
	\begin{align} 
	\eps &\leq   k_n \cdot \sqrt{ \sum\nolimits_{\varnothing \neq S \subseteq [L]}
    2^{|S|} \cdot 2^{- n E_{r_S}(\mathscr{N}_{A\to  B_S}) }
    },
	\end{align}
	where the error-exponent functions are defined in Eq.~\eqref{eq:error_exponent_channel}, 
    the polynomial prefactor is
	$k_n := (n+1)^{\frac{3+L}{2} (|\mathsf{A}|^2-1) } $,
    $r_S:= \sum_{ \ell \in S } r_\ell$, and $B_S$ denotes  systems $B_\ell$ for all $\ell \in S$.
\end{shaded_theorem}

\begin{shaded_theorem}[Capacity region for $L$-party Quantum Broadcast Channels Simulation] \label{theorem:QRST_multipartite}
	Let $\mathscr{N}_{A\to B_1 B_2\ldots B_L}$ be an $L$-receiver quantum broadcast channel.
	The capacity region of simulating $\mathscr{N}_{A\to B_1 B_2\ldots B_L}$ is given by
	\begin{align} \label{eq:capacity_region_multipartite}
	\mathcal{C}(\mathscr{N}_{A\to B_1 \ldots B_L}) =
	\left\{ \vec{r} \in \mathds{R}^L:
	\sum\nolimits_{ \ell \in S } r_\ell \geq I(\mathscr{N}_{A\to B_S}), \forall S \subseteq [L]
	\right\},
	\end{align}
	where $B_S$ denotes systems $B_\ell$ for all $\ell \in S$.
\end{shaded_theorem}
Moreover, the lower bound to the error exponent and the achievability of moderate deviations as stated in Propositions~\ref{proposition:expoent_simulation} and \ref{proposition:moderate_simulation} also immediately hold for $L$-receiver broadcast channel simulation.

%%%%%%%%%%%%%%%%%%%%%%%%%%%%%%%%

\section*{Acknowledgement}

We thank Pau Colomer Saus and Andreas Winter for discussions and agreeing to coordinate the timeline of uploading our and their concurrent work \cite{SW23} to the arXiv simultaneously. HC is supported by the Young Scholar Fellowship (Einstein Program) of the Ministry of Science and Technology, Taiwan (R.O.C.) under Grants No.~MOST 109-2636-E-002-001, No.~MOST 110-2636-E-002-009, No.~MOST 111-2636-E-002-001, No.~MOST 111-2119-M-007-006, and No.~MOST 111-2119-M-001-004, by the Yushan Young Scholar Program of the Ministry of Education, Taiwan (R.O.C.) under Grants No.~NTU-109V0904, No.~NTU-110V0904, and No.~NTU-111V0904 and by the research project ``Pioneering Research in Forefront Quantum Computing, Learning and Engineering'' of National Taiwan University under Grant No. NTU-CC-111L894605.''
HC is thankful for RWTH Aachen University for accommodating the visit when doing this project.
LG is partially supported by NSF grant DMS-2154903.
MB acknowledges funding by the European Research Council (ERC Grant Agreement No.~948139). MB thanks Patrick Hayden for suggesting the task of quantum broadcast channel simulation \cite{HD07}, and Navneeth Ramakrishnan \cite{Ramakrishnan23} and Michael Walter for discussions on the topic.

%H.-C.~Cheng are supported by the Young Scholar Fellowship (Einstein Program) of the Ministry of Science and Technology in Taiwan (R.O.C.) under Grant MOST 110-2636-E-002-009, and are supported by the Yushan Young Scholar Program of the Ministry of Education in Taiwan (R.O.C.) under Grant NTU-110V0904,  Grant NTU-CC-111L894605, and Grand NTU-111L3401.

%%%%%%%%%%%%%%%%%%%%%%%%%%%%%%%%

\appendix

\section{Technical Lemmas} \label{sec:lemmas}

\begin{fact}[Uhlmann's theorem \cite{Uhl76}, \cite{FG99}, {\cite[Lemma 2.2]{DHW08}}] \label{fact:Uhlmann}
	Let $\psi_{AB} = |\psi\rangle\langle \psi|_{AB}$ and $\varphi_{AC} = |\varphi\rangle\langle \varphi|_{AC}$ be two pure quantum states.
	Then, there exists an isometry $\mathscr{V}_{B\to C}$ satisfying
    \begin{align}
        \frac12\| \psi_{A} - \varphi_{A} \|_1 \leq \varepsilon \Rightarrow
		\frac12\| \mathscr{V}_{B\to C}( \psi_{AB} ) - \varphi_{AC} \|_1 =
		\mathrm{P}(\mathscr{V}_{B\to C}( \psi_{AB} ), \varphi_{AC} ) \leq \sqrt{ 2\varepsilon - \varepsilon^2 } \leq \sqrt{ 2\varepsilon }.
    \end{align}
	% \begin{enumerate}[(i)]
	% 	\item\label{item:purified}	$\displaystyle \mathrm{P}(\psi_{A}, \varphi_{A}) =
	% 	\mathrm{P}(\mathscr{V}_{B\to C}( \psi_{AB} ), \varphi_{AC} )$.
		
	% 	\item\label{item:trace} $\displaystyle \frac12\| \psi_{A} - \varphi_{A} \|_1 \leq \varepsilon \Rightarrow
	% 	\frac12\| \mathscr{V}_{B\to C}( \psi_{AB} ) - \varphi_{AC} \|_1 =
	% 	\mathrm{P}(\mathscr{V}_{B\to C}( \psi_{AB} ), \varphi_{AC} ) \leq \sqrt{ 2\varepsilon - \varepsilon^2 } \leq \sqrt{ 2\varepsilon }$.
	% \end{enumerate}
	% Here, $\mathrm{P}(\rho, \sigma) := \sqrt{ 1 - \|\sqrt{\rho}\sqrt{\sigma}\|_1}$ is the purified distance \cite{TCR10}.
\end{fact}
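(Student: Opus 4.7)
The plan is to derive the stated inequality chain by composing three well-known ingredients: the classical form of Uhlmann's theorem for fidelities of purifications, the identity relating trace distance to fidelity for pure states (purified distance), and the Fuchs--van de Graaf inequality bounding fidelity below in terms of trace distance. None of these steps is deep on its own; the content of the lemma is essentially the quantitative packaging that Uhlmann's isometry transports closeness of marginals to closeness of the full purifications with only the standard $\varepsilon \mapsto \sqrt{2\varepsilon - \varepsilon^2}$ loss.

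First I would invoke Uhlmann's theorem in the following form: for any two purifications $|\psi\rangle_{AB}$ and $|\varphi\rangle_{AC}$ of marginals $\psi_A$ and $\varphi_A$, one has
\begin{align}
F(\psi_A, \varphi_A) \;=\; \max_{\mathscr{V}_{B\to C}} \bigl|\langle \varphi|_{AC}\bigl(\mathds{1}_A \otimes V_{B\to C}\bigr)|\psi\rangle_{AB}\bigr|,
\end{align}
where the maximum ranges over isometries $V_{B\to C}$ (assuming $|\mathsf{C}|\geq|\mathsf{B}|$; otherwise one pads $C$ with an ancilla as is standard). I would then \emph{fix} $\mathscr{V}_{B\to C}$ to be an isometry achieving this maximum. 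This is the sole existential step in the whole proof; the remainder is a deterministic estimate for this particular $\mathscr{V}$.

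Next I would observe that $\mathscr{V}_{B\to C}(\psi_{AB})$ and $\varphi_{AC}$ are both pure states, so the well-known pure-state identity
\begin{align}
\tfrac12\bigl\| \mathscr{V}_{B\to C}(\psi_{AB}) - \varphi_{AC} \bigr\|_1 \;=\; \sqrt{1 - F\bigl(\mathscr{V}_{B\to C}(\psi_{AB}),\varphi_{AC}\bigr)^{2}} \;=\; \mathrm{P}\bigl(\mathscr{V}_{B\to C}(\psi_{AB}),\varphi_{AC}\bigr)
\end{align}
holds by direct computation in a two-dimensional subspace. By the Uhlmann choice of $\mathscr{V}$ the fidelity on the right equals $F(\psi_A,\varphi_A)$, so the display reduces to $\sqrt{1 - F(\psi_A,\varphi_A)^2}$. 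Finally I would apply the Fuchs--van de Graaf inequality $F(\psi_A,\varphi_A) \geq 1 - \tfrac12\|\psi_A - \varphi_A\|_1 \geq 1-\varepsilon$, substitute, and use monotonicity of $x\mapsto\sqrt{1-x^2}$ on $[0,1]$ to obtain
\begin{align}
\sqrt{1 - F(\psi_A,\varphi_A)^2} \;\leq\; \sqrt{1-(1-\varepsilon)^2} \;=\; \sqrt{2\varepsilon-\varepsilon^2} \;\leq\; \sqrt{2\varepsilon},
\end{align}
which is exactly the asserted chain of inequalities.

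There is no genuine obstacle here; the only bookkeeping subtlety is ensuring that Uhlmann's isometry exists as an isometry (as opposed to a partial isometry) from $\mathsf{B}$ into $\mathsf{C}$, which requires $|\mathsf{C}|\geq|\mathsf{B}|$ and is typically guaranteed in the applications of this lemma in the paper (where $C$ plays the role of a combined system of Alice together with shared entanglement, and thus has at least the dimension of $B$); if not, one enlarges $\mathsf{C}$ by an ancilla and composes with a fixed embedding. I would flag this dimension hypothesis briefly and then leave the verification of the displayed pure-state identity and the Fuchs--van de Graaf bound to the cited references \cite{Uhl76,FG99,DHW08}.
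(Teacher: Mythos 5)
Your proof is correct and assembles exactly the ingredients the paper cites for this Fact: Uhlmann's theorem in its fidelity-maximization form \cite{Uhl76}, the pure-state identity between trace distance and purified distance, and the Fuchs--van de Graaf bound \cite{FG99}, which is the standard argument behind \cite[Lemma 2.2]{DHW08}. The dimension caveat you flag (embedding $\mathsf{C}$ into a larger space if $|\mathsf{C}|<|\mathsf{B}|$) is the right one and is indeed harmless in the applications in the paper.
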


\begin{lemma_dimension_bound}[Dimension bound]
	For any states $\rho_{ABC} \in \mathcal{S}(\mathsf{A} \otimes \mathsf{B} \otimes \mathsf{C})$, $\tau_{A} \in \mathcal{S}(\mathsf{A})$, and $\alpha >0$, we have
	\begin{align}
	I_\alpha\left(\rho_{ABC} \,\Vert\, \tau_A \right)
	\leq I_\alpha\left(\rho_{AB} \,\Vert\, \tau_A \right) + \frac{2\alpha}{\alpha-1} \log |\mathsf{C}|.
	\end{align}
\end{lemma_dimension_bound}

\begin{proof}
	For every $\eps>0$, let $\sigma_B \in \mathcal{S}(\mathsf{B})$ be a density operator satisfying
	\begin{align} \label{eq:dimension_bound1}
	D_\alpha(\rho_{AB} \,\Vert\, \tau_A \otimes \sigma_B) \leq I_\alpha\left(\rho_{AB} \,\Vert\, \tau_A \right)   + \eps,
	\end{align}
	and denote by $\tilde{\sigma}_{BC} := \sigma_B \otimes \frac{\mathds{1}_C}{|\mathsf{C}|}$ for short.
	Note that $\rho_{ABC} \leq |\mathsf{C}| \cdot \rho_{AB} \otimes \mathds{1}_C$, which implies
	\begin{align}
	\frac{1}{|\mathsf{C}|^2} \rho_{ABC} \leq \rho_{AB} \otimes \frac{\mathds{1}_C}{|\mathsf{C}|}.
	\end{align}
	Then,
	\begin{align}
	\frac{1}{|\mathsf{C}|^2} \left( \tau_A \otimes \tilde{\sigma}_{BC} \right)^{\frac{1-\alpha}{2\alpha}}  \rho_{ABC} \left( \tau_A \otimes \tilde{\sigma}_{BC} \right)^{\frac{1-\alpha}{2\alpha}}
	\leq
	\left( \tau_A \otimes\tilde{\sigma}_{BC} \right)^{\frac{1-\alpha}{2\alpha}}
	\rho_{AB} \otimes \frac{\mathds{1}_C}{|\mathsf{C}|}
	\left( \tau_A \otimes \tilde{\sigma}_{BC} \right)^{\frac{1-\alpha}{2\alpha}}.
	\end{align}
	This implies, for all $\alpha> 0$,
	\begin{align}
	\frac{1}{|\mathsf{C}|^{2}} \left\| \left( \tau_A \otimes \tilde{\sigma}_{BC} \right)^{\frac{1-\alpha}{2\alpha}}  \rho_{ABC} \left( \tau_A \otimes \tilde{\sigma}_{BC} \right)^{\frac{1-\alpha}{2\alpha}} \right\|_\alpha
	\leq \left\| \left( \tau_A \otimes\tilde{\sigma}_{BC} \right)^{\frac{1-\alpha}{2\alpha}}
	\rho_{AB} \otimes \frac{\mathds{1}_C}{|\mathsf{C}|}
	\left( \tau_A \otimes \tilde{\sigma}_{BC} \right)^{\frac{1-\alpha}{2\alpha}} \right\|_\alpha.
	\end{align}
	By invoking the definition of the sandwiched R\'enyi divergence $D_\alpha(\rho\,\Vert\,\sigma) := \frac{\alpha}{\alpha-1} \log \left\|\sigma^{\frac{1-\alpha}{2\alpha}} \rho \sigma^{\frac{1-\alpha}{2\alpha}} \right\|_\alpha$, we have
	\begin{align}
	D_\alpha\left(\rho_{ABC} \,\Vert\, \tau_A \otimes \tilde{\sigma}_{BC} \right)
	- \frac{2\alpha}{\alpha-1} \log |\mathsf{C}|
	&\leq D_\alpha\left(\left.\rho_{AB} \otimes \frac{\mathds{1}_C}{|\mathsf{C}|} \,\right\Vert\, \tau_A \otimes \tilde{\sigma}_{BC} \right) \\
	&= D_\alpha\left(\rho_{AB}  \,\Vert\, \tau_A \otimes {\sigma}_{B} \right) \\
	&\leq  I_\alpha\left(\rho_{AB} \,\Vert\, \tau_A \right) +\eps,
	\end{align}
	where we have used the fact that the sandwiched R\'enyi divergence remains identical by appending a state.
	
	Lastly, by definition, noting that
	\begin{align}
	D_\alpha\left(\rho_{ABC} \,\Vert\, \tau_A \otimes \tilde{\sigma}_{BC} \right) \geq I_\alpha\left(\rho_{ABC} \,\Vert\, \tau_A \right),
	\end{align}
    and letting $\eps\to 0$ conclude the proof.
\end{proof}

%\begin{lemma}[Chain rule] \label{lemma:chain_rule}
%	For any $L \in \mathds{N}$ and $L$-partite state $\rho_{A_1A_2\ldots A_L}$, we have
%	\begin{align}
%		I(A_1: A_2: \cdots : A_L)_\rho &:= D\left(\rho_{A_1A_2\ldots A_L} \Vert \otimes_{\ell \in [L]} \rho_{A_\ell} \right) \\
%		&= \inf_{\sigma_{\ell} \in \mathcal{S}(\mathsf{A}_{\ell}), \, \ell = 2,\ldots, L}
%			D\left( \rho_{A_1A_2\ldots A_L} \Vert \rho_{A_1} \ten \sigma_{A_2} \ten\cdots \ten \sigma_{A_L} \right) \\
%		&= \inf_{\sigma_{\ell} \in \mathcal{S}(\mathsf{A}_{\ell}), \, \ell\in[L]}
%		D\left( \rho_{A_1A_2\ldots A_L} \Vert \otimes_{\ell \in [L]} \sigma_{A_\ell} \right).
%	\end{align}
%\end{lemma}
%
%\begin{proof}
%	The additivity of quantum relative entropy implies that, for all $\sigma_{\ell} \in \mathcal{S}(\mathsf{A}_{\ell}), \, \ell\in[L]$,
%	\begin{align}
%		D\left(\rho_{A_1A_2\ldots A_L} \Vert \otimes_{\ell \in [L]} \sigma_{A_\ell} \right)
%		&= D\left(\rho_{A_1A_2\ldots A_L} \Vert \otimes_{\ell \in [L]} \rho_{A_\ell} \right) + D\left( \otimes_{\ell \in [L]} \rho_{A_\ell} \Vert \otimes_{\ell \in [L]} \sigma_{A_\ell} \right) \\
%		&= D\left(\rho_{A_1A_2\ldots A_L} \Vert \otimes_{\ell \in [L]} \rho_{A_\ell} \right) + \sum\nolimits_{\ell \in [L] } D\left(\rho_{A_\ell} \, \Vert \, \sigma_{A_\ell} \right).
%	\end{align}
%	Then, by minimizing all $\sigma_{A_\ell}$ for all $\ell \in [L]$ and the positive definiteness of quantum relative entropy, we prove the claim.
%\end{proof}
%

\begin{lemma_convexity}[Convexity] %\label{lemma:convexity}
	Let $L$ be any integer and $\mathcal{I}$ be any finite set.
	Let $\rho_{A_1A_2\ldots A_L E} := \sum_{i\in\mathcal{I}} p_i \rho_{A_1A_2\ldots A_L E}^{i} $ and $\tau_{A_\ell} =  \sum_{i\in\mathcal{I}} p_i \tau_{A_{\ell}}^{i} $, $\ell\in [L]$ be statistical mixtures of states for any $p_i>0$, $\sum_{i\in\mathcal{I}} p_i  = 1$.
	Then, the following holds for every $\alpha \geq \sfrac12$,
	\begin{align}
	\begin{split}
	I_\alpha( \rho_{A_1\ldots A_L E} \,\Vert \otimes_{\ell \in [L]}  \tau_{A_\ell} ) &\leq \sum\nolimits_{i\in\mathcal{I} } p_i I_\alpha\left(\rho_{A_1\ldots A_L E}^{i} \,\Vert \otimes_{\ell\in[L]} \tau_{A_\ell}^{i}\right) + L \cdot H(\{p_i\}_{i\in\mathcal{I}})
	\\ &\le \sum\nolimits_{i \in \mathcal{I} }p_i I_\alpha\left(\rho_{A_1\ldots A_L E}^{i} \,\Vert \otimes_{\ell\in[L]} \tau_{A_\ell}^{i}\right) + L \log |\mathcal{I}|.
	\end{split}
	\end{align}
	% % and
	% % \begin{align}
	% % \begin{split}
	% % I_\alpha\left( A_1 : \cdots : A_L \right)_\rho &\leq \sum\nolimits_{i\in\mathcal{I} } p_i I_\alpha\left( A_1 : \cdots : A_L \right)_{\rho^i} + L \cdot H(\{p_i\}_{i\in\mathcal{I}})
	% % \\ &\le \sum\nolimits_{i \in \mathcal{I} }p_i I_\alpha\left( A_1 : \cdots : A_L \right)_{\rho^i} + L \log |\mathcal{I}|.
	% % \end{split}
	% \end{align}	
	Here, $H(\{p_i\}_{i\in\mathcal{I}}) := - \sum_{i\in\mathcal{I}} p_i \log p_i$ denotes the Shannon entropy.
%	Moreover, 
\end{lemma_convexity}

\begin{proof}
	Below, the subscript $\ell$ runs over all $[L]$ and the subscript $i$ runs over all $\mathcal{I}$ if we do not specify it.
	For each $i\in\mathcal{I}$ and every $\eps>0$, we let $\sigma_E^{i} \in  \mathcal{S}( \mathsf{E})$ to be a state satisfying
	\begin{align}
    D_\alpha\left(\rho_{A_1\ldots A_L E}^{i} \,\Vert \otimes_{\ell} \tau_{A_{\ell}}^{i} \ten \sigma_E^{i}\right)
	\leq I_\alpha\left(\rho_{A_1\ldots A_L E}^i \,\Vert\, \otimes_{\ell} \tau_{A_\ell}^{i} \right) + \eps .
	\end{align}
	Then, by the definition of the sandwiched R\'enyi divergence, we have
	\begin{align}
	&\sum\nolimits_i p_i D_\alpha\left(\rho_{A_1\ldots A_L E}^{i} \,\Vert \otimes_{\ell} \tau_{A_{\ell}}^{i} \ten \sigma_E^{i}\right) - L \cdot p_i \log p_i \\
	&=  \sum\nolimits_i p_i D_\alpha\left(\rho_{A_1\ldots A_L E}^{i} \,\Vert\,   p_i \tau_{A_{1}}^i \ten \cdots \ten p_i \tau_{A_{L}}^i  \ten \sigma_E^{i}\right) \\
	&\overset{\textnormal{(a)}}{\geq} D_\alpha\left( \sum\nolimits_i p_i\rho_{A_1\ldots A_L E}^{i} \,\Vert \sum\nolimits_i p_i \tau_{A_{1}}^i \ten \cdots \ten p_i \tau_{A_{L}}^i  \ten p_i \sigma_E^{i}\right)  \\
	&\overset{\textnormal{(b)}}{\geq} D_\alpha\left( \sum\nolimits_i p_i\rho_{A_1\ldots A_L E}^{i} \,\Vert \sum\nolimits_{i_1} p_{i_1} \tau_{A_{1}}^{i_1} \ten \cdots \ten \sum\nolimits_{i_L} p_{i_L} \tau_{A_{L}}^{i_1}  \ten \sum\nolimits_i p_i \sigma_E^{i}\right)  \\
	&= D_\alpha\left( \rho_{A_1\ldots A_L E} \,\Vert  \tau_{A_{1}} \ten \cdots \ten  \tau_{A_{L}} \ten \sum\nolimits_i p_i \sigma_E^{i}\right)  \\
	&\geq I_\alpha( \rho_{A_1\ldots A_L E} \,\Vert \otimes_{\ell \in [L]}  \tau_{A_\ell} ),
	\end{align}
	where (a) follows from the convexity of the sandwiched R\'enyi divergence \cite{MDS+13, WWY14, Bei13, FL13}, \cite[Theorem 3.16]{MO17},
	and (b) follows from the monotonically non-increasing map $\sigma \mapsto D_\alpha(\rho\,\Vert\,\sigma)$ \cite[Proposition 4]{MDS+13}, \cite[Lemma 3.24]{MO17}, and for each $\ell \in [L]$, $ p_i \tau_{A_\ell}^i \leq \sum\nolimits_{ i_\ell  \in [L]} p_{i_\ell} \tau_{A_\ell}^{{i}_\ell}$.
	By letting $\eps\to0$, we conclude the proof.
\end{proof}

\begin{fact_exponent}[Properties of error-exponent function] %\label{fact:properties_exponent}
For any multipartite state $\rho_{A_1 A_2 \ldots A_L B} $ and any quantum broadcast channel $\mathscr{N}_{A\to B_1 B_2 \ldots B_L}$, the error-exponent function satisfies the following properties.
\begin{enumerate}[(i)]
	\item%\label{item:positivity_exponent} 
    (Positivity) For any $r>0$,
    \begin{align}
        E_r (\rho_{A_1 A_2 \ldots A_L B} \, \Vert \otimes_{\ell\in[L]} \tau_{A_\ell}) > 0 &\Longleftrightarrow r > I(\rho_{A_1 A_2 \ldots A_L B} \, \Vert \otimes_{\ell\in[L]} \tau_{A_\ell}); \\
        E_r (\mathscr{N}_{A\to B_1 B_2 \ldots B_L}) > 0 &\Longleftrightarrow r > I( \mathscr{N}_{A\to B_1 B_2 \ldots B_L} ).
    \end{align}

	\item%\label{item:additivity_exponent} 
    (Additivity) 
    For any integer $n \in \mathds{N}$ and $r>0$,
	\begin{align}
	E_{nr} \left(\rho_{A_1 A_2 \ldots A_L B}^{\otimes n} \, \Vert \otimes_{\ell\in[L]} \tau_{A_\ell}^{\otimes n} \right)
	= n E_{{r}}\left(\rho_{A_1 A_2 \ldots A_L B} \, \Vert \otimes_{\ell\in[L]} \tau_{A_\ell} \right). 
	\end{align}

 \item%\label{item:minimax_exponnet} 
 (A minimax identity and saddle-point)
    Provided that the underlying Hilbert spaces are all finite dimensional, for any $r>0$,
    there exist a saddle-point $(\psi,\alpha) \in \mathcal{S}(\mathsf{A}\otimes \mathsf{R}) \times[1,2] $\cite[\S 36]{Roc70} such that
    \begin{align}
    E_r(\mathscr{N}_{A\to B_1 B_2 \ldots B_L}) &= 
     \inf_{\psi_{AR}} E_r(B_1: B_2: \cdots B_L: R)_{(\mathscr{N}\otimes \id_R)(\psi)} \\
     &= \frac{\alpha-1}{\alpha} \left( r - I_\alpha(B_1:B_2\cdots B_L: R)_{(\mathscr{N}\otimes \id_R)(\psi)} \right).
    \end{align}
	% \begin{align}
	% \inf_{ \tau_{A_1}\in\mathcal{S}(\mathsf{A}_1), \ldots, \tau_{A_L}\in\mathcal{S}(\mathsf{A}_L) }
	% E_r(\rho_{A_1 \ldots A_L} \, \Vert\, \otimes_{\ell\in[L]} \tau_{A_\ell}  )
	% = E_{{r}}(A_1 : \cdots : A_L)_\rho .
	% \end{align}
	
	\item%\label{item:limiting_exponnet} 
 (Limiting behavior)
    Provided that the underlying Hilbert spaces are all finite dimensional, then    
	for any sequence $r_n := I\left(\mathscr{N}_{A\to B_1 B_2\ldots B_L}\right) + a_n$ satisfying  
    $a_n \downarrow 0$,
	we have
    \begin{align}
	\liminf_{n\to \infty}\frac{E_{r_n}(\mathscr{N})}{a_n^2} 
	&\geq \frac{1}{2V(\mathscr{N}_{A\to B_1 B_2\ldots B_L})}.
    \end{align}
\end{enumerate}

\end{fact_exponent}

\begin{proof}
    Items~\ref{item:positivity_exponent}, and \ref{item:additivity_exponent} directly follows from Proposition~\ref{fact:properties_sandwiched}-\ref{item:limiting_sandwiched} and \ref{item:additivity_sandwiched}.

    By Proposition~\ref{item:concavity_sandwiched} and \ref{item:concavity_p_sandwiched},  the objective function
    \begin{align}
        (\rho_A, \alpha) \mapsto \frac{\alpha-1}{\alpha} \left( r - I_{\alpha}(B:R)_{\mathscr{N}(\psi_{AR}^\rho)} \right)
    \end{align}
    is lower-semicontinuous and convex in $\rho_A \in \mathcal{S}(\mathsf{A})$ for each $\alpha\geq1$ (where $\psi_{AR}^\rho$ is a purification of $\rho_A$), and upper-semicontinuous and concave in $\alpha \in [1,2]$ for each $\rho_A \in \mathcal{S}(\mathsf{A})$
    Moreover, under the finite-dimension assumption, the sets $\mathcal{S}(\mathsf{A})$ and $[1,2]$ are both convex and compact, and the convex-concave objective function is always finite and bounded.
    Hence, the assertion of the saddle-point with its saddle-value being the error-exponent function of channel $\mathscr{N}_{A\to B_1 B_2\ldots B_L}$ follows from \cite[Theorem 36.3]{Roc70}.

    Next, we prove Item~\ref{item:limiting_exponnet}.
    Note that the error-exponent function is given by the sandwiched R\'enyi information. We will slightly weaken it via the Petz's R\'enyi information \cite{Pet86, CHT19, CGH18}.
    It will still lead to our goal since both the two R\'enyi information quantities have the same limiting behavior as $\alpha\to 1$. Such a relaxation will simplify the analysis due to the closed-form expression of Petz's R\'enyi information \cite{SW12, Tom16, CGH18}.
    Moreover, we prove the unipartite case (i.e.~$L=1$) as follows; the multipartite case for general $L\in\mathds{N}$ follows straightforwardly.

For each $n\in\mathds{N}$, by Proposition~\ref{fact:properties_exponent}-\ref{item:minimax_exponnet}, we let $(\rho_n,\alpha_n) \in  \mathcal{S}(\mathsf{A})\times [1,2]$ be
a saddle-point of the function
\begin{align}
    (\rho,\alpha)\mapsto 
    \frac{\alpha-1}{\alpha} \left( r_n - I_{\alpha_n}(B:R)_{\mathscr{N}(\psi_{AR}^\rho)} \right),
\end{align}
such that $\psi_n = \psi_{AR}^\rho$ is a purification of $\rho_{A}$, and 
\begin{align}
E_{r_n}(\mathscr{N}) &= \frac{\alpha_n-1}{\alpha_n} \left( r_n - I_{\alpha_n} (B:R)_{\mathscr{N}(\psi_n)} \right) \\
&\geq \frac{\alpha-1}{\alpha} \left( r_n - I_{\alpha}(B:R)_{\mathscr{N}(\psi_n)} \right), \forall\, \alpha \in [1,2]. \label{eq:moderate_minimax}
\end{align}
To ease the burden of notation, let us denote
\begin{align}
    I &:= I(\mathscr{N}), \\
    I_n &:= I(B:R)_{\mathscr{N}(\psi_n)}, \\
    V_n &:= V(B:R)_{\mathscr{N}(\psi_n)}.
    % , \\
    % \alpha_n &:= 1+\frac{I-I_n-a_n}{V_n}.
\end{align}
Note that since $r_n \to I$, by Proposition~\ref{fact:properties_exponent}-\ref{item:positivity_exponent}, 
we must have
\begin{align}
    \lim_{n\to \infty} \alpha_n  = 1.
\end{align}
On the other hand,
\begin{align}
    I_{\alpha_n}(B:R)_{\mathscr{N}(\psi_n)} \geq  I_{\alpha_n}(B:R)_{\mathscr{N}(\psi)}, \quad \forall\, \psi_{AR}.
\end{align}
This guarantees that $\psi_n$ converges to some (channel-capacity achieving) pure state, say $\psi_\infty$, that satisfies
\begin{align} \label{eq:I_n}
    \lim_{n\to \infty} I_{\alpha_n}(B:R)_{\mathscr{N}(\psi_n)} = I(B:R)_{\mathscr{N}(\psi_\infty)} = I.
\end{align}

We introduce the Petz-type quantities \cite{Pet86}:
\begin{align}
	\bar{D}_\alpha (\rho\,\Vert\, \sigma) &:= \frac{1}{\alpha-1}\log \Tr\left[ \rho^\alpha \sigma^{1-\alpha}\right], \\
	\bar{I}_\alpha(B:R)_{\mathscr{N}(\psi)} &:= \inf_{\sigma_{R}\in\mathcal{S}(\mathsf{R})} \bar{D}_\alpha\left( \mathscr{N}_{A\to B} (\psi_{AR}) \,\Vert\, \mathscr{N}_{A\to B} (\psi_{A}) \otimes \sigma_R \right).
	% \bar{I}_\alpha(\mathscr{N}_{A\to B}) &:= \sup_{\psi_{AR}} \bar{I}_\alpha(B:R)_{\mathscr{N}(\psi)}.
\end{align}
(One can also work with a R\'enyi information without minimizing $\sigma_{R}$ \cite{Tom16, CH17} since they all have the same Taylor's series expansion aroudn $\alpha = 1$.)

Now, we are all set for the proof.
From Eq.~\eqref{eq:moderate_minimax}, we choose a specific
\begin{align}
	\bar{\alpha}_n :=  1 + \frac{I-I_n+a_n}{V_n},
\end{align}
to replace $\alpha_n$, which yields a lower bound to the error-exponent function of channel.
We calculate
\begin{align}
	E_{r_n}(\mathscr{N})
	&= \frac{\alpha_n-1}{\alpha_n} \left( r_n - I_{\alpha_n}(B:R)_{\mathscr{N}(\psi_n)} \right)\\
	&\geq \frac{I-I_n+a_n}{V_n\left( 1 + \frac{I-I_n+a_n}{V_n} \right) }
	\left( I + a_n - I_{\bar{\alpha}_n}(B:R)_{\mathscr{N}(\psi_n)} \right) \\
	&\overset{\textnormal{(a)}}{\geq} \frac{I-I_n+a_n}{V_n\left( 1 + \frac{I-I_n+a_n}{V_n} \right) }
	\left( I + a_n - \bar{I}_{\bar{\alpha}_n}(B:R)_{\mathscr{N}(\psi_n)} \right) \\
	&\overset{\textnormal{(b)}}{=} \frac{I-I_n+a_n}{V_n\left( 1 + \frac{I-I_n+a_n}{V_n} \right) }
	\left( I + a_n - I_n - \frac{I - I_n + a_n}{2} - \frac{(\bar{\alpha}_n-1)^2}{2}\mathfrak{R}_n \right), \\
	&= \frac{(I-I_n+a_n)^2}{2V_n} \frac{1}{1+\frac{I-I_n+a_n}{V_n}} \left( 1 - \frac{\bar{\alpha}_n-1}{V_n}  \mathfrak{R}_n \right)\\
	&\overset{\textnormal{(c)}}{\geq}  \frac{a_n^2}{2V_n} \frac{1}{1+\frac{I-I_n+a_n}{V_n}} \left( 1 - \frac{\bar{\alpha}_n-1}{V_n}  \Upsilon \right),
\label{eq:moderate_1}
% \\
% &\overset{\textnormal{(a)}}{\geq}  \frac{a_n^2}{2V_n} \frac{1}{1+\frac{I-I_n+a_n}{V_n}} \left( 1 - \frac{\bar{\alpha}_n-1}{V_n}  \Upsilon \right), 
% \label{eq:moderate_1}
\end{align}
where (a) is because $D_\alpha \leq \bar{D}_\alpha$ for all $\alpha \geq 1$ \cite{LT76}, \cite[Proposition 3.20]{MO17}; and in (b) we invoke Taylor's series expansion of the map $\alpha \mapsto \bar{I}_{\alpha}(B:R)_{\mathscr{N}(\psi)}$ at $\alpha=1$ \cite{LT15, HT14, CH17}:
\begin{align}
	\bar{I}_{\bar{\alpha}_n}(B:R)_{\mathscr{N}(\psi_n)} &= I_n + \frac{\bar{\alpha}_n-1}{2} V_n 
	+ \frac{(\bar{\alpha}_n-1)^2}{2} \mathfrak{R}_n, \\
%	% \mathfrak{R}_n(\alpha_n -1),
%\end{align}
%for some remainder function
%\begin{align}
	\mathfrak{R}_n &:= \left.\frac{\partial^2}{ \partial \alpha^2 } \bar{I}_{{\alpha}}(B:R)_{\mathscr{N}(\psi_n)}\right|_{\alpha = \tilde{\alpha}_n \in [1,\bar{\alpha}_n] }.
\end{align}
%and some $\bar{\alpha}_n \in [1,\alpha_n]$.
The last line (c) is due to the fact that $I - I_n\geq 0$ for all $n\in\mathds{N}$, and 
\begin{align} \label{eq:Upsilon}
	\Upsilon := \max_{ (\psi,\alpha) \in \mathcal{S}(\mathsf{A}\otimes \mathsf{R}) \times[1,2]  } \left| \frac{\partial^2}{ \partial \alpha^2 } I_{{\alpha}}(B:R)_{\mathscr{N}(\psi)}\right|.
\end{align}
As already pointed out in \cite{CH17}, such a factor is finite by the finite-dimensional assumption of the underlying Hilbert space $\mathsf{A}$, $\mathsf{B}$, and $\mathsf{R}$, the uniform continuity of the quantity in \eqref{eq:Upsilon}, which holds because of the closed-form expression of the Petz's R\'enyi information \cite{SW12, Tom16, CGH18}.

Note that $\bar{\alpha}_n \to 1$ by Eq.~\eqref{eq:I_n}. 
We have
% In the end, we want
\begin{align}
    \liminf_{n\to \infty}\frac{E_{r_n}(\mathscr{N})}{a_n^2} 
    &\geq \liminf_{n\to\infty} \frac{1}{2V(B:R)_{\mathscr{N}(\psi_n)}} \\
    &\geq \frac{1}{2V(\mathscr{N})},
\end{align}
where where the last line follows from the continuity of $\psi \mapsto V(B:R)_{\mathscr{N}(\psi)}$ \cite{TV15, CHT17, CH17}, the fact that $\psi_n \to \psi_\infty$ and Eq.~\eqref{eq:I_n}, and the definition of $V(\mathscr{N})$ given in Eq.~\eqref{eq:V}. 
We conclude the proof
% This yields the desired Eq.~\eqref{eq:moderate_goal_uni0}.
\end{proof}

\section{The Post-Selection Technique} \label{sec:Post-Selection}

% {\color{red} Copied from Mario's paper. Need to modify and shorten.}

% We use a norm on the set of CPTP maps which essentially measures the probability by which two such mappings can be distinguished. The norm is known as diamond norm in quantum information theory~\cite{Kitaev97}. Here, we present it in a formulation which highlights that it is dual to the well-known completely bounded (cb) norm~\cite{Paulsen}. 

% \begin{definition}[Diamond norm]
% 	Let $\cE_{A}:\cL(\cH_{A})\mapsto\cL(\cH_{B})$ be a linear map. The \emph{diamond norm} of $\cE_{A}$ is defined as
% 	\begin{align}
% 		\|\cE_{A}\|_{\diamond}=\sup_{k\in\mathbb{N}}\|\cE_{A}\otimes\cI_{k}\|_{1}\ ,
% 	\end{align}
% 	where $\|\cF\|_{1}=\sup_{\sigma\in\cS_{\leq}(\cH)}\|\cF(\sigma)\|_{1}$ and $\cI_{k}$ denotes the identity map on states of a $k$-dimensional quantum system.
% 	\label{kitaev}
% \end{definition}

% \begin{proposition}[\hspace{1sp}{\cite{Kitaev97, Paulsen}}]
% 	The supremum in Definition~\ref{kitaev} is reached for $k=|A|$. Furthermore the diamond norm defines a norm on the set of CPTP maps.
% 	\label{feeling}
% \end{proposition}

% \noindent
% Two CPTP maps $\cE$ and $\cF$ are called $\eps$-close if they are $\eps$-close in the metric induced by the diamond norm.

% \noindent
% \begin{definition}[De Finetti states]
% 	Let $\sigma\in\cS_{=}(\cH)$ and $\mu(.)$ be a probability measure on $\cS_{=}(\cH)$. Then
% 	\begin{align}
% 		\zeta^{n}=\int\sigma^{\otimes n}\mu(\sigma)\in\cS_{=}(\cH^{\otimes n})
% 	\end{align}
% 	is called \emph{de Finetti state}.
% 	\label{definetti}
% \end{definition}

% \noindent
The following facts are the well-known \textit{Post-Selection Technique} \cite{ChristKoenRennerPostSelect} also known as \textit{de Finetti reductions} or {\it universal state} \cite{Hay09}, and the formulation below is taken from \cite[Appendix D]{BCR11}.

\begin{fact}[\hspace{1sp}{\cite{ChristKoenRennerPostSelect}}]
	Let $\eps>0$ and $\cE^{n}_{A}$ and $\cF^{n}_{A}$ be completely positive and trace-preserving maps from $\mathcal{B}(\mathsf{A}^{\otimes n})$ to $\mathcal{B}(\mathsf{B})$. If there exists a CPTP map $K_{\pi}$ for any permutation $\pi$ such that $(\cE^{n}_{A}-\cF^{n}_{A})\circ\pi=K_{\pi}\circ(\cE^{n}_{A}-\cF^{n}_{A})$, then $\cE^{n}_{A}$ and $\cF^{n}_{A}$ are $\eps$-close in diamond norm whenever
	\begin{align}
		\left\|((\cE^{n}_{A}-\cF^{n}_{A})\otimes\cI_{RR'})(\zeta^{n}_{ARR'})\right\|_{1}\leq\eps(n+1)^{-(|\mathsf{A}|^{2}-1)}\ ,
	\end{align}
	where $\zeta^{n}_{ARR'}$ is a purification of the de Finetti state $\zeta_{AR}^{n}=\int\psi_{AR}^{\otimes n}\, \mathrm{d}(\psi_{AR})$ with $\psi_{AR}$ is a pure state on $\mathsf{A}\otimes \mathsf{R}$ with $\mathsf{A}\cong\mathsf{R}$, and $\mathrm{d}(\cdot)$ is the measure on the normalized pure states on $\mathsf{A}\otimes \mathsf{R}$ induced by the Haar measure on the unitary group acting on $\mathsf{A}\otimes \mathsf{R}$, normalized to $\int \mathrm{d}(\cdot)=1$. Furthermore, we can assume without loss of generality that $|\mathsf{R}'|\leq(n+1)^{|\mathsf{A}|^{2}-1}$.
	\label{posti}
\end{fact}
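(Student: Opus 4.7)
The plan is to reduce the worst-case input in the diamond norm to the single universal input $\zeta^n_{ARR'}$, paying only a polynomial factor in $n$. First I would write out the diamond norm as $\|\cE^n_A-\cF^n_A\|_\diamond = \sup_{\rho_{A^n R''}}\|((\cE^n-\cF^n)\otimes\cI_{R''})(\rho)\|_1$, where the supremum is over pure inputs with $R''\cong A^n$. The covariance hypothesis $(\cE^n-\cF^n)\circ\pi = K_\pi\circ(\cE^n-\cF^n)$ together with the trace-norm contractivity of each CPTP map $K_\pi$ lets me symmetrize the input over $S_n$ (with a simultaneous relabelling on $R''$) without decreasing the trace norm. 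I may therefore restrict attention to inputs that are permutation-symmetric on the $n$ copies of $A$.

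The heart of the argument is then the de Finetti domination inequality: for every permutation-symmetric state $\sigma_{A^n R}$ with $\mathsf{R}\cong\mathsf{A}$,
\begin{align}
\sigma_{A^n R}\ \leq\ (n+1)^{|\mathsf{A}|^2-1}\,\zeta^n_{AR}.
\end{align}
Via the purification/twirling trick, mixing $\psi_{AR}^{\otimes n}$ over pure bipartite states on $\mathsf{A}\otimes\mathsf{R}$ is equivalent (after absorbing a canonical purification into $R'$) to mixing $\rho_A^{\otimes n}$ over mixed states $\rho_A$, and the latter dominates the symmetric subspace of $\mathsf{A}^{\otimes n}$ by the count of irreducible $U(|\mathsf{A}|)$-modules appearing there, which is $\binom{n+|\mathsf{A}|^2-1}{|\mathsf{A}|^2-1} \leq (n+1)^{|\mathsf{A}|^2-1}$ by standard Schur--Weyl / Young-diagram bookkeeping. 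Applying the linear map $(\cE^n-\cF^n)\otimes\cI$ to both sides of the inequality, taking trace norms, and using the already-established reduction to symmetric inputs yields the stated diamond-norm bound.

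Finally, the dimension bound $|\mathsf{R}'|\leq (n+1)^{|\mathsf{A}|^2-1}$ falls out of the same representation-theoretic counting: $\zeta^n_{AR}$ is $U\otimes\bar U$-invariant on $\mathsf{A}\otimes\mathsf{R}$, so its rank is controlled by the number of Schur--Weyl sectors, and a minimal Schmidt purification can be housed in $\mathsf{R}'$ of precisely that dimension. The only genuinely technical step I anticipate is the polynomial representation count, which rests on the classical dimension formula for $U(d)$-irreps indexed by Young diagrams with at most $d$ rows; everything else (the symmetrization, the monotonicity of trace norm under operator-inequalities of inputs for a fixed affine map, and the Haar-measure identification) is routine and will form the bulk of the write-up.
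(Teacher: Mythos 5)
The paper does not prove this statement: it is presented as a Fact cited verbatim from \cite{ChristKoenRennerPostSelect} in the formulation of \cite[Appendix D]{BCR11}, and no proof is given in the text. So there is no in-paper proof to compare against; I will instead evaluate whether your sketch would stand on its own.

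There is a genuine gap in the step where you pass from the de Finetti domination to the trace-norm bound. You write that you ``apply the linear map $(\cE^n-\cF^n)\otimes\cI$ to both sides of the inequality'' and later list among the routine steps ``the monotonicity of trace norm under operator-inequalities of inputs for a fixed affine map.'' That purported monotonicity is false: $\cE^n-\cF^n$ is a \emph{difference} of CPTP maps, hence neither positive nor completely positive, so $\rho\leq c\,\sigma$ does not imply $\|(\cE^n-\cF^n)(\rho)\|_1\leq c\,\|(\cE^n-\cF^n)(\sigma)\|_1$. (A toy counterexample: with $\Delta(X)=X-\Tr[X]\,\tfrac{\mathbb{1}}{d}$ on a qubit, $\rho=|0\rangle\!\langle 0|\leq\mathbb{1}=\sigma$ yet $\|\Delta(\rho)\|_1=1>0=\|\Delta(\sigma)\|_1$.) The correct argument in \cite{ChristKoenRennerPostSelect,BCR11} instead goes through the purifying system: from the operator domination $\bar\phi_{A^n}\leq(n+1)^{|\mathsf{A}|^2-1}\,\zeta^n_A$ one deduces, by an Uhlmann-type argument, the existence of a contraction $C$ acting only on the reference that satisfies $|\bar\phi\rangle=\sqrt{(n+1)^{|\mathsf{A}|^2-1}}\,(\mathbb{1}_{A^n}\otimes C)\,|\zeta^n\rangle_{ARR'}$; the trace-norm bound then follows from $\|C X C^\dagger\|_1\leq\|C\|_\infty^2\,\|X\|_1\leq\|X\|_1$, which crucially is a contraction of the \emph{reference}, not an application of $\cE^n-\cF^n$ to an operator inequality.

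A related imprecision: you state the domination as $\sigma_{A^nR}\leq(n+1)^{|\mathsf{A}|^2-1}\zeta^n_{AR}$. The exponent $|\mathsf{A}|^2-1$ is the one for permutation-invariant operators on $\mathsf{A}^{\otimes n}$ alone; if you literally meant permutation-invariant states on $(\mathsf{A}\otimes\mathsf{R})^{\otimes n}$ the correct exponent would be $|\mathsf{A}|^4-1$. The standard argument deliberately dominates only the reduction on $A^{\otimes n}$ and recovers the reference via the Uhlmann contraction, which is precisely what your sketch is missing. The symmetrization step and the rank-based dimension bound on $\mathsf{R}'$ are fine in spirit, but you should replace the ``apply the map to the operator inequality'' reasoning with the purification/contraction argument before the proof goes through.
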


% \begin{fact}[\hspace{1sp}{\cite[Carath\'{e}odory]{Gruber93}}]
% 	Let $d\in\mathbb{N}$ and $x$ be a point that lies in the convex hull of a set $P$ of points in $\mathbb{R}^{d}$. Then there exists a subset $P'$ of $P$ consisting of
% 	$d+1$ or fewer point such that $x$ lies in the convex hull of $P'$.
% 	\label{cara}
% \end{fact}
By Carath\'eodory's theorem for convex hulls, one also has
\begin{fact}[\hspace{1sp}{\cite{Gruber93}}]
	Let $\zeta_{AR}^{n}=\int\psi_{AR}^{\otimes n}\, \mathrm{d}(\psi_{AR})$ as in Fact~\ref{posti}. Then, we can write $\zeta_{AR}^{n}=\sum_{i}p_{i}\left(\psi^{i}_{AR}\right)^{\otimes n}$ with $\omega^{i}_{AR}$ being a pure state on $\mathsf{A}\otimes \mathsf{R}$, $i\in\{1,2,\ldots,(n+1)^{2|\mathsf{A}||\mathsf{R}|-2}\}$, and $\{p_{i}\}_i$ a probability distribution.
	\label{mario}
\end{fact}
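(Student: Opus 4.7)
The plan is to interpret $\zeta^n_{AR}$ as a specific element of the convex hull of tensor-power pure states and then reduce the statement to a finite-dimensional application of Carath\'eodory's theorem; the only genuine content is a dimension count on the symmetric subspace of $(\mathsf{A}\otimes\mathsf{R})^{\otimes n}$.

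First I would observe that every integrand $|\psi\rangle\langle\psi|_{AR}^{\otimes n}$ is supported on the symmetric subspace $\Sym^n(\mathsf{A}\otimes\mathsf{R}) \subseteq (\mathsf{A}\otimes\mathsf{R})^{\otimes n}$, and consequently so is their Haar-average $\zeta^n_{AR}$. The complex dimension of this subspace is
\[ D_n := \binom{n+|\mathsf{A}||\mathsf{R}|-1}{n} \le (n+1)^{|\mathsf{A}||\mathsf{R}|-1}, \]
so $\zeta^n_{AR}$ lies in the real affine space $\mathcal{T}$ of trace-one Hermitian operators on $\Sym^n(\mathsf{A}\otimes\mathsf{R})$, whose real dimension equals $D_n^2-1$.

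Next I would note that $\zeta^n_{AR}$ lies in the convex hull of the compact set $\mathcal{P}:=\{|\psi\rangle\langle\psi|_{AR}^{\otimes n} : |\psi\rangle\in\mathsf{A}\otimes\mathsf{R},\ \||\psi\rangle\|=1\} \subseteq \mathcal{T}$, because it is an integral of elements of $\mathcal{P}$ against a probability measure and $\mathrm{conv}(\mathcal{P})$ is already closed in this finite-dimensional ambient space. Carath\'eodory's theorem then expresses any point in that convex hull as a convex combination of at most $\dim_{\mathbb{R}}\mathcal{T}+1 = D_n^2$ elements of $\mathcal{P}$, producing
\[ \zeta^n_{AR} = \sum_{i=1}^{D_n^2} p_i\,(\psi^i_{AR})^{\otimes n}, \]
with pure $\psi^i_{AR}$ and a probability distribution $\{p_i\}$; the bound $D_n^2 \le (n+1)^{2|\mathsf{A}||\mathsf{R}|-2}$ then matches the stated index range.

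The one step that deserves care is the passage from the de Finetti integral to a membership statement in $\mathrm{conv}(\mathcal{P})$ rather than in its closure: this is immediate once one observes that $\mathcal{P}$ is the continuous image of the unit sphere of $\mathsf{A}\otimes\mathsf{R}$ under $|\psi\rangle\mapsto |\psi\rangle\langle\psi|^{\otimes n}$ and is therefore compact, so by the classical Carath\'eodory/Steinitz corollary $\mathrm{conv}(\mathcal{P})$ is automatically closed. Everything else is essentially the dimension count above, which is why the result reduces to the finite-dimensional Carath\'eodory statement as recorded in \cite{Gruber93}.
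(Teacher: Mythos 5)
Your proof is correct and follows exactly the route the paper gestures at but does not spell out: restrict attention to the symmetric subspace $\Sym^n(\mathsf{A}\otimes\mathsf{R})$, count its dimension, and invoke Carath\'eodory's theorem on the resulting finite-dimensional affine space of trace-one Hermitian operators. The dimension bound $D_n=\binom{n+|\mathsf{A}||\mathsf{R}|-1}{n}\le(n+1)^{|\mathsf{A}||\mathsf{R}|-1}$, the compactness of $\mathcal{P}$ guaranteeing that $\mathrm{conv}(\mathcal{P})$ is closed (so the Haar barycenter genuinely lies in the hull, not merely its closure), and the final count $D_n^2\le(n+1)^{2|\mathsf{A}||\mathsf{R}|-2}$ are precisely the details the paper delegates to the citation of \cite{Gruber93}.
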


% \begin{proof}
% 	We can think of $\zeta_{AR}^{n}$ as a normalized state on the symmetric subspace $\mathrm{Sym}^{n}(\cH_{AR})\subset\cH_{AR}^{\otimes n}$. The dimension of $\mathrm{Sym}^{n}(\cH_{AR})$ is bounded by $k=(n+1)^{|A||R|-1}$. Furthermore the normalized states on $\mathrm{Sym}^{n}(\cH_{AR})$ can be seen as living in an $m$-dimensional real vector space where $m=k-1+2\cdot\frac{k(k-1)}{2}=k^{2}-1$. Now define $S$ as the set of all $\xi_{AR}^{n}=\omega_{AR}^{\otimes n}$, where $\omega_{AR}=\proj{\omega}_{AR}\in\cS_{=}(\cH_{AR})$. Then $\zeta^{n}_{AR}$ lies in the convex hull of the set $S\subset\mathbb{R}^{k^{2}-1}$. Using Carath\'eodory's theorem (Theorem~\ref{cara}), we have that $\zeta_{AR}^{n}$ lies in the convex hull of a set $S'\subset S$ where $S'$ consists of at most $p=k^{2}-1+1=k^{2}$ points. Hence we can write $\zeta_{AR}^{n}$ as a convex combination of $p=(n+1)^{2|A||R|-2}$ extremal points in $S'$, i.e.~$\zeta_{AR}^{n}=\sum_{i}p_{i}(\omega^{i}_{AR})^{\otimes n}$, where $\omega^{i}_{AR}=\proj{\omega^{i}}_{AR}\in\cS_{=}(\cH_{AR})$, $i\in\{1,2,\ldots,(n+1)^{2|A||R|-2}\}$ and $p_{i}$ a probability distribution.
% \end{proof}

%%%%%%%%%%%%%%%%%%%%%%%%%%%%%%%%

{\larger
\bibliographystyle{myIEEEtran}
\bibliography{reference.bib, reference2}
}

\end{document}